\newif\ifec
\newif\ifarxiv
\newif\ifcolor
\newif\ifhide
\begin{document}
\title{Credible, Truthful, and Two-Round (Optimal) Auctions via Cryptographic Commitments}

\author{Matheus V. X. Ferreira}
\email{mvxf@cs.princeton.edu}
\affiliation{%
  \institution{Princeton University}
  \city{Princeton}
  \state{New Jersey}
}
\author{S. Matthew Weinberg}
\email{smweinberg@princeton.edu}
\affiliation{%
  \institution{Princeton University}
  \city{Princeton}
  \state{New Jersey}
}

\ifec
\begin{CCSXML}
<ccs2012>
   <concept>
       <concept_id>10003752.10010070.10010099.10010101</concept_id>
       <concept_desc>Theory of computation~Algorithmic mechanism design</concept_desc>
       <concept_significance>500</concept_significance>
       </concept>
   <concept>
       <concept_id>10010405.10003550.10003596</concept_id>
       <concept_desc>Applied computing~Online auctions</concept_desc>
       <concept_significance>500</concept_significance>
       </concept>
   <concept>
       <concept_id>10002978.10002979</concept_id>
       <concept_desc>Security and privacy~Cryptography</concept_desc>
       <concept_significance>500</concept_significance>
       </concept>
 </ccs2012>
\end{CCSXML}

\ccsdesc[500]{Theory of computation~Algorithmic mechanism design}
\ccsdesc[500]{Applied computing~Online auctions}
\ccsdesc[500]{Security and privacy~Cryptography}
\fi

\begin{abstract}
We consider the sale of a single item to multiple buyers by a revenue-maximizing seller. Recent work of Akbarpour and Li formalizes \emph{credibility} as an auction desideratum, and prove that the only optimal, credible, strategyproof auction is the ascending price auction with reserves~\cite{akbarpour2020credible}.

In contrast, when buyers' valuations are MHR, we show that the mild additional assumption of a cryptographically secure commitment scheme suffices for a simple \emph{two-round} auction which is optimal, strategyproof, and credible (even when the number of bidders is only known by the auctioneer).

We extend our analysis to the case when buyer valuations are $\alpha$-strongly regular for any $\alpha > 0$, up to arbitrary $\varepsilon$ in credibility. Interestingly, we also prove that this construction cannot be extended to regular distributions, nor can the $\varepsilon$ be removed with multiple bidders.

\end{abstract}
\keywords{Credible Mechanisms; Cryptographic Auctions; Optimal Auction Design; Mechanism Design and Approximation; Mechanism Design with Imperfect Commitment.}
\maketitle

\section{Introduction}\label{sec:introduction}
We consider a revenue-maximizing auctioneer with a single item to sell to multiple bidders. Starting from Myerson's seminal work, it is traditionally assumed that the seller can commit to an auction format but that buyers must be incentivized to report their true values. Several recent works have moved beyond this assumption in repeated auctions, for example, where sellers can commit to a particular auction \emph{today}, but not to their behavior \emph{tomorrow} (e.g., ~\cite{devanur2014perfect, immorlica2017repeated, liu2019auctions}). Even more recent work of Akbarpour and Li proposes a framework also for one-shot auctions~\cite{akbarpour2020credible}. Our paper fits within this later framework.

Specifically, each buyer $i$ has value $v_i$ for the item, which is drawn independently from a distribution $D_i$, and the seller knows these distributions but not the precise values. As in~\cite{myerson1981optimal}, we seek auctions which are incentive compatible, and optimal among all incentive compatible auctions.~\cite{akbarpour2020credible} introduces a new desideratum, \emph{credibility}. Informally, an auction is credible if the auctioneer themselves is incentivized to execute the auction in earnest, even when permitted to cheat in ways that are undetectable to the bidders (see Section~\ref{sec:preliminaries} for a formal definition in single-item auctions).


Akbarpour and Li prove a comprehensive trilemma for single-item auctions: Myerson's auction is the unique truthful, one-round, revenue-maximizing auction, but it is not credible. Moreover, the ascending-price auction {(with reserves)} is the unique truthful, revenue-maximizing, credible auction, but it requires an unbounded number of rounds (that is, the number of rounds until termination cannot be bounded by any function only of the number of bidders. Even with two bidders, the number of rounds required is a function of the distributions). Finally, the first-price auction (with reserves) is the unique revenue-maximizing, credible, one-round auction, but it is not truthful. 

Classical auction theory might take truthfulness as a first-order concern, and view the tradeoff between bounded-round and credibility as a second-order concern. But as more and more auctions are run online, credibility is not just a ``bonus feature'', but a serious consideration. Specifically, reserve price-setting in ad auctions is often opaque, and a desire for transparency in execution has led major ad exchanges to switch from truthful second-price auctions to non-truthful (but credible) first-price auctions~\cite{klemperer2002really,sluis2019}. At the same time, these auctions are executed in milliseconds and must conclude before a search browser loads, so bounding the number of rounds is now a first-order concern as well. 

In this context, the trilemma of~\cite{akbarpour2020credible} may feel like a negative result: it is impossible to achieve all three first-order desiderata at once. Our main result circumvents their trilemma and provides a truthful, revenue-maximizing, credible, two-round auction, \emph{under the assumption of basic cryptographic primitives}. That is, viewed through the framework of~\cite{akbarpour2020credible} verbatim, our auctions are not credible (see Section~\ref{sec:dra} for an example). But, provably, no auctioneer can find a profitable deviation without breaking standard cryptographic assumptions. 

Interestingly, our construction is not a magic bullet with a trivial proof --- we must still carefully reason about the incentives of the auctioneer within our framework. Informally, our main results are (all under the assumption of a cryptographically secure commitment scheme, see Section~\ref{sec:preliminaries} for formal assumption, and also for formal definitions of distribution classes):
\begin{itemize}
\item When all $D_i$ are MHR, there is a truthful, revenue-maximizing, credible, two-round auction (Theorem~\ref{thm:mhr}).
\item When all $D_i$ are $\alpha$-strongly regular for any $\alpha \in (0,1)$, there is a truthful, revenue-maximizing, $\varepsilon$-credible, two-round auction (Theorem~\ref{thm:alphamhr}).
\item When there is a single bidder whose distribution is $\alpha$-strongly regular for any $\alpha \in (0,1)$, there is a truthful, revenue-maximizing, credible, two-round auction (Proposition~\ref{prop:alphaone}).
\item This auction is \emph{not} necessarily credible when there is a single buyer from a regular distribution, so extensions to regular distributions are not possible (Theorem~\ref{thm:regular}).
\item For any $\alpha \in (0,1)$, this auction is \emph{not} necessarily credible when all $D_i$ are $\alpha$-strongly regular, so the $\varepsilon$ is necessary in bullet two (Theorem~\ref{thm:alphamhrneg}).
\end{itemize}

\subsection{Brief Technical Overview}
Our auctions are still fairly simple and require only the basic cryptographic primitive of \emph{commitment schemes}. Informally, a commitment scheme allows a sender to send a \emph{commitment} $c_i$ to a bid $b_i$, such that any user who sees only $c_i$ learns absolutely nothing about $b_i$. Moreover, the sender can later \emph{reveal} $b_i$, in a way that proves they committed to $b_i$ in the first place (assuming the sender is computationally bounded). So our skeleton is simply to (a) ask each bidder to commit to their bid, (b) forward these commitments to all other bidders, (c) ask each bidder to reveal, (d) forward these revealed bids to all other bidders. We formalize this strawman auction in Section~\ref{sec:strawman}.

The outlook for this strawman auction initially looks promising: it is truthful, revenue-optimal, and two-round. Like in the auctions considered in~\cite{akbarpour2020credible}, the primary way in which the auctioneer can deviate is by submitting fake bids. It is not too hard to argue that \emph{if the auctioneer must reveal all committed bids}, then there is no way the auctioneer can deviate from being honest in a way that is both undetectable and profitable. However, the auction must have a well-defined execution even if some bids are \emph{concealed} (that is, the committed bid is never revealed). Should the auction simply stall? If so, it is undoubtedly in the auctioneer's interest to reveal all bids. Still, this auction is extremely not robust to latency, or an adversarial attack (simply commit a bid and disappear). Perhaps the auction should reboot? This also seems undesirable, as now the auctioneer has learned some private information. 

A natural suggestion (implemented in the strawman auction) is instead to replace all missing bids with $0$. This change, however, now gives the auctioneer a new class of potential deviations: they can commit to many different fake bids and reveal them selectively based on the true bids. It is not hard to see that this auction is not credible. 

We propose a straightforward modification, which is to fine any bidder who commits but does not reveal, \emph{and pay this fine to the winning bidder}. Now, the auctioneer faces a tradeoff: they can still commit to as many fake bids as they like, and they can still selectively reveal them. But for every bid they choose to conceal, they pay a fine. The entire technical portion of this paper is understanding when a sufficiently large fine exists to disincentivize the auctioneer from cheating in this particular way, and how large this fine must be. The bullet points above summarize our findings: such fines exist when all distributions are MHR, and (almost) exist when all distributions are $\alpha$-strongly regular, but do not necessarily exist even with one buyer from a regular distribution. 

\subsection{Related Work}
We have already overviewed the most related work above: we work in the model proposed by~\cite{akbarpour2020credible}, additionally with cryptographic primitives. There is a substantial literature generally on secure multi-party computation since Yao's millionaire problem~\cite{yao1982protocols} (see chapter 7 of \cite{goldreich2009foundations} for a survey on the topic), most of which is unrelated to our paper.

The easiest distinction between (most of) these works and ours is that they are not \emph{Sybil-proof}. Specifically, there is some trusted setup where every participant has an identity. Results such as~\cite{nurmi1993cryptographic} replace commitments with strong public-key infrastructure in our strawman proposal. Specifically, such protocols assume that a \emph{majority of participants} are honestly following the protocol. In online auctions, there is no hope of preventing the auctioneer from creating thousands of fake bidders if it will (undetectably) increase their revenue (so while a majority of ``real participants'' may be honest, the ``digital participants'' are nearly-unanimously \emph{not} following the protocol). A second distinction is that these protocols are often extremely complex, and certainly do not terminate in two rounds. Indeed, a central challenge for modern research in multi-party computation is developing practically reasonable protocols.

Our work is not the first to propose the use of fines to disincentivize participants from aborting a protocol~\cite{bradford2008protocol,bentov2014use}, as there are known impossibility results (without monetary incentives) when participants can abort~\cite{cleve1986limits}.

\subsection{Roadmap}
Section~\ref{sec:preliminaries} formalizes our problem of study, including cryptographic primitives. Section~\ref{sec:strawman} analyzes the strawman auction as a warmup. Section~\ref{sec:dra} proposes our auction, proves some basic facts, and states our main results. Sections~\ref{sec:regular} through~\ref{sec:strong-regularity} prove our main results. All technical sections present intuition, along with proofs (although some technical lemmas are deferred to the appendix). 

\section{Preliminaries}\label{sec:preliminaries}
We first overview formalities with regards to auctions. Our model and definitions are identical to~\cite{akbarpour2020credible}, but repeated for clarity and completeness.

\subsection{Auctions}
There is a single seller with a single indivisible item and $n$ bidders. Each bidder $i$ has a private value $v_i$ for the item, which is drawn from a distribution $D_i$. We let $D:= \times_i D_i$, and use $\rev(D)$ to denote the expected revenue of the optimal auction when buyers are drawn from $D$. To be concrete: each bidder $i$ knows only their value $v_i$ and type distribution $D_i$, but not $n$ or $D_{-i}$. The seller knows $n$ and $D$.\footnote{It is worth briefly noting that the Ascending Price Auction (and our auctions) are credible, even when the auctioneer can misrepresent $n$ and $D_{-i}$. On the other hand, the Second Price Auction is \emph{not} credible, even when all bidders know $n$ and $D_{-i}$.}

\vspace{1mm}\noindent\textbf{Communication and rounds.} The seller communicates with each bidder using a private channel (and this is the only communication --- the bidders do not communicate with each other). In every round, the following occurs: (a) each bidder chooses a message to send to the auctioneer, (b) the auctioneer processes all received messages, (c) the auctioneer chooses a (personalized) message to send to each bidder. At any point, the auctioneer may terminate and select a winner of the item (potentially no one), and charge prices. Importantly, each bidder communicates only with the auctioneer, and learns only whether or not they win the item and how much they pay upon termination (if they lose, they do not learn who wins, nor how much the winner pays). We also assume there is a default message $\bot$, which is sent if the bidder stays silent during a round.

\vspace{1mm}\noindent\textbf{Game.} Observe that the communication model induces an extended form game among the bidders and the auctioneer. Like~\cite{akbarpour2020credible}, we are interested in the case where the auctioneer commits publicly to a strategy which terminates in finite rounds with probability $1$. This induces an extended form game among the bidders. We'll refer to this game as the auction, and repeat the following definitions:

\begin{definition}[Strategyproof/Ex-Post Nash/Individually Rational] An auction is \emph{strategyproof} if {for all $n \in \mathbb N_{+}$, and for all $i \in [n]$}, there exists a mapping $s_i(\cdot)$ from values to strategies, and additionally for all $i$ and all $\vec{v}$, $s_i(v_i)$ is a best response of bidder $i$ to $\vec{s}_{-i}(\vec{v}_{-i})$. That is, for all $\vec{v}$, if buyers' valuations are $\vec{v}$, then $\langle s_1(v_1),\ldots, s_n(v_n)\rangle$ forms an ex-post Nash.\footnote{In other words, an auction need not be direct-revelation in order to be strategyproof, but there must exist a strategy ($s_i(v_i)$) which bidder $i$ can use which is akin to ``reporting $v_i$.''}

In this paper, we will only consider auctions for which there is a unique $s_1(\cdot),\ldots, s_n(\cdot)$ that always form an ex-post Nash, and refer to these strategies as ``telling the truth.''

An auction is \emph{individually rational} if telling the truth guarantees non-negative expected utility, ex-post.
\end{definition}

\begin{definition}[Safe Deviation] A \emph{safe deviation} for the auctioneer in the communication game is a strategy which does not necessarily implement the promised auction, but for every bidder $i$, their personal communication with the auctioneer and resulting allocation/price is consistent with some $(\vec{s}_{-i})^i := \langle s_1^i, s_2^i, ..., s_{i-1}^i, s_{i+1}^i, ..., s_{n_i}^i\rangle$ where bidder $i$ believes there are $n_i$ total bidders, and each $s_{j}^i$ is a valid strategy for the auction. Importantly, bidders can have inconsistent views of what is the communication game, depending on their interaction with the auctioneer.
\end{definition}

Observe quickly that if an auction is strategyproof, then the strategy of bidder $i$ is independent of $n_i$, the number of bidder they believe to be in the auction.

\begin{definition}[Credible] An auction is \emph{credible} if, in expectation over $\vec{v} \leftarrow D$, and conditioned on buyers being {\emph{truthful}}, executing the auction in earnest maximizes expected revenue over all safe deviations.
\end{definition}

\vspace{1mm}

\begin{example}[Second-Price Auction]~\cite{akbarpour2020credible} establishes that the second-price auction is not credible. Consider when $v_1 = 5$ and $v_2 = 10$. An earnest execution of the second-price auction would give the item to bidder $2$ and charge $5$. However, the auctioneer could instead give bidder $2$ the item and charge $9$ --- this is a safe deviation because it is consistent with buyer $1$ bidding $9$.
\end{example}

\subsection{Computational Assumptions and Basic Cryptography}
The {main} difference between our model and that of~\cite{akbarpour2020credible} is that we consider computationally-bounded participants and the existence of basic cryptographic primitives.

\vspace{1mm}\noindent\textbf{Commitment Scheme.} A commitment scheme is a function $\commit(\cdot,\cdot)$ which takes as input a message $m$, a one-time pad $r$, and outputs a commitment $c$. Informally, a scheme is computationally binding if a computationally-bounded seller cannot find an $m \neq m'$ and $r,r'$ such that $\commit(m,r) = \commit(m',r')$. A scheme is perfectly hiding if the distribution of commitments produced on message $m$ when $r$ is uniformly random \emph{is independent of $m$} (and therefore, even a computationally unbounded receiver learns nothing about $m$). 

\begin{assumption}\label{ass:commitment} There exists a cryptographic commitment scheme satisfying:
\begin{itemize}
\item (Efficiency) The function $\commit(\cdot,\cdot)$ can be implemented in time $\poly(|m|,|r|)$.
\item (Computationally Binding) For any algorithm $A$ which takes as input a length $k$, terminates in expected time $\poly(k)$, and outputs $m, r, m', r'$ with $|m|,|m'|,|r|,|r'| \leq \poly(k)$, $A$ breaks commitment w.p.~$\leq 2^{-\Omega(k)}$. Formally: $\Pr[\commit(m,r) = \commit(m',r'),\text{ and } m \neq m'] \leq 2^{-\Omega(k)}$.
\item (Perfectly Hiding) The distributions of $\commit(m,r)$ and $\commit(m',r')$, when $r$ and $r'$ are uniformly random, are identical distributions for all $m$ and $m'$. 
\item (Non-malleable) See~\cite{dolev2003nonmalleable,fischlin2000efficient} for a formal definition (as formal definitions are quite involved). Informally, imagine the following scenario: First, Alice sends $c:=\commit(m,r)$ to Bob. Then, Bob sends $c':=\commit(f(m), r')$ to Charlie \emph{without knowing $m,r$}. Then, Alice reveals $m,r$ to Bob. Then, Bob reveals $f(m), r'$ to Charlie. In this example, Bob has modified Alice's commitment to $m$ to instead be a commitment to $f(m)$. While Bob does not necessarily know $m$ when creating $c'$, he does know that $c'$ is a commitment to $f(m)$ (whatever $m$ happens to be), and he does know how to reveal $f(m),r'$ once he learns $m,r$. If Bob can perform this style ``man-in-the-middle'' attack (in poly-time) for a non-identity function $f(\cdot)$, then $\commit(\cdot,\cdot)$ is malleable. Informally, $\commit(\cdot,\cdot)$ is non-malleable if the above scenario is feasible only when $f(\cdot)$ is the identity function (but this is not a formal definition, see~\cite{dolev2003nonmalleable, fischlin2000efficient}).
\end{itemize}
\end{assumption}

There are indeed commitment schemes which are believed to satisfy Assumption~\ref{ass:commitment}, such as the Pedersen scheme with digital signatures.\footnote{Briefly, the Pedersen scheme requires a group of prime order $p$ under which the discrete logarithm is (believed to be) hard, with generator $g$. Every potential receiver of a message raises $g$ to a random power to get another generator $h$, and publicly announces $h$. Then $\commit(m,r):=g^m\cdot h^r$. Observe that for all $c$ and all $m$, there exists a unique $r$ such that $g^m \cdot h^r = c$ (so the scheme is perfectly hiding). But if a sender were able to break their commitment, this would explicitly learn $\log_g(h)$, so it is also computationally binding. As stated, the scheme is malleable: an adversary could see $g^m h^r$ and multiply it by $g^2$ to now get $g^{m+2}h^r = \commit(m+2,r)$. The scheme can be made non-malleable by first using any non-malleable digital signature scheme. Note that to use exactly the Pedersen commitment scheme (with digital signatures), every bidder $i$ would need to share their own $h_i$ in order to receive binding commitments (and a public key), which can be done in one additional preprocessing round, and this preprocessing round could be done once and reused across indefinitely-many auctions.} Note that the particular choice of a perfectly hiding (versus computationally hiding) scheme is not crucial for the spirit of our results. However, it does allow significantly cleaner theorem statements. Similarly, our main positive result (Theorem~\ref{thm:mhr}) doesn't require non-malleability, although it does make proofs cleaner (our main extension, Theorem~\ref{thm:alphamhr} necessarily requires non-malleability). Informally, Assumption~\ref{ass:commitment} implies that unless the auctioneer is relying on events which occur with probability at most $2^{-\Omega(k)}$, or is computationally unbounded, the auctioneer cannot perform an \emph{unreasonable deviation}, defined below.

\begin{definition}[Reasonable Deviation] Say that a commitment $c$ is \emph{explicitly tied to} $(m,r)$ if the participant (bidder or auctioneer) who created $c$ explicitly computed $c:=\commit(m,r)$. Note that this implies a commitment $c$ is explicitly tied to at most one $(m, r)$. A \emph{reasonable deviation} for the auctioneer in the communication game is a strategy such that whenever the auctioneer reveals a commitment to $c$, with $c=\commit(m,r)$, $c$ was explicitly tied to $(m,r)$.
\end{definition}

Observe that one kind of unreasonable deviation would violate computational binding: the auctioneer might compute $c:=\commit(m,r)$, but later reveal that $c:=\commit(m',r')$ (unreasonable because $c$ is explicitly tied to $(m,r)$, not $(m',r')$). Another kind would violate non-malleability: the auctioneer might receive commitments $c_1:=\commit(m_1,r_1), c_2:=\commit(m_2,r_2)$ and send $c_3:=\commit(\max\{m_1,m_2\},r_1+r_2)$ without knowing $m_1,m_2$ (unreasonable because $c_3$ is not explicitly tied to anything).

\begin{definition}[Computationally Credible] An auction is \emph{computationally credible} if, in expectation over $\vec{v} \leftarrow D$, and buyers being truthful, the auctioneer maximizes their expected revenue, over all deviations which are both safe and reasonable, by executing the auction in earnest.

An auction is computationally $\varepsilon$-credible if executing the auction in earnest yields a $(1-\varepsilon)$-fraction of the expected revenue of any safe, reasonable deviation.
\end{definition}

Our main results will design auctions that are computationally credible (Theorem~\ref{thm:mhr}). One of our extensions will design an auction which is computationally $\varepsilon$-credible (Theorem~\ref{thm:alphamhr}), and some of our lower bounds rule out $\varepsilon$-credible mechanisms for $\varepsilon$ arbitrarily close to one (Theorem~\ref{thm:regular}).

\subsection{Virtual Values}
For a continuous single-dimensional distribution with CDF $F$ and PDF $f$, the \emph{virtual value} of $x$ is $\varphi^F(x) := x - \frac{1 - F(x)}{f(x)}$. We also use $h^F(x) := \frac{f(x)}{1 - F(x)}$ the \emph{hazard rate} of $F$. We drop the superscript $F$ if it is clear from context, and will use subscripts of $i$ instead of superscripts of $D_i$ (e.g. $h_i(x):=h^{D_i}(x)$). Seminal work of Myerson asserts that the expected revenue of any strategyproof mechanism is its expected virtual welfare.

\begin{theorem}[\cite{myerson1981optimal}]\label{thm:myerson}
Let a strategy proof mechanism award bidder $i$ the item with probability $x_i(\vec{b})$ on bids $\vec{b}$, and charge them $p_i(\vec{b})$. Then:
\begin{equation*}
E_{\vec{v} \leftarrow D}\left[\sum_{i = 1}^n p_i(\vec{v}) \right] = E_{\vec{v} \leftarrow D}\left[\sum_{i = 1}^n x_i(\vec{v})\varphi_i(v_i)\right]
\end{equation*}
\end{theorem}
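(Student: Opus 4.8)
The final statement is Myerson's virtual-welfare (``revenue equivalence'') theorem, so the plan is to reprove it from Myerson's payment characterization. First I would invoke the standard structural consequence of strategyproofness: since the mechanism is strategyproof, the unique truthful strategies $s_i(\cdot)$ induce a direct mechanism in which, for every bidder $i$ and every fixed profile $\vec{b}_{-i}$ of the others' reports, the allocation $x_i(\cdot,\vec{b}_{-i})$ is monotone non-decreasing and the payment obeys the identity
\[
p_i(b_i,\vec{b}_{-i}) = b_i\,x_i(b_i,\vec{b}_{-i}) - \int_0^{b_i} x_i(t,\vec{b}_{-i})\,dt,
\]
under the normalization $p_i(0,\vec{b}_{-i}) = 0$ (which follows from individual rationality plus the fact that a value-$0$ bidder gains nothing from winning; in general it is the convention that pins down the otherwise-free additive constant). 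Establishing this identity and the monotonicity of $x_i$ is really the substance of the argument — it is Myerson's lemma, proved by applying the envelope theorem (or a two-sided ``deviate-up / deviate-down'' inequality) to the ex-post incentive constraints — and I expect this to be the main obstacle; the rest is bookkeeping.

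Second, I fix $i$ and $\vec{b}_{-i}$, set $b_i = v_i$ (truthful play), and take the expectation over $v_i \leftarrow D_i$:
\[
E_{v_i}[p_i(v_i,\vec{b}_{-i})] = \int_0^\infty f_i(v_i)\,v_i\,x_i(v_i,\vec{b}_{-i})\,dv_i - \int_0^\infty f_i(v_i)\!\int_0^{v_i}\! x_i(t,\vec{b}_{-i})\,dt\,dv_i .
\]
In the double integral I swap the order of integration — valid by Tonelli since $x_i \geq 0$ and $D_i$ is a continuous single-dimensional distribution as assumed — over the region $\{(t,v_i): 0 \le t \le v_i\}$, rewriting it as $\int_0^\infty x_i(t,\vec{b}_{-i})\,(1-F_i(t))\,dt$. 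Collecting terms gives $\int_0^\infty x_i(v_i,\vec{b}_{-i})\big(v_i f_i(v_i) - (1-F_i(v_i))\big)\,dv_i$, and since $v_i f_i(v_i) - (1-F_i(v_i)) = f_i(v_i)\big(v_i - \tfrac{1-F_i(v_i)}{f_i(v_i)}\big) = f_i(v_i)\,\varphi_i(v_i)$, this equals $E_{v_i}[x_i(v_i,\vec{b}_{-i})\,\varphi_i(v_i)]$.

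Finally I take expectations of both sides over $\vec{b}_{-i} = \vec{v}_{-i} \leftarrow D_{-i}$. Because the $D_i$ are independent, the outer expectation promotes $E_{v_i}[\cdot]$ to $E_{\vec{v}}[\cdot]$, yielding $E_{\vec{v}}[p_i(\vec{v})] = E_{\vec{v}}[x_i(\vec{v})\,\varphi_i(v_i)]$ for each $i$; summing over $i \in [n]$ gives the claimed identity. Beyond the proof of Myerson's lemma in step one, the only things to be slightly careful about are the normalization $p_i(0,\cdot) = 0$ and the Tonelli interchange; the virtual-value algebra in steps two and three is routine.
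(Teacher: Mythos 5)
Your proof is correct and follows the canonical route that the paper itself relies on: the statement is cited from Myerson (1981) without proof, and your argument (monotonicity plus the payment identity from strategyproofness, the Tonelli swap converting $\int_0^{v_i} x_i(t,\vec{b}_{-i})\,dt$ into the $1-F_i$ term, then the virtual-value algebra, independence, and summation over $i$) is exactly the standard derivation behind that citation. The one subtlety you rightly flag is the additive constant: the equality as stated needs the normalization that the lowest type earns zero utility (equivalently $p_i(\underline{v}_i,\vec{b}_{-i})=\underline{v}_i\,x_i(\underline{v}_i,\vec{b}_{-i})$, with the integral in the payment identity taken from the bottom of the support rather than from $0$ when the support does not start at $0$), which the paper leaves implicit but which holds for the individually rational, unsubsidized auctions considered there.
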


Finally, we conclude with a definition of classes of distributions which are relevant for our results.

\begin{definition}[Regular, MHR, $\alpha$-Strongly Regular] A distribution $F$ is \emph{$\alpha$-strongly regular} if for all $v' \geq v$, $\varphi^F(v') - \varphi^F(v) \geq \alpha (v'-v)$. A distribution is \emph{regular} if it is $0$-strongly regular, and \emph{monotone hazard rate (MHR)} if it is $1$-strongly regular.
\end{definition}

\section{Strawman Computationally Credible Auctions}\label{sec:strawman}
We propose a simple modification to any direct revelation mechanism, which turns these one-round mechanisms into two-round mechanisms. In round one, the buyer's communication is simply a commitment to a bid. The auctioneer's communication is to forward these commitments to all bidders. In round two, the buyer's communication is to decommit (reveal their bid to the auctioneer). The auctioneer's communication is to forward all (decommitted) bids to the buyers. We use the terminology \emph{reveal} $c_i$ when a message $(m_i,r_i)$ such that $\commit(m_i,r_i) = c_i$ is sent, and \emph{conceal} $c_i$ when some other pair is sent instead.

\begin{definition}[Strawman Auction] Let $\commit(\cdot,\cdot)$ be a commitment scheme satisfying Assumption~\ref{ass:commitment}. For a given direct revelation mechanism, with allocation rule $\vec{x}(\cdot)$ and payment rule $\vec{p}(\cdot)$, $\strawman(\vec{x},\vec{p})$ is the following auction:\\

\noindent$1^{st}$ Round:
\begin{itemize}
    \item Each bidder $i$ picks a bid, $b_i$, draws $r_i$ uniformly at random, and sends $c_i:=\commit(b_i,r_i)$.
    \item The auctioneer sends each commitment to all buyers.
\end{itemize}
$2^{nd}$ Round:
\begin{itemize}
    \item Each bidder $i$ sends $(b_i,r_i)$ to the auctioneer.
    \item The auctioneer forwards each $(b_i,r_i)$ to all buyers.
\end{itemize}
Resolution:
\begin{itemize}
    \item Let $S$ denote the set of bidders for which $c_i=\commit(b_i,r_i)$, and let $b'_i := b_i \cdot I(i \in S)$. Allocate and charge payments according to $\vec{x}(\vec{b}'), \vec{p}(\vec{b}')$.
\end{itemize}
\end{definition}

In particular, observe that the auction's behavior must be well-defined even when not all commitments are revealed. We quickly observe that the Strawman Auction preserves incentive compatibility:

\begin{observation}\label{obs:strawman} Let $(\vec{x},\vec{p})$ be a strategyproof, individually rational, direct revelation mechanism. Then $\strawman(\vec{x},\vec{p})$ is also strategyproof and individually rational. In particular, it is an ex-post Nash for each bidder to set $b_i := v_i$, and to reveal in round two.
\end{observation}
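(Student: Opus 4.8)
The plan is to reduce this observation to the strategyproofness and individual rationality of the underlying direct revelation mechanism $(\vec{x}, \vec{p})$, by showing that the Strawman Auction offers each bidder nothing strictly better than the strategy ``commit to $v_i$ in round one, reveal in round two.'' Fix a bidder $i$ with value $v_i$, and suppose all other bidders $j \neq i$ play $s_j(v_j)$, i.e., commit to $v_j$ and reveal. The first step is to characterize what bidder $i$'s available strategies actually accomplish. Because the commitment scheme is \emph{perfectly hiding}, the commitments $c_j$ that bidder $i$ receives in round one carry zero information about $\vec{v}_{-i}$; hence bidder $i$'s round-one message $c_i$ may as well be generated without reference to anything, and in round two bidder $i$'s choice is effectively: either reveal $c_i$ (contributing $b_i$ to the profile, where $b_i$ is whatever was committed) or conceal it (contributing $0$, i.e., being treated as $i \notin S$). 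The outcome bidder $i$ faces is therefore exactly the outcome of the direct mechanism on a profile $\vec{b}'$ where $b_i' \in \{b_i, 0\}$ for some value $b_i$ that bidder $i$ could have committed to, and $b_{-i}' = v_{-i}$.

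The second step is to invoke strategyproofness of $(\vec{x},\vec{p})$: for the direct mechanism, reporting $v_i$ is a dominant strategy against the truthful reports $v_{-i}$, so no report $b_i' \in [0,\infty)$ (in particular neither an arbitrary committed bid nor the concealment-value $0$) yields bidder $i$ more utility than reporting $v_i$. Thus committing to $v_i$ and revealing weakly dominates every other strategy available to bidder $i$ in the Strawman Auction, which is precisely ex-post Nash / strategyproofness of $\strawman(\vec{x},\vec{p})$. For individual rationality, note that when bidder $i$ plays the prescribed strategy and the others are truthful, the realized profile is $\vec{b}' = \vec{v}$ (every bidder is in $S$), so bidder $i$'s ex-post utility equals their utility in the direct mechanism on the true profile, which is non-negative by the IR of $(\vec{x},\vec{p})$.

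The main subtlety — and the step I'd be most careful about — is the claim that bidder $i$'s deviations in the Strawman Auction are \emph{without loss} captured by ``report some $b_i'\in[0,\infty)$ to the direct mechanism against truthful $v_{-i}$.'' This relies on two things: (i) perfectly hiding ensures the round-one commitments reveal nothing, so bidder $i$ cannot condition round one on $v_{-i}$ in any useful way; and (ii) in round two, although bidder $i$ learns the revealed bids $v_{-i}$ of the others, the only decision left is reveal-or-conceal $c_i$, and both options are equivalent to \emph{some} fixed report ($b_i$ or $0$) to the direct mechanism — crucially this report cannot depend on $v_{-i}$ beyond the binary reveal/conceal choice, and since reporting $v_i$ beats \emph{every} report including both candidates pointwise, the fact that bidder $i$ gets to pick which of $\{b_i,0\}$ after seeing $v_{-i}$ does not help. (One should also note bidder $i$ could in principle reveal a different pair $(m,r)$ with $\commit(m,r)=c_i$, but computational binding makes this infeasible except with negligible probability; for an honest strategyproof analysis this is a non-issue since we only need that \emph{truthful} play is a best response.) Once these points are spelled out, the observation follows; I expect the write-up to be short, with the perfectly-hiding argument doing the real work of decoupling the two rounds.
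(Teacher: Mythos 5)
Your proposal is correct and follows essentially the same route as the paper's proof: reduce bidder $i$'s possible deviations to reporting either the committed bid or $0$ in the underlying direct mechanism against truthful $\vec{v}_{-i}$, then invoke strategyproofness of $(\vec{x},\vec{p})$ for the best-response claim and individual rationality for the IR claim (the paper's two-line proof does exactly this, with the hiding/binding discussion being harmless elaboration). One small factual slip: in the protocol bidder $i$ does \emph{not} see the others' revealed bids before choosing to reveal or conceal (the auctioneer forwards reveals only after receiving all round-two messages), but since you handle the more pessimistic adaptive case via the pointwise argument, this does not affect correctness.
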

\begin{proof}
Because $(\vec{x},\vec{p})$ is individually rational, no bidder can benefit by replacing their bid with $0$ by concealing in round two. Given that bidder $i$ will reveal, and that all other bidders will also reveal,\footnote{Note that this is necessary: it is not a \emph{dominant strategy} to be honest. Bidder $1$ could use a weird strategy ``Commit to $c_1:=(\infty,0)$. If I am sent a commitment of $c=\commit(5,12)$, then conceal. Otherwise, reveal.'' If bidder $1$ uses this strategy (and you are the only other bidder), it is a better response to just send $\commit(5,12)$ and reveal, rather than being honest.} it is best for bidder $i$ to commit to $v_i$ (because $(\vec{x},\vec{p})$ is strategyproof).
\end{proof}

Observation~\ref{obs:strawman} establishes that this modification preserves strategyproofness. One might hope that it also encourages the auctioneer to behave honestly (if $\vec{x},\vec{p}$ is the revenue-optimal auction) because they do not know any of the buyers' bids before round two. So while the auctioneer can create fake bidders and submit fake bids, it seems like these bids may simply act as a reserve. And indeed, \emph{if the auctioneer must reveal all fake bids}, the only reasonable deviations are to reveal the precise fake bids selected in round one (which was chosen with no information about buyers' values). Therefore, the Strawman optimal auction would be computationally credible by the same reasoning used in~\cite{akbarpour2020credible} for the ascending price auction. 

Unfortunately, the auction's behavior must be well-defined even when some bids are concealed, and the auction cannot merely stall. For example, bidders may naturally drop out between rounds due to latency issues, or attackers may adversarially bid and conceal to stall the system. Restarting the auction is perhaps even worse, as now the auctioneer has learned some private information from those bidders who did participate honestly. This means that while it is not a safe, reasonable deviation for the auctioneer to change their commitment, it is indeed a safe, reasonable deviation for the auctioneer to simply conceal some fake bids. The following example establishes that such deviations violate computational credibility of the Strawman auction.

\begin{example}[\strawman\ is not computationally credible] Consider that there is a single (real) buyer, whose value is drawn uniformly from $\{1,2\}$, and consider the Strawman second-price auction with reserve $1$, which tie-breaks lexicographically. The auctioneer will get expected revenue $1$ by being honest (which is optimal among all strategyproof auctions). Instead, they could create a fake bidder, and always commit to $b_2 = 2$. After bidder $1$ reveals in round $2$, the auctioneer can either (a) reveal $b_2$, if $b_1 = 2$, causing $b'_2=2$ and yielding revenue $2$ or (b) conceal, if $b_1 = 1$, causing $b'_2 = 0$ and yielding revenue $1$. This gets the auctioneer expected revenue $3/2$.
\end{example}

The main takeaway from this section is that the \strawman\ auction serves as a good base for a computationally credible, strategyproof auction; however, we cannot force the auctioneer (or any bidder, for that matter) to reveal their bids. Our solution in Section~\ref{sec:dra} is to instead fine all bidders (including fake bidders) who conceal, to disincentivize this particular safe, reasonable deviation.

\section{Deferred Revelation Auction}\label{sec:dra}
In this section, we describe the Deferred Revelation Auction (DRA) and prove basic facts that will be useful throughout all of our analyses. Rather than state the auction as a reduction, we directly apply it to Myerson's revenue-optimal auction. Below, recall that $\bar{\varphi}(\cdot)$ is Myerson's ironed virtual value function, which is the upper concave envelope of $\varphi(\cdot)$ (for further details, see~\cite{myerson1981optimal,hartline2013mechanism}). Recall also that, by~\cite{myerson1981optimal}, the allocation rule of the revenue-optimal single-item auction is to award the item to the buyer with the highest non-negative ironed virtual value (tie-breaking lexicographically).\footnote{When we say ``tie-break lexicographically'', we mean ``break all ties in favor of the lowest-indexed bidder''.}

\begin{definition}[Deferred Revelation Auction] Let $\commit(\cdot,\cdot)$ be a commitment scheme satisfying Assumption~\ref{ass:commitment}. For a given \emph{fine function} $f(\cdot,\cdot)$, $\dra(f)$ is the following auction:\\

\noindent$1^{st}$ Round:
\begin{itemize}
    \item Each buyer $i$ picks a bid, $b_i$, draws a one-time pad $r_i$ uniformly at random, and sends $c_i:=\commit(b_i,r_i)$. The distribution $D_i$ from which $v_i$ is drawn is known to the auctioneer.
    \item The auctioneer sends ($c_i,D_i,i$) to all buyers. Let $n_i$ denote the number of tuples sent to buyer $i$ (including their own).
\end{itemize}
$2^{nd}$ Round:
\begin{itemize}
    \item Each buyer $i$ sends $(b_i,r_i)$ to the auctioneer.
    \item The auctioneer forwards each $(b_i,r_i)$ to all buyers.
\end{itemize}
Resolution:
\begin{itemize}
    \item Let $S$ denote the set of buyers for which $c_i=\commit(b_i,r_i)$, and let $b'_i := b_i \cdot I(i \in S)$. Let $i^*:= \arg\max_{i \in S} \{\bar{\varphi}_i(b_i)\}$.
\item If $\bar{\varphi}_{i^*}(b_{i^*}) > 0$, award buyer $i^*$ the item. Charge them $\bar{\varphi}^{-1}_{i^*}(\max\{0,\max_{i \in S \setminus\{i^*\}}\{\bar{\varphi}_i(b_i)\}\})$.\footnote{Here, we define the inverse of a monotone function $g(\cdot)$ to be $g^{-1}(y)=\inf_x \{x|\ g(x) \geq y\}$.}
\item Additionally, all $i \notin S$ pay buyer $i^*$ a fine equal to $f(n_{i^*},D_{i^*})$.
\end{itemize}
Tie-breaking:
\begin{itemize}
    \item All ties are broken lexicographically, with the auctioneer treated as ``buyer zero''. With this, we will write all inequalities as $>$ or $<$, taking this tie-breaking already into account. 
\end{itemize}
\end{definition}

Above, we are essentially running the optimal \strawman\ auction, but fining any buyers who conceal \emph{and paying these fines to the winning buyer}. Intuitively, this helps in the following way: during round one, the auctioneer can certainly gamble and commit to several fake bids. However, they run the risk of accidentally overshooting the winning bid. In the \strawman\ auction, they could simply conceal these bids. In $\dra(f)$, they must instead pay some fine $f(n_{i^*},D_{i^*})$. Intuitively, it seems that if the fine is large enough, the auctioneer would rather just be honest than commit to any fake bids and run the risk of paying a huge fine. This turns out to be true when each $D_i$ is MHR, but not in general. Before stating our main results, we recap the safe, reasonable deviations.

\begin{enumerate}
\item The seller might create fake buyers during round one.
\item The seller may selectively choose which commitments to send to buyer $i$.
\begin{itemize}
\item The seller might not send $c_j$ at all.
\item The seller might send $c_j$, but with some $D'_j$ instead of the true $D_j$.
\item If $\commit(\cdot,\cdot)$ were malleable, the seller could apply some function $g(\cdot)$ to $b_j$ and forward instead $c'_j=\commit(g(b_j),r_j)$. We assumed in Assumption~\ref{ass:commitment} that $\commit(\cdot,\cdot)$ is non-malleable to avoid this, although Theorem~\ref{thm:mhr} holds even without this assumption.
\item All of these might depend on $\vec{b}_{-i}$, but not $b_i$.\footnote{For example, the seller might solicit a commitment from all buyers. Then, in increasing order of $i$, they could forward some commitments to buyer $i$, and ask $i$ to reveal. Then, after terminating this for all buyers, they could go back and reveal commitments. As far as any individual buyer can tell, the timeline appears correct based on their interaction with the seller.}
\end{itemize}
\item The seller might conceal a commitment. This decision can depend on the entire $\vec{b}$.
\end{enumerate}

Before reasoning about computational credibility, we quickly observe that $\dra(f)$ is indeed strategyproof and optimal.

\begin{observation}\label{obs:sp} For all $f$, $\dra(f)$ is strategyproof and revenue-optimal.
\end{observation}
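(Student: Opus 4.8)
The plan is to reduce both claims to Myerson's auction, mirroring the proof of Observation~\ref{obs:strawman}, with the only new ingredient being that the fine points in the direction that cannot help a deviating bidder. Take $s_j(v_j)$ to be the strategy ``commit $c_j:=\commit(v_j,r_j)$ for a uniformly random $r_j$, then open $(v_j,r_j)$ in round two.'' Fix $n$, a valuation profile $\vec{v}$, and a bidder $i$; I would show that $s_i(v_i)$ is a best response to $\vec{s}_{-i}(\vec{v}_{-i})$ by casing on $i$'s round-two action. If $i$ opens some committed bid $b_i$, then every other bidder lies in $S$, hence $S=[n]$, so no fine is charged to or collected by anyone (the fine term is inert unless $S\neq[n]$, which a unilateral deviation by $i$ cannot cause without $i$ concealing), and the resolution is exactly Myerson's ironed revenue-optimal auction run on reports $(b_i,\vec{v}_{-i})$; strategyproofness of Myerson's auction then makes $b_i=v_i$ optimal among all opened bids. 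If instead $i$ conceals, then $b_i'=0$, so $i$'s payoff equals its payoff from \emph{reporting $0$} in Myerson's auction (against the truthful $\vec{v}_{-i}$) minus the non-negative fine $f(n_{i^*},D_{i^*})$; by strategyproofness of Myerson's auction this is at most its payoff from reporting $v_i$, i.e.\ at most the payoff of $s_i(v_i)$. Together the two cases show $s_i(v_i)$ is a best response for every $\vec{v}$; since it does not reference $n_i$, this gives strategyproofness, and individual rationality follows from the first case plus IR of Myerson's auction.

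For revenue-optimality, observe that when the auctioneer executes the protocol honestly and all bidders play $s_j(v_j)$ we once more have $S=[n]$, no fines move, and the allocation and payments coincide with Myerson's revenue-optimal auction, so the expected revenue equals $\rev(D)$. As $\dra(f)$ is strategyproof, its revenue under truthful play cannot exceed the maximum over strategyproof mechanisms, which is $\rev(D)$ by Myerson's characterization (Theorem~\ref{thm:myerson},~\cite{myerson1981optimal}); hence $\dra(f)$ is revenue-optimal.

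I do not anticipate a genuine obstacle: the one point requiring care is that the fine must not render any bidder deviation profitable, which is exactly why it is charged \emph{to} concealing bidders and paid \emph{to} the winner --- so a truthful bidder never pays it and a concealing bidder only loses from it (this is where non-negativity of the fine function is used). As flagged in the footnote to Observation~\ref{obs:strawman}, one should also keep in mind that we are only claiming an ex-post Nash, not a dominant-strategy equilibrium --- a pathological opponent strategy could make concealing a best response --- but an ex-post Nash is all that strategyproofness requires.
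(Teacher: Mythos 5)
Your proposal is correct and follows essentially the same route as the paper: honest play makes $\dra(f)$ coincide with Myerson's auction (hence optimal), concealing is dominated because it forfeits the item and can only add a non-negative fine (the paper phrases this via individual rationality of Myerson, you via the report-$0$ comparison, which is the same point), and conditional on revealing, committing to $v_i$ is best by Myerson's strategyproofness. Your explicit caveats (ex-post Nash rather than dominant strategies, non-negativity of the fine) are consistent with the paper's footnotes and add no gap.
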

\begin{proof}
$\dra(f)$ is clearly optimal, as it simply runs Myerson's auction. It is also strategyproof: because Myerson's auction is individually rational, buyers have no incentive to conceal their commitment in round two. Given that all buyers will reveal their commitments, it is in buyer $i$'s interest to commit to $v_i$, because Myerson's auction is truthful.
\end{proof}

It is more challenging to reason when $\dra(f)$ is computationally credible. While there are many ways the seller might deviate, our approach to upper bounding the seller's revenue, fortunately, boils down to one vector of parameters determined by the seller's decisions during round one.

\begin{definition} For a triple $(c_j = \commit(b_j,r_j), D_j,j)$ sent to bidder $i$, denote the \emph{effective bid} by $\beta_{ij}:=\bar{\varphi}_i^{-1}(\bar{\varphi}_j(b_j))$. We call the \emph{effective commitment} to buyer $i$ as $\beta_i:=\max\{\bar{\varphi}_i^{-1}(0),\max_j \{\beta_{ij}\}\}$. We call the \emph{effective reveal} to buyer $i$ as $\gamma_i:=\max\{\bar{\varphi}_i^{-1}(0),\max_{j,c_j \text{ is revealed to $i$}}\{\beta_{ij}\}\}$. 

Recall that $\beta_i$ is a function only of $\vec{b}_{-i}$, and $\gamma_i$ is a function of $\beta_i$ and $\vec{b}$.
\end{definition}

\begin{observation}\label{obs:easy} For all $f$, under any safe, reasonable deviation to $\dra(f)$, bidder $i$ receives the item if and only if $v_i > \gamma_i$. Therefore, $v_i > \gamma_i$ for at most one bidder.
\end{observation}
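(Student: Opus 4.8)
The plan is to localize everything to a single bidder's view and then unwind the resolution rule of $\dra(f)$. Fix a safe, reasonable deviation and a valuation profile $\vec v$, and condition on all bidders being truthful, so each bidder $i$ commits $c_i := \commit(v_i,r_i)$ in round one and reveals $(v_i,r_i)$ in round two. First I would extract the two structural facts we need from "reasonable'' and "safe''. Reasonableness guarantees that any commitment $c_j$ the auctioneer actually \emph{reveals} to $i$ was explicitly tied to a single pair $(b_j,r_j)$ (whether $c_j$ was created by $i$, by a real bidder, or by the auctioneer) and that the forwarded opening must be exactly that pair; hence for every commitment revealed to $i$ the quantity $\beta_{ij}=\bar{\varphi}_i^{-1}(\bar{\varphi}_j(b_j))$ is well defined, and therefore so is $\gamma_i$. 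Safety guarantees that $i$'s entire transcript, together with the allocation/price $i$ receives, is consistent with an honest run of $\dra(f)$ in which $i$ (truthfully) faces $n_i-1$ other bidders playing valid strategies; in particular, $i$'s allocation is exactly the one the resolution rule of $\dra(f)$ assigns when run on the bids $i$ sees.

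Next I would apply that resolution rule. Let $R_i$ be the set of indices, in $i$'s view and including $i$ itself, whose commitments are revealed to $i$; truthfulness gives $i\in R_i$ with revealed bid $v_i$. By definition $\dra(f)$ runs Myerson's ironed allocation on $\{b_j\}_{j\in R_i}$, so $i$ is awarded the item iff $\bar{\varphi}_i(v_i)$ is the strictly largest element of $\{0\}\cup\{\bar{\varphi}_j(b_j):j\in R_i\}$, where ``strictly largest'' already incorporates the lexicographic tie-break with the auctioneer as buyer zero (this is the ``$>$'' convention of the definition). Since each $D_i$ is continuous, $\bar{\varphi}_i$ is continuous and non-decreasing, so $\bar{\varphi}_i(\beta_{ij})=\bar{\varphi}_j(b_j)$ for $j\in R_i$ and $\bar{\varphi}_i(\bar{\varphi}_i^{-1}(0))=0$; applying $\bar{\varphi}_i^{-1}$ (with the generalized-inverse convention $g^{-1}(y)=\inf\{x:g(x)\ge y\}$, carrying the same tie-break bookkeeping) converts the comparison of ironed virtual values into a comparison of values, yielding that $i$ wins iff $v_i > \max\{\bar{\varphi}_i^{-1}(0),\ \max_{j\in R_i\setminus\{i\}}\beta_{ij}\}=\gamma_i$. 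Since there is a single item, at most one bidder can be allocated it, so $v_i>\gamma_i$ for at most one $i$, which gives the ``therefore''.

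I expect the only delicate point to be the last equivalence: $\bar{\varphi}_i$ is monotone but, once ironing is allowed, not injective, so ``$v_i>\gamma_i$'' is equivalent to ``$\bar{\varphi}_i(v_i)$ beats $\bar{\varphi}_i(\gamma_i)$'' only after the lexicographic tie-breaking and the $\inf$-based inverse are threaded through correctly — in particular to handle the boundary case where $v_i$ and $\gamma_i$ lie inside a common ironed interval, or where $\bar{\varphi}_i(v_i)=0$ ties with the seller's reserve. This is precisely the situation the paper's ``$>$ takes tie-breaking into account'' convention is meant to absorb, so the remaining work is just to check that the definition of $\gamma_i$ (via $\bar{\varphi}_i^{-1}$) and the $\dra(f)$ resolution rule are set up consistently with that convention; everything else is routine unwinding of definitions.
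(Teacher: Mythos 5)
Your proof is correct and takes essentially the same route the paper intends: the paper states this observation without an explicit proof, treating it as a direct unwinding of the definitions of safe and reasonable deviations, the resolution rule of $\dra(f)$, and the effective reveal $\gamma_i$, which is exactly what you carry out. Your flagged subtlety about non-injectivity of $\bar{\varphi}_i$ on ironed intervals is real but is precisely what the paper's convention of writing all inequalities as $>$ or $<$ with lexicographic tie-breaking (and continuity of the $D_i$) is meant to absorb, so nothing further is needed.
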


We now quickly show that Observation~\ref{obs:easy} is enough to show that $\dra(f)$ is computationally credible whenever each $D_i$ is bounded. 

\begin{observation}\label{obs:bounded} Let each $D_i$ be bounded, and let $f(n,D_i):=\inf\{x : \Pr_{v \leftarrow D_i}[v > x] = 0\}+1$. Then $\dra(f)$ is optimal, strategyproof, and computationally credible for the instance $D = \times_i D_i$.
\end{observation}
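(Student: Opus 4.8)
The plan: optimality and strategyproofness are immediate from Observation~\ref{obs:sp}, so everything reduces to computational credibility, i.e.\ to showing that every safe, reasonable deviation $\sigma$ earns expected revenue at most $\rev(D)$, which is exactly the revenue of honest execution. Write $\bar v_i:=\inf\{x:\Pr_{v\leftarrow D_i}[v>x]=0\}$, so that $f(n,D_i)=\bar v_i+1$ and $v_i\le\bar v_i$ almost surely. Since we condition on truthful buyers, every real buyer reveals in round two, so the only commitments ever concealed are fake ones created by the auctioneer (or real ones the auctioneer itself forwards and then conceals to some buyer's view); moreover, since real buyers' round-one commitments are perfectly hiding and $\sigma$ is reasonable (hence non-malleable), every fake commitment sent in round one is explicitly tied to a value chosen independently of $\vec v$. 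By Observation~\ref{obs:easy}, under $\sigma$ buyer $i$ wins iff $v_i>\gamma_i^\sigma$, at most one buyer wins, and a winning real buyer $i^*$ pays $\gamma_{i^*}^\sigma\le v_{i^*}\le\bar v_{i^*}$ by individual rationality and boundedness.

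The key device is a coupling to a ``reveal‑everything'' strategy $\sigma'$, identical to $\sigma$ except that in round two it correctly reveals every commitment it forwarded in round one (to every buyer it forwarded it to). I would first check that $\sigma'$ is still safe and reasonable: each buyer's transcript is consistent with the same opponent profile as under $\sigma$, with any ``commit‑then‑conceal'' opponent replaced by the still‑valid ``commit‑then‑reveal'' opponent, and revealing extra commitments only raises every $\gamma_i$, so $\{i:v_i>\gamma_i^{\sigma'}\}\subseteq\{i:v_i>\gamma_i^{\sigma}\}$ still has at most one element and Observation~\ref{obs:easy} applies to $\sigma'$ as well. Under $\sigma'$ nobody is ever fined, so on every realization its net revenue is $\gamma_{i^{**}}^{\sigma'}\ge 0$ if some real buyer $i^{**}$ wins and $0$ otherwise. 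I then claim that on every realization of $(\vec v,\text{auctioneer randomness})$ the net revenue of $\sigma$ is at most that of $\sigma'$: if no real buyer wins under $\sigma$ the former is $\le 0$; if a real buyer $i^*$ wins under $\sigma$ with some commitment concealed from $i^*$'s view, then $i^*$ collects a fine of at least $\bar v_{i^*}+1>v_{i^*}\ge\gamma_{i^*}^\sigma$, so $\sigma$'s revenue is negative; and if a real buyer $i^*$ wins under $\sigma$ with nothing concealed from $i^*$'s view, then $i^*$'s transcript — and hence $\gamma_{i^*}$ and $i^*$'s price — is literally identical under $\sigma$ and $\sigma'$, so the two revenues coincide. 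Taking expectations, the expected revenue of $\sigma$ is at most that of $\sigma'$.

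Finally I would bound the expected revenue of $\sigma'$ by $\rev(D)$. Because $\sigma'$ never conceals, what it forwards to buyer $i$ depends only on $\vec v_{-i}$ and its value‑independent randomness, so $\gamma_i^{\sigma'}$ is independent of $v_i$; hence $x_i(\vec v):=\indicator{v_i>\gamma_i^{\sigma'}}$ is monotone in $v_i$, and by Observation~\ref{obs:easy} it allocates to at most one buyer, so it is a feasible strategyproof allocation with threshold price $\gamma_i^{\sigma'}$. Using boundedness of $D_i$ (so $v(1-F_i(v))\to 0$), a standard integration by parts turns the expected threshold payment into the expected virtual value, $\mathbb E_{v_i}[\gamma_i^{\sigma'}\indicator{v_i>\gamma_i^{\sigma'}}\mid\vec v_{-i},\text{rand}]=\mathbb E_{v_i}[\varphi_i(v_i)\indicator{v_i>\gamma_i^{\sigma'}}\mid\vec v_{-i},\text{rand}]$; summing over real buyers and applying Theorem~\ref{thm:myerson} gives expected revenue $\mathbb E_{\vec v}[\sum_i x_i(\vec v)\varphi_i(v_i)]\le\rev(D)$ by optimality of Myerson's auction. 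Chaining the three steps yields that $\sigma$ earns at most $\rev(D)$, the honest revenue, so $\dra(f)$ is computationally credible for $D$.

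The step I expect to be the main obstacle is exactly the case of a deviation that conceals commitments but \emph{not} in a way that fines the eventual winner: there the naively induced ``mechanism'' on the real buyers need not be monotone, because the concealment choice can depend on the winner's own value $v_i$. The coupling to the concealment‑free $\sigma'$ is what neutralizes this, since $\sigma'$ does induce a bona fide monotone allocation with exogenous reserves; the pointwise domination of $\sigma$ by $\sigma'$ then works precisely because on realizations where $\sigma$ conceals nothing relevant to the winner the two strategies literally agree, while on all other realizations $\sigma$'s revenue is already nonpositive thanks to the size of the fine.
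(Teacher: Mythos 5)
Your proof is correct and follows essentially the same route as the paper's: both reduce any safe, reasonable deviation to the ``reveal-everything'' deviation (your $\sigma'$, the paper's ``just revealing all commitments''), using that the fine $\bar v_{i^*}+1$ exceeds any possible payment so concealment-plus-sale is never profitable, and then observe that the reveal-everything deviation is a truthful threshold mechanism with thresholds $\beta_i$ depending only on $\vec b_{-i}$, hence earns at most $\rev(D)$ by Myerson. Your explicit pointwise coupling and case analysis is just a more detailed formalization of the paper's terser argument, so no gap.
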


\begin{proof}
Optimality and strategyproofness follow directly from Observation~\ref{obs:sp}. To see that $\dra(f)$ is computationally credible, observe that the auctioneer will certainly get negative revenue if they ever conceal a fake bid sent to buyer $i$ and sell them the item (because buyer $i$ will pay at most $\inf\{x : \Pr_{v \leftarrow D_i}[v > x] = 0\}$, while the auctioneer must pay a strictly larger fine). Therefore, any optimal safe, reasonable deviation has $v_i < \gamma_i \Leftrightarrow v_i < \beta_i$. Indeed, the $\Leftarrow$ implication is trivial, as $\gamma_i \leq \beta_i$. The $\Rightarrow$ implication follows because the auctioneer would get negative revenue selling the item to buyer $i$, and could strictly improve their revenue by just revealing all commitments and not selling the item. 

Once we have this implication, observe that buyer $i$ now wins the item if and only if $v_i > \beta_i$, and $\beta_i$ is a function of $\vec{b}_{-i}$. Moreover, when buyer $i$ wins, they will pay $\beta_i$. This is a truthful mechanism, and therefore it achieves revenue no better than Myerson's optimal auction. To recap: we have shown that every safe, reasonable deviation is strictly outperformed by another deviation, which implements a truthful mechanism, which is outperformed by executing the auction in earnest.
\end{proof}

Observation~\ref{obs:bounded} illustrates one idea to reason about computational credibility of $\dra(f)$, but does not really shed much insight, as it is essentially just forcing the auctioneer to reveal all commitments. Our main results, therefore, concern unbounded distributions (or significantly shrinking the fines necessary for bounded distributions), where there do not exist sufficiently large fines to trivially force the auctioneer to always reveal. We begin with our positive results. Below, $r(D_0)$ denotes the Myerson reserve for a one-dimensional distribution $D_0$.

\begin{theorem}\label{thm:mhr} Let $f(n,D_i):=r(D_i)$.\footnote{Note that when $D_i$ is MHR, $r(D_i) = \Theta(\rev(D_i))$~\cite{cai2011extreme}.} Then when all $D_i$ are MHR (bounded or unbounded), $\dra(f)$ is optimal, strategyproof, and computationally credible.
\end{theorem}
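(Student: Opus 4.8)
The plan is as follows. Optimality and strategyproofness are immediate from Observation~\ref{obs:sp}, so the entire content is computational credibility: since honest execution of $\dra(f)$ is Myerson's auction and earns $\rev(D)$, I must show that no safe, reasonable deviation does better. By Observation~\ref{obs:easy} a deviation is captured, from buyer $i$'s side, by the effective commitment $\beta_i$ and effective reveal $\gamma_i\le\beta_i$, and $\beta_i$ depends on $\vec v_{-i}$ and the auctioneer's round-one coins but not on $v_i$, because round-one commitments are perfectly hiding. I would first record two revenue-preserving normalizations: the auctioneer never lets a fake buyer win (that earns zero and wastes a real sale); and, having decided to sell to a real buyer $i$, it reveals every fake shown to $i$ whose ironed virtual value lies below $\bar\varphi_i(v_i)$ (this only raises $i$'s Myerson price, for free) and conceals only those shown to $i$ that \emph{overshoot} $\bar\varphi_i(v_i)$, each costing the fine $r(D_i)$ that is then paid to $i$. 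With these, the whole deviation becomes: in round one, commit for each real $i$ a multiset $B_i$ of effective fake bids shown to $i$ — arbitrarily many, WLOG each $\ge r(D_i)$, with $B_i$ a function of $\vec v_{-i}$ only — and then, after seeing $\vec v$, sell to at most one real $i^*$ and collect $\mathrm{price}_{i^*}-\mathrm{fines}_{i^*}$.

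The crux is a single-buyer inequality, which I would prove first for one real buyer and then reuse. Fix $\vec v_{-i}$ and the coins, so $B_i=\{b_1\le\dots\le b_m\}$ is fixed, and let $t\ge r(D_i)$ be the largest effective competition the auctioneer can confront $i$ with using its honest rivals; only fakes above $t$ matter, so take $b_1>t$. Integrating over $v_i\sim D_i$, the revenue obtainable through buyer $i$ equals
\[
L_m := \bigl(t - m\,r(D_i)\bigr)^{+}\Pr[t<v_i\le b_1] + \sum_{j=1}^{m-1}\bigl(b_j-(m-j)\,r(D_i)\bigr)^{+}\Pr[b_j<v_i\le b_{j+1}] + b_m\Pr[v_i>b_m],
\]
and the claim is $L_m\le L_0=t\,\Pr[v_i>t]$, the right side being exactly buyer $i$'s contribution to $\rev(D)$ against this competition (for $m=0$ it is an equality, attained by honesty). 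I would prove this by induction on the number of fakes: introducing a new top bid $b_{m+1}>b_m$ (with the convention $b_0:=t$) changes the value by at most $(b_{m+1}-b_m+r(D_i))\Pr[v_i>b_{m+1}]-r(D_i)\Pr[v_i>b_m]$ — the price rises by $b_{m+1}-b_m$ on $\{v_i>b_{m+1}\}$, but on the larger event $\{v_i\le b_{m+1}\}$ an extra fine $r(D_i)$ is incurred wherever selling to $i$ remains profitable — and this increment is $\le 0$ exactly when
\[
\frac{b_{m+1}-b_m+r(D_i)}{r(D_i)} \;\le\; \frac{1-F_i(b_m)}{1-F_i(b_{m+1})} \;=\; \exp\!\left(\int_{b_m}^{b_{m+1}} h_i(u)\, du\right).
\]
This is where MHR is used: $r(D_i)\,h_i(r(D_i))=1$ (the Myerson-reserve identity $\varphi_i(r(D_i))=0$) together with $h_i$ nondecreasing gives $\int_{b_m}^{b_{m+1}} h_i\ge (b_{m+1}-b_m)/r(D_i)$, and then $e^x\ge 1+x$ closes the gap; so $L_m\le L_0$.

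For many real buyers, the deviation still awards the item to at most one of them, and the single-buyer inequality — applied with $\vec v_{-i}$, hence the competition faced by $i$, held fixed — shows the revenue obtained through each buyer $i$ never exceeds $i$'s contribution to $\rev(D)$; a global argument via Theorem~\ref{thm:myerson}, accounting for the interdependence among buyers (the auctioneer may boost one buyer's price using another, or redirect the sale) by charging each buyer's take against that buyer's virtual-welfare contribution and using that at most one buyer wins, then yields that the deviation earns at most $E[\max\{0,\max_i\bar\varphi_i(v_i)\}]=\rev(D)$. Since honest execution attains $\rev(D)$, it is optimal over all safe, reasonable deviations, so $\dra(f)$ is computationally credible. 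I expect the single-buyer inequality with the $(m-j)\,r(D_i)$ fine terms to be the main obstacle: that a fine equal to the reserve — not a larger multiple — exactly offsets the gain from hedged overshooting is precisely where the MHR hypothesis bites, through $h_i$ nondecreasing (i.e.\ $h_i\ge h_i(r(D_i))$ above the reserve), and it is exactly this step that fails for a merely regular distribution, consistent with Theorem~\ref{thm:regular}.
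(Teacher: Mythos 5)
Your reduction to credibility and your single\hyp{}buyer inequality are correct: the telescoping induction over the fake ladder, with increment at most $(b_{m+1}-b_m+r(D_i))\Pr[v_i>b_{m+1}]-r(D_i)\Pr[v_i>b_m]$ and the MHR facts $h_i(u)\ge h_i(r(D_i))=1/r(D_i)$ for $u\ge r(D_i)$ together with $e^x\ge 1+x$, does give $L_m\le L_0$, and for a single real buyer this is a complete proof by a genuinely different route from the paper's (the paper never counts more than one fine; its MHR input is the pointwise inequality $v-r(D_i)\le\varphi_i(v)$ for $v\ge r(D_i)$, Lemma~\ref{lem:mhr}, used inside a decomposition into the events $\{\exists i: v_i>\beta_i\}$ and $\{\forall i: v_i<\beta_i\}$, Lemma~\ref{lem:onebig} and Corollary~\ref{cor:nonebig}).

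The gap is the multi-buyer aggregation. Your charging scheme asserts that ``the revenue obtained through each buyer $i$ never exceeds $i$'s contribution to $\rev(D)$,'' with $L_0=t\Pr[v_i>t]$ computed at the \emph{full honest} competition $t=t_i(\vec v_{-i})$. But a safe, reasonable deviation need not confront $i$ with its honest rivals: it can withhold real commitments from $i$ and sell to $i$ at a lower threshold, compensating by blocking the other buyers with high revealed fakes. Concretely, with two i.i.d.\ $U[0,1]$ buyers, forward nothing to buyer $1$, forward to buyer $2$ both buyer $1$'s commitment and a fake at the top of the support (always revealed), and sell to buyer $1$ at $1/2$ whenever $v_1>1/2$: this is safe and reasonable, buyer $1$'s take is $1/4$, strictly more than buyer $1$'s Myerson contribution ($5/24$), and the theorem survives only because buyer $2$'s take is $0$ --- a compensation your per-buyer bound never sees. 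Repairing this by running your inequality at the actually-shown baselines $t'_i\le t_i$ gives $\sum_i \mathbb{E}[\varphi_i(v_i)I(v_i>t'_i)]$, which can exceed $\mathbb{E}[\max\{0,\max_j\varphi_j(v_j)\}]$ because several buyers can clear their (lowered) thresholds simultaneously: the ``at most one winner'' fact applies to $\{v_i>\gamma_i\}$, not to $\{v_i>t'_i\}$. What the aggregation actually needs is the per-buyer bound $\mathbb{E}[\text{take through }i]\le\mathbb{E}[\varphi_i(v_i)\cdot I(i\text{ wins})]$, and $L_m\le L_0$ does not deliver it; the paper obtains it by splitting on $v_i\gtrless\beta_i$ (with $\beta_i$ independent of $v_i$, so Myerson's lemma handles the first event) and by the pointwise bound $\text{price}-\text{fine}\le v_i-r(D_i)\le\varphi_i(v_i)$ whenever $i$ wins via concealment. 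Supplying such a per-winner bound (or otherwise accounting for the revenue forgone on the buyers whose competition you suppressed) is a missing and necessary step in your multi-buyer argument.
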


\begin{theorem}\label{thm:alphamhr} For all $\varepsilon,\alpha > 0$, there exists an $f(\cdot,\cdot)$ such that $f(n,D_i) \leq \poly_\alpha(n,r(D_i),1/\varepsilon)$ for all $n,D_i$,\footnote{By the notation $\poly_\alpha(\cdot)$, we mean that for all fixed $\alpha$, the fine is $\poly(n,D(r_i),1/\varepsilon)$. The precise fine which we prove suffices is $f(n,D_i):=\left(\frac{2n^2}{\varepsilon \alpha}\right)^{\frac{1-\alpha}{\alpha}}\cdot (1-\alpha)^{-1/\alpha} \cdot r(D_i)$.} such that when all $D_i$ are unbounded and $\alpha$-strongly regular, $\dra(f)$ is optimal, strategyproof, and computationally $\varepsilon$-credible.
\end{theorem}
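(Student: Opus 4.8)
The plan is to squeeze every safe, reasonable deviation into a single per-bidder object --- a ``ladder'' of effective bid levels --- and then compare the auctioneer's revenue against $\rev(D)$ by isolating the part of the deviation that behaves like an honest (truthful) mechanism from the part that the fine renders negligible.

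First I would set up the reduction. Fix a safe, reasonable deviation. For each real bidder $i$, the tuples the auctioneer forwards to $i$ induce (in $i$'s value space, via $\bar\varphi_i^{-1}\circ\bar\varphi_j$) a finite increasing sequence $r(D_i)\le\ell^{(i)}_1<\cdots<\ell^{(i)}_{k_i}=:\beta_i$; non-malleability guarantees these depend on $\vec v_{-i}$ only, and the number of fake commitments forces $n_i\ge k_i+1$. Since concealing the fake at level $\ell^{(i)}_j$ costs exactly one fine $f(n_i,D_i)$ and the winner collects it, the best net revenue the auctioneer can extract by selling to bidder $i$ is $g_i(v_i):=\max\bigl(0,\ \max_{j:\ \ell^{(i)}_j<v_i}\bigl(\ell^{(i)}_j-(k_i-j)f(n_i,D_i)\bigr)\bigr)$, and by Observation~\ref{obs:easy} the auctioneer sells to at most one bidder; hence the deviation earns $\mathbb E_{\vec v}[\max_i g_i(v_i)]$.

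Next I would split $g_i$. Because the top level $\ell^{(i)}_{k_i}=\beta_i$ carries no fine, $g_i(v_i)=h_i(v_i)+G_i(v_i)$ with $h_i(v_i):=\beta_i\indicator{v_i>\beta_i}$ and $G_i$ supported on $\{v_i\le\beta_i\}$ with every term deducting at least one fine. Since $\beta_i$ is a function of $\vec v_{-i}$, and $\gamma_i\le\beta_i$ forces (by Observation~\ref{obs:easy}, which bounds the number of $i$ with $v_i>\gamma_i$) at most one bidder to have $v_i>\beta_i$, the rule ``$i$ gets the item and pays $\beta_i$ iff $v_i>\beta_i$'' is a feasible truthful mechanism, so $\mathbb E[\max_i h_i]=\mathbb E[\sum_i h_i]\le\rev(D)$. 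It then remains to prove $\mathbb E[\max_i G_i]\le\varepsilon\,\rev(D)$, for then the deviation earns at most $(1+\varepsilon)\rev(D)\le\rev(D)/(1-\varepsilon)$. For this I would combine three ingredients: (i) $G_i(v_i)\le(v_i-f(k_i+1,D_i))^+$, so that a larger ladder (needed for finer ``discrimination'') forces a larger fine, matching the $n^2$ inside $f(n,D_i)$ and neutralizing coarse ($k_i=2$) ladders via a ``posted-price $\le\rev(D)$'' argument; (ii) the $\alpha$-strong-regularity tail bound $\Pr_{v\leftarrow D_i}[v>t]\le\bigl(\tfrac{r(D_i)}{(1-\alpha)t}\bigr)^{1/(1-\alpha)}$ for $t\ge r(D_i)$ (from $h^{D_i}(v)\ge\tfrac{1}{(1-\alpha)v+\alpha r(D_i)}$), giving $\mathbb E[(v-c)^+]\le\tfrac{1-\alpha}{\alpha}\bigl(\tfrac{r(D_i)}{1-\alpha}\bigr)^{1/(1-\alpha)}c^{-\alpha/(1-\alpha)}$; and (iii) the matching lower bound $\rev(D_i)\ge r(D_i)\cdot\alpha^{1/(1-\alpha)}$, again from $\alpha$-strong regularity. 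Plugging $c=f(k_i+1,D_i)$ and the stated formula for $f$ makes each per-bidder gain at most $\tfrac{\varepsilon}{2(k_i+1)^2}r(D_i)$; aggregating over the (at most one) realized winner and over the bidders then yields $\mathbb E[\max_i G_i]\le\varepsilon\,\rev(D)$.

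The hard part is step (iii) together with the aggregation: the mechanism charges a fine $f(n_i,D_i)$ keyed to the \emph{perceived} count $n_i$, which the auctioneer can try to shrink by withholding real bidders, while the benchmark $\rev(D)$ is tied to the true count $n$; and a naive sum of per-bidder gains loses a factor of $n$. Making this accounting work --- showing that any deviation that profits from concealment is either forced to a genuinely large ladder (hence large $n_i$ and large fine) or collapses to a reserve price (hence at most $\rev(D)$), and that the resulting polynomial-in-$n$ fine is enough --- is the technical heart, and it is exactly here that non-malleability and the super-linear tail decay of $\alpha$-strongly regular distributions (which degenerates at $\alpha=0$, cf.\ Theorem~\ref{thm:regular}) are indispensable.
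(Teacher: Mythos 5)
Your skeleton matches the paper's: you split revenue into the event $\{v_i>\beta_i$ for some $i\}$, handled by a Myerson-type argument exactly as in Lemma~\ref{lem:onebig}, and the event $\{v_i<\beta_i$ for all $i\}$, where each sale requires concealing a fake and the $\alpha$-strongly-regular comparison of value to virtual value (Lemma~\ref{lem:alphamhr}/Corollary~\ref{cor:alphanonebig}) plus the tail bound (Lemma~\ref{lem:alphapricing}) make the per-bidder gain small. But the step you explicitly defer --- ``making this accounting work'' when the fine is keyed to the perceived count $n_i$ rather than the true $n$ --- is not a detail; it is the one genuinely new ingredient of this theorem, and your proposal neither proves it nor sketches the correct mechanism for it. The paper's resolution is Lemma~\ref{lem:all}: order the bidders by when the seller requests decommitment; since the seller must forward all commitments to bidder $i$ before asking $i$ to reveal, the forwarding decision for $i$ depends only on already-revealed bids, so if some not-yet-revealed real bidder $\ell$'s commitment is withheld from $i$, then (because $D_i$ and $D_\ell$ are \emph{unbounded}) with positive probability both $v_i>\beta_i$ and $v_\ell>\beta_\ell$, i.e.\ two bidders expect to win --- contradicting safety. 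Hence the $i$-th bidder asked to reveal has $n_i\geq n-i+1$, so $\sum_i 1/n_i^2\leq \pi^2/6\leq 2$, and the per-bidder losses $\tfrac{\varepsilon}{2n_i^2}\,r(D_i)\Pr[v_i\geq r(D_i)]\leq \tfrac{\varepsilon}{2n_i^2}\rev(D)$ sum to $\varepsilon\,\rev(D)$ instead of losing a factor of $n$. Notably, your write-up never invokes the hypothesis that the $D_i$ are unbounded, which is a symptom of the missing lemma: unboundedness (and non-malleability) enter the proof \emph{only} there, and the paper shows both are necessary (Theorems~\ref{thm:malleable} and~\ref{thm:alphabounded}).

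Your suggested route for closing the gap --- ``any deviation that profits from concealment is either forced to a genuinely large ladder (hence large $n_i$ and large fine) or collapses to a reserve price'' --- does not work as stated. Profitable concealment does not require a large ladder: the deviation in Theorem~\ref{thm:alphamhrneg} uses a \emph{single} well-placed fake bid per bidder, so ladder size alone cannot force $n_i$ up. What forces $n_i$ up is that bidder $i$ must be shown the commitments of all real bidders whose values are still unknown, which is exactly the safety-plus-unboundedness argument above; non-malleability is needed so the seller cannot compress those real commitments into one aggregate commitment (the attack in Theorem~\ref{thm:malleable}), not to make $\beta_i$ a function of $\vec v_{-i}$ (that holds even for malleable schemes, as noted after Theorem~\ref{thm:mhr}). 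Your ingredients (i)--(ii) are fine and parallel the paper's Lemmas~\ref{lem:alphamhr} and~\ref{lem:alpha-prob-tail}; ingredient (iii) can be replaced by the simpler observation $r(D_i)\cdot\Pr[v_i\geq r(D_i)]\leq\rev(D)$, which is what the paper uses. But without a proof of the $n_i$ lower bound, the aggregation step --- and hence the theorem --- remains unproven.
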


Theorem~\ref{thm:alphamhr} can be improved to remove the $\varepsilon$ when there is just a single bidder, but not otherwise (see Theorem~\ref{thm:alphamhrneg} shortly after).

\begin{proposition}\label{prop:alphaone}
Moreover, for all $\alpha > 0$, there exists an $f(\cdot,\cdot)$ with $f(n,D_0):=\Theta_\alpha(r(D_0))$ for all $n$, such that when $D_0$ is $\alpha$-strongly regular, $\dra(f)$ is optimal, strategyproof, and computationally credible when there is a single (real) buyer from $D_0$.\footnote{The precise fine which we prove suffices is $f(n,D):= r(D)\bigg(\bigg(\frac{1}{1-\alpha}\bigg)^{\frac{1}{1-\alpha}}\frac{1}{\alpha}\bigg)^{\frac{1-\alpha}{\alpha}}$.}
\end{proposition}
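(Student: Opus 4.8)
\emph{Plan.} I would reduce an arbitrary safe, reasonable deviation against the single real buyer~$0$ to a one-parameter family of deviations, and then bound the resulting revenue by a one-dimensional calculus problem. Because $\dra(f)$ is strategyproof (Observation~\ref{obs:sp}), buyer~$0$ commits to $v_0$ and reveals, so $0\in S$ always; the only levers the auctioneer has are to manufacture fake buyers and reveal their commitments selectively (the safe, reasonable deviations~1--3 enumerated in Section~\ref{sec:dra}). Choosing the announced distributions and bids for the fakes, the auctioneer effectively picks, in round one and --- since the commitment is perfectly hiding and the deviation is reasonable (hence cannot malleate $c_0$) --- \emph{independently of $v_0$}, a multiset of effective bids, of which we may discard those $\le r(D_0)$ (they never affect the outcome) and write the rest as $r(D_0)\le t_1<\dots<t_m$. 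In round two the auctioneer learns $v_0$ (Observation~\ref{obs:easy}) and plays optimally: revealing any fake bid above $v_0$ hands the item to a fake buyer and nets $0$ (payments and fines then merely cycle among fake identities), so the best response is to reveal exactly $\{t_j: t_j<v_0\}$ and conceal the rest, selling to buyer~$0$ at $\gamma_0=\max(\{r(D_0)\}\cup\{t_j: t_j<v_0\})$ while paying the fine $f$ for each of the $|\{j: t_j\ge v_0\}|$ concealed bids, or forfeiting to a fake buyer when this net revenue would be negative. The resulting revenue as a function of $v_0$ is a nondecreasing step function $W_T$, equal to $0$ on $(r(D_0),t_1)$, to $(t_k-f(m-k))^+$ on $(t_k,t_{k+1})$, and to $t_m$ on $(t_m,\infty)$, and $R(T)=E_{v_0\leftarrow D_0}[W_T(v_0)]$.

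\emph{Using $\alpha$-strong regularity.} Write $r:=r(D_0)$, $p:=\Pr_{v\leftarrow D_0}[v\ge r]$ (so $\rev(D_0)=pr$), and $\beta:=1/(1-\alpha)>1$. From $\varphi^{D_0}(v)\ge\alpha(v-r)$ for $v\ge r$ one gets $h^{D_0}(v)\ge 1/((1-\alpha)v+\alpha r)$, and integrating this yields the tail bound $\Pr[v\ge t]\le G(t):=p\,r^{\beta}\big((1-\alpha)t+\alpha r\big)^{-\beta}$ for all $t\ge r$. Writing $W_T(v)=\sum_{k:\,t_k<v}\delta_k$ with jumps $\delta_k\ge 0$ that telescope ($\sum_k\delta_k=t_m$) gives $R(T)=\sum_k\delta_k\Pr[v\ge t_k]\le\sum_k\delta_k G(t_k)=\int_r^\infty W_T(t)\,(-G'(t))\,dt$ (the last equality by integration by parts). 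Since $W_T(t)\le (t-f)^+$ for $t\le t_m$ (there is always at least one concealed bid and $\gamma_0<v_0$) and $W_T(t)=t_m$ for $t>t_m$, this is at most $\Psi(t_m)$ where $\Psi(x):=\int_r^x(t-f)^+(-G'(t))\,dt+x\,G(x)$; so it suffices to prove $\sup_{x\ge r}\Psi(x)\le pr$.

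\emph{The one-dimensional bound, and the main obstacle.} Differentiating, $\Psi'(x)=\frac{d}{dx}(x\,G(x))$ for $x<f$ and $\Psi'(x)=G(x)+f\,G'(x)$ for $x>f$; since $x\,G(x)$ is maximized at $x=r$ with value $pr$ and $\Psi$ has a single interior minimum on $[f,\infty)$, $\sup_{x\ge r}\Psi(x)=\max\big(pr,\ \Psi(f),\ \Psi(\infty)\big)$. One computes $\Psi(f)=f\,G(f)\le pr$ for every $f\ge r$ (again because $x\,G(x)\le r\,G(r)$) and $\Psi(\infty)=p\,r^{\beta}\big((1-\alpha)f+\alpha r\big)^{1-\beta}/\alpha$, and checks that the stated fine $f=r\big(\big(\tfrac{1}{1-\alpha}\big)^{1/(1-\alpha)}\tfrac{1}{\alpha}\big)^{(1-\alpha)/\alpha}$ makes $\Psi(\infty)\le pr$. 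Hence $R(T)\le pr=\rev(D_0)$ for every round-one choice $T$, which together with Observation~\ref{obs:sp} proves $\dra(f)$ is optimal, strategyproof, and computationally credible with a single $\alpha$-strongly-regular buyer. The delicate part, I expect, is the first paragraph's reduction --- verifying that letting a fake buyer win, misreporting the fakes' $D_j$'s, time-shifting the protocol across rounds, and non-greedy partial reveals are all weakly dominated, so that the auctioneer's problem really collapses to choosing $r(D_0)\le t_1<\dots<t_m$ --- and, secondarily, confirming the $\alpha$-dependent tail bound and that it, combined with the per-concealed-bid fine, drives the supremum in the last step down to exactly $\rev(D_0)$ with the stated fine; the ``$\ge 1$'' factor in the constant is already forced by a single fake bid, while the $\alpha$-dependent factor is forced by the worst case of many fake bids spread along the tail.
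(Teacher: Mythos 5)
Your proposal is correct, and the constant you need does work out: with $G(t):=\Pr[v\geq r]\,r^{\beta}((1-\alpha)t+\alpha r)^{-\beta}$, $\beta=\tfrac{1}{1-\alpha}$, one gets $\Psi(\infty)=\tfrac{\Pr[v\geq r]\,r^{\beta}}{\alpha}((1-\alpha)f+\alpha r)^{1-\beta}$, and the footnote's fine satisfies $(1-\alpha)f= r(\alpha(1-\alpha))^{-\frac{1-\alpha}{\alpha}}\geq r\alpha^{-\frac{1-\alpha}{\alpha}}-\alpha r$, which is exactly what $\Psi(\infty)\leq r\Pr[v\geq r]=\rev(D_0)$ requires. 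However, your route is genuinely different from the paper's. The paper never characterizes the seller's optimal conceal/reveal policy: it splits revenue on the events $\{v>\beta\}$ and $\{v<\beta\}$, handles the first with Lemma~\ref{lem:onebig}, and for the second proves a single-bidder analogue of Corollary~\ref{cor:alphanonebig} (Claim~\ref{claim:alpha-single-bidder}), using $R(v)\leq v-k$ whenever a conceal occurs, converting $v-k$ to $\varphi(v)/\alpha+r-k$ via Lemma~\ref{lem:alphamhr}, and then applying Myerson's lemma together with the posted-price bound Lemma~\ref{lem:alphapricing} at price $k$; the two pieces are then recombined so that the $-\mathbb{E}[\varphi(v)/\alpha\cdot I(v>\beta)]$ term cancels the Lemma~\ref{lem:onebig} term. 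You instead reduce the seller to a choice of effective fake bids $t_1<\dots<t_m$ fixed before $v_0$ is learned, show the optimal second-round response is the greedy one (revenue step function $W_T$ with plateau $(t_k-f(m-k))^+$), dominate the tail by $G$ (this is the paper's Lemma~\ref{lem:alpha-prob-tail}), and finish with a one-dimensional optimization of $\Psi$. Your approach buys an essentially exact description of the seller's problem (and hence the possibility of a tighter fine), at the cost of having to verify the reduction carefully — that misreported $D_j$'s collapse to effective bids, that a fake winner nets zero, that bids below $r$ and non-greedy reveals are dominated (the monotonicity of $t_j-f(m-j)$ in $j$), and that randomized round-one choices are handled by averaging — whereas the paper's virtual-value route sidesteps all of this and reuses the machinery already built for Theorem~\ref{thm:alphamhr}. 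Two small gaps to patch: your argument only covers $\alpha\in(0,1)$ (for $\alpha\geq 1$ you should, as the paper does, invoke Theorem~\ref{thm:mhr} directly, since $\beta=\tfrac{1}{1-\alpha}$ and the footnote's formula degenerate there), and the reduction paragraph is asserted rather than proved, though each of its claims is true and routine to verify.
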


Theorem~\ref{thm:mhr} is our main positive result: it asserts that there is a reasonably-sized fine, which depends only on $D_i$ and not even on $n$, such that these fines are sufficient to deter the auctioneer from submitting fake bids. Proposition~\ref{prop:alphaone} extends Theorem~\ref{thm:mhr} to $\alpha$-strongly regular distributions when there is just a single (real) bidder. Theorem~\ref{thm:alphamhr} is an extension to multiple bidders, but is a relaxation in two ways: the mechanism is only $\varepsilon$-credible, and the fine now depends on $n$. Our main negative results establish that these are necessary, and Theorem~\ref{thm:alphamhr} is essentially the limit of what $\dra(f)$ achieves within the framework of $\alpha$-strongly regular distributions. Our negative results are as follows:

\begin{theorem}\label{thm:regular} There exists an unbounded regular distribution $D_0$, such that for all $f(\cdot,\cdot)$, $\dra(f)$ is \emph{not} computationally $\varepsilon$-credible for the instance $D_0$ and \emph{any} $\varepsilon < 1$.
\end{theorem}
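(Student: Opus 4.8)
The plan is to pick a single regular distribution $D_0$ that is ``just barely'' regular: its tail is essentially $\Theta(1/v)$, so that (i) $D_0$ has finite optimal revenue --- hence, by Observation~\ref{obs:sp}, $\dra(f)$ run honestly earns a fixed finite amount --- yet its mean is infinite, and (ii) the revenue-curve value $v\cdot\Pr_{D_0}[v'\ge v]$ stays bounded below by a constant for all arbitrarily large $v$. For concreteness one may take $D_0$ supported on $[1,\infty)$ with virtual value $\varphi^{D_0}(v)=1-1/v$; this is strictly increasing (so the honest auction is nondegenerate), regular, but not $\alpha$-strongly regular for any $\alpha>0$; its Myerson reserve and optimal revenue are both $1$; and $\Pr_{D_0}[v'\ge v]\sim L/v$ for some constant $L\in(0,1]$, so $E[v_1]=\infty$. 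Since the honest revenue is the constant $1$, it suffices to show that for \emph{every} fine function $f$ the auctioneer has safe, reasonable deviations with unbounded revenue.

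Fix $f$. For each $m$ put $c:=f(m+1,D_0)$ and $\beta_k:=\beta_1\cdot 2^{k-1}$ for $k=1,\dots,m$, where $\beta_1$ is chosen larger than both $2mc$ and a constant $V_0$ depending only on $D_0$ (so that the estimate $\Pr_{D_0}[v'\ge v]\sim L/v$ is already tight for $v\ge\beta_1$). The deviation: in round one, create $m$ fake buyers, each presented with the true distribution $D_0$, and commit honestly to fake bids $\beta_1<\dots<\beta_m$ (so fake buyer $k$ has effective bid exactly $\beta_k$ to the real buyer); forward all $m+1$ tuples to the real buyer. In round two, after the real buyer reveals $v_1$: if $v_1\ge\beta_1$, reveal the fake bids with $\beta_k\le v_1$ and conceal the rest; if $v_1<\beta_1$, reveal all fake bids. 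The real buyer's transcript is consistent with a valid $(m+1)$-buyer run of $\dra(f)$ in which each fake buyer truthfully bids $\beta_k$ (concealers using a valid, if suboptimal, conceal-after-committing strategy), and the auctioneer only ever reveals commitments it explicitly computed --- so this is a safe, reasonable deviation. By the resolution rule (Observation~\ref{obs:easy}), when $v_1\ge\beta_1$ the real buyer wins, pays $\beta_{k(v_1)}:=\max\{\beta_k:\beta_k\le v_1\}$, and receives a fine $c$ from the auctioneer for each of the $m-k(v_1)$ concealed bids; when $v_1<\beta_1$ a fake buyer ``wins'' and the auctioneer nets zero.

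The crux is the revenue estimate. On $\{v_1\in[\beta_k,\beta_{k+1})\}$ the auctioneer nets $\beta_k-(m-k)c\ge\beta_k-mc\ge\tfrac12\beta_k$, since $\beta_k\ge\beta_1\ge 2mc$. Using $\Pr_{D_0}[v_1\ge v]\sim L/v$ on $[\beta_1,\infty)$, for $\beta_1$ large enough $\Pr[v_1\in[\beta_k,\beta_{k+1})]=\Pr[v_1\ge\beta_k]-\Pr[v_1\ge\beta_{k+1}]\ge \tfrac{L}{4\beta_k}$, so this event contributes at least $\tfrac12\beta_k\cdot\tfrac{L}{4\beta_k}=L/8$ to the deviation's expected revenue. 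Summing over the $m$ rungs, the deviation earns at least $mL/8\to\infty$, while the honest revenue stays $1$; hence for any $\varepsilon<1$, choosing $m>8/(L(1-\varepsilon))$ makes the honest revenue strictly less than a $(1-\varepsilon)$ fraction of the deviation's, so $\dra(f)$ is not $\varepsilon$-credible. This also isolates the role of the non-$\alpha$-strong-regularity: if $D_0$ were $\alpha$-strongly regular its tail would decay strictly faster than $1/v$, so $E[v_1]<\infty$ would upper-bound the revenue of \emph{any} deviation, consistent with Proposition~\ref{prop:alphaone}.

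The step I expect to be the main obstacle is making the fines negligible uniformly over all $f$. Because the per-concealed-bid fine $c=f(m+1,D_0)$ grows with $m$ and $f$ is arbitrary, the fake-bid ladder must be committed at a height that depends on $f$: starting it at $\beta_1\ge 2m\,f(m+1,D_0)$ is exactly what guarantees that the total fine $(m-k)c$ on any realization is at most half the price $\beta_k$ the real buyer pays, so every rung yields positive net revenue; and one must check that inflating $\beta_1$ does not kill the per-rung contribution --- which it does not, precisely because $v\cdot\Pr_{D_0}[v_1\ge v]\to L>0$ rather than to $0$. A secondary point is to confirm the deviation is genuinely \emph{safe}: the real buyer detects nothing because each fake buyer is assigned the true distribution $D_0$ and a truthful-looking bid, and a concealed round-two message is consistent with some valid (if suboptimal) bidder strategy.
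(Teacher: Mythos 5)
Your proposal is correct and takes essentially the same route as the paper: the paper uses the exact equal-revenue curve and a ladder of fake bids $b_i = n^{2i}L_n$, revealing selectively so that each rung contributes roughly constant revenue while total expected fines are $O(1/n)$, yielding unboundedly more than the honest revenue of $1$. Your variant (an ER-like regular distribution with tail $\sim L/v$, a geometric ladder started above $2m\,f(m+1,D_0)$ so fines are pointwise at most half the payment) is just a cosmetic repackaging of the same argument, and the details check out.
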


\begin{theorem}\label{thm:alphamhrneg}
For all $f(\cdot,\cdot)$, all $\alpha < 1$, and all $n>1$, there exists an unbounded $D_0$ that is $\alpha$-strongly regular such that $\dra(f)$ is \emph{not} computationally credible for the instance $D :=\times_{i=1}^n D_0$.
\end{theorem}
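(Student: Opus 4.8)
The plan is to show that for any fine function $f$, any $\alpha < 1$, and any $n > 1$, the auctioneer has a profitable safe, reasonable deviation against a carefully chosen $\alpha$-strongly regular distribution. The key tension is the same one exploited in Theorem~\ref{thm:regular}: the auctioneer wants to commit to fake bids that, when revealed, extract more payment from a high-value real winner, but conceal them when the real bids are low so as not to overshoot. Since the fine $f(n, D_0)$ is a fixed (finite) number once $n$ and $D_0$ are fixed, the strategy is to push the tail of $D_0$ far enough out that the expected gain from selectively revealing a fake bid on the high-value event dwarfs the fine paid on the low-value event. The crucial new difficulty relative to Theorem~\ref{thm:regular} is that here $\alpha > 0$, so the virtual value grows at least linearly, which constrains how heavy the tail of $D_0$ can be; we must verify the construction still works within the $\alpha$-strongly regular class.

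\textbf{Step 1: The deviation.} Fix the instance $D := \times_{i=1}^n D_0$ for a distribution $D_0$ to be chosen. The auctioneer plays against the real buyers $1, \dots, n$ truthfully, but additionally creates one fake buyer with some fixed fake bid $b^*$ (committed in round one, with distribution label $D_0$), forwarded to the real winner. After the real bids are revealed, the auctioneer reveals $b^*$ if and only if the highest real ironed virtual value exceeds some threshold $t$ (equivalently, $v_{i^*} > t$ for the real winner $i^*$), and conceals otherwise, paying the fine $f(n+1, D_0)$ to $i^*$ in that case (note $i^*$ is real, so this is a pure transfer out of the auctioneer's revenue). Since all real buyers' personal views are consistent with an honest run in which an extra truthful bidder from $D_0$ participated (or dropped out between rounds), this is safe and reasonable.

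\textbf{Step 2: Revenue accounting.} On the event that $b^*$ is revealed and $\bar\varphi_0(b^*)$ sits between the top two real ironed virtual values, the auctioneer gains roughly $\bar\varphi_0^{-1}(\bar\varphi_0(b^*)) - (\text{Myerson price})$ — an additive boost on a positive-probability event. On the complementary low-value event the auctioneer loses exactly $f(n+1, D_0) \cdot \Pr[\text{conceal}]$, a fixed quantity. I would choose $D_0$ from a parametrized family (e.g. the standard $\alpha$-strongly regular family with $\varphi^{D_0}(v) = \alpha v - c$ for appropriate $c$, whose quantile function is a power law; cf.\ the family used to define $\alpha$-strong regularity) and select $b^*$ and $t$ far out in the tail. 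As the tail parameter grows, the gain term grows without bound (the price increment on the high event scales with the tail), while the fine $f(n+1,D_0)$ is pinned, so for a suitable choice the expected gain strictly exceeds the expected loss. Because $n > 1$, there genuinely are two real bidders, so with positive probability the second-highest real virtual value is positive and the inserted fake bid can be slotted strictly between the top two — this is exactly why the statement requires $n > 1$ (with $n = 1$ the fake bid only ever acts as a reserve and Proposition~\ref{prop:alphaone} applies).

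\textbf{Step 3: Verifying $\alpha$-strong regularity and closing.} Finally I would check that every distribution in the chosen family is genuinely $\alpha$-strongly regular and unbounded, and assemble the inequality: honest revenue is $\rev(D)$, the deviation's revenue is $\rev(D) + (\text{gain}) - (\text{loss})$ with $\text{gain} - \text{loss} > 0$, so $\dra(f)$ is not computationally credible for this instance. The main obstacle I anticipate is Step 2: making the gain-versus-fine comparison quantitative while staying inside the $\alpha$-strongly regular class. Unlike the regular case, the linear lower bound on virtual value growth means the price increment from a tail fake bid does not blow up as violently, so I need to check the tail of the power-law family is still heavy enough that the gain term genuinely diverges (or at least eventually beats any fixed fine) as the family parameter is pushed out — and confirm this does not secretly force $\alpha \to 0$. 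I expect this to go through precisely because the threshold $\varepsilon$-credibility bound of Theorem~\ref{thm:alphamhr} degrades with $n$ in a way that must become vacuous as $\varepsilon \to 0$, and this construction is the witness.
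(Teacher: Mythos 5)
There is a genuine gap in Step 2, and it is exactly the obstacle you flag but do not resolve. Your profit mechanism is ``push $b^*$ and $t$ far out in the tail so the gain grows without bound while the fine stays pinned.'' For $\alpha$-strongly regular distributions with $\alpha>0$ this cannot work: by Lemma~\ref{lem:alphapricing}, $p\cdot\Pr_{v\leftarrow D_0}[v\geq p]$ decays like $p^{-\alpha/(1-\alpha)}$, so the expected boost from a single far-out fake bid (at most $b^*$, collected only on the event that some real value clears $t\geq b^*$) is at most on the order of $n\,t^{-\alpha/(1-\alpha)}$, which \emph{shrinks to zero} as you push the tail parameter out --- this is precisely the feature that makes Theorem~\ref{thm:alphamhr} possible and distinguishes $\alpha>0$ from the equal-revenue argument of Theorem~\ref{thm:regular}. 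Meanwhile your conceal rule (conceal whenever $v_{i^*}\leq t$) pays the fine with probability close to $1$, so the loss is essentially the full fixed fine. Thus with a fixed $b^*$ and a fixed threshold the deviation is eventually \emph{unprofitable}, not profitable. A secondary issue: forwarding the fake commitment ``to the real winner'' is not an implementable safe deviation, since what is forwarded to bidder $i$ may depend only on $\vec{b}_{-i}$, and the winner's identity is not known before you must forward to them; you must target a fixed bidder.

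The paper's construction repairs exactly these points by making the fake bid \emph{adaptive and small}: the auctioneer interacts honestly with bidders $\neq 1$, and only when $b_{j^*}:=\max_{j\neq 1}b_j$ exceeds a large threshold $T$ does it forward to bidder $1$ one extra commitment to $b=b_{j^*}+k$, where $k=f(n+1,D_0)$. It reveals $b$ if $v_1\geq b$ (gaining exactly $k$ over the honest price $v_{j^*}$) and conceals if $v_1\in(T,b)$ (losing exactly $k$ in fines). Since gain and loss per occurrence are both exactly $k$, profitability reduces to comparing $\Pr[v_1>v_{j^*}+k]$ with $\Pr[v_1\in(v_{j^*},v_{j^*}+k)]$, and because the witness distribution $F^\alpha(v)=1-v^{-1/(1-\alpha)}$ has hazard rate $\frac{1}{(1-\alpha)v}$ decaying in $v$, the window probability is at most $\frac{k}{(1-\alpha)v_{j^*}}$ of the conditional mass; once $v_{j^*}>\frac{2k}{1-\alpha}$ the gamble is strictly favorable. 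This is also why $n>1$ is needed: the second real bidder supplies the random high base price on which to condition, so the fine is risked only on a rare event and the cheat piggybacks on an already-high honest payment --- with $n=1$ the fake bids must be pre-committed and Proposition~\ref{prop:alphaone} shows the fine deters them. Your proposal would need to be rebuilt around this adaptive, second-order-statistic-conditioned deviation; as written, the central quantitative claim fails.
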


Before continuing, let us parse the results, which clearly distinguish between MHR, $\alpha$-strongly MHR, and regular distributions. On one extreme, \dra\ works as well as could be hoped for when all distributions are MHR: there is a fine which is \emph{independent of the number of buyers} which suffices to ensure that \dra\ is computationally credible. On the other extreme, \dra\ does not work well at all for arbitrary regular distributions: even when $n=1$, there may not exist a sufficiently large fine to discourage the auctioneer from cheating, and cheating may yield unboundedly more revenue than honesty. In the middle, we see that Theorem~\ref{thm:alphamhr} does not distinguish between different values of $\alpha \in (0,1)$. In this range, positive results are possible, but not quite so strong as for MHR distributions. Moreover, the positive results we prove {for $\dra(f)$} are tight.

We conclude this section by revisiting our simple example under \dra\ instead of \strawman. Section~\ref{sec:regular} follows immediately afterwards, and proves Theorem~\ref{thm:regular} (perhaps unsurprisingly, the witness $D_0$ is the equal-revenue curve). This will give an intuition for the technical challenges, and why stronger assumptions are necessary to have the positive results in Theorems~\ref{thm:mhr} and~\ref{thm:alphamhr}, whose proofs follow in Sections~\ref{sec:mhr} and~\ref{sec:strong-regularity}. 

\begin{example} Consider that there is a single (real) buyer, whose value is drawn from $D_1$, which is the uniform distribution on $\{1,2\}$. Let also $f(n,D_1):=1$ for all $n$. Consider now the auction $\dra(f)$. The auctioneer will get expected revenue $1$ by being honest and not submitting any fake bids (which is optimal among all strategyproof auctions). Instead, the auctioneer could submit any number of fake bids. It is clear that it only makes sense to submit fake bids of $2$, and also that it is unnecessary to submit multiple fake bids of the same value. 

In order to be a reasonable deviation, if the auctioneer submits a fake bid of $b_2 = 2$, then after buyer $1$ reveals in round $2$, the auctioneer can either reveal $b_2 = 2$, or conceal. In order to be a safe deviation, the auctioneer must set a price of $b_2$ to buyer $1$ if they reveal, and set a price of $1$ otherwise. In particular, observe that while the auctioneer can guarantee revenue $2$ when $b_1=2$ (by revealing), the best revenue they can guarantee when $b_1=1$ is $0$. If they reveal, then they pay no fines but also receive no payment. If they conceal, then they get payment of $1$, but also pay a fine of $1$, for a net payment of $0$. Therefore, no matter what strategy the auctioneer uses, they get revenue at most $1$ in expectation, the same as being honest.

Observe that if we only consider \emph{safe} (but unreasonable) deviations, then the auctioneer could commit to $b_2 = 2$, but reveal instead a commitment to $b_2 = 1$ when $b_1 = 1$. Of course, doing so would require breaking the cryptographic commitment scheme, an event that can be made less likely than the inverse number of atoms in the universe. So this mechanism is not \emph{credible}, but only \emph{computationally credible}, and this example highlights the distinction. 
\end{example}

\section{Example: DRA on Regular Distributions}\label{sec:regular}
In this section, we prove Theorem~\ref{thm:regular}. The main intuition is that the equal-revenue curve is so heavy-tailed that no matter how big the fines are, there are always some sufficiently-high fake bids that the auctioneer can set to extract additional revenue while barely ever paying the fine.

\begin{proof}[Proof of Theorem~\ref{thm:regular}] Let $D_0$ denote the equal-revenue distribution, which has CDF $1-1/x$ on $[1,\infty)$. The optimal revenue that the seller can achieve by earnestly running a truthful auction for one bidder drawn from $D_0$ is $1$. Consider now any fine function $f(\cdot,\cdot)$, and simply refer to $L_n:=f(n+1,D_0)$ as the fine the seller must pay per hidden fake bid, if they submit $n$ fake bids. We show in fact that for all $f(\cdot,\cdot)$, there not only exists a safe, reasonable deviation which achieves revenue $>1$, but also one that achieves revenue $> r$ \emph{for any $r$}.

For a given $r$, let $n\geq r+2$. Consider now the following construction of fake bids: Set $b_i:=n^{2i}\cdot L_n$ for all $i \in [n]$. For simplicity of notation, define $b_{n+1}:=\infty$. The seller's strategy is then:
\begin{itemize}
\item Commit to a bid $b_i$ for all $i \in [n]$.
\item When the bidder's bid $b$ is revealed:
\begin{itemize}
\item If $b < b_1$, reveal all bids.
\item Otherwise, if $b \in [b_i,b_{i+1})$, reveal bids $b_1,\ldots, b_i$, and conceal $b_{i+1},\ldots, b_n$. 
\end{itemize}
\end{itemize}

We now want to compute the seller's expected revenue for this strategy. We do this by first upper bounding the total expected \emph{fines} that the seller will pay. 

\begin{claim}\label{claim:reg1} The total expected fines paid by the seller in expectation is at most $1/n$. 
\end{claim}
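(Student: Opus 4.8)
The plan is to bound the expected fine by computing, for each $i \in [n]$, the expected number of concealed fake bids and then summing the fines. Recall the seller conceals bids $b_{i+1}, \ldots, b_n$ exactly when the real bid $b$ lands in $[b_i, b_{i+1})$ (and conceals all of $b_1,\ldots,b_n$ when... wait, no: when $b<b_1$ the seller reveals everything, so zero fines in that event). So the key observation is: the seller conceals a fake bid $b_j$ (for $j \geq 2$) precisely when $b \geq b_{j-1}$, i.e.\ when $b < b_j$ is irrelevant — conditioning on the event $b \in [b_i, b_{i+1})$ for some $i < j$. Thus the bid $b_j$ is concealed with probability $\Pr[b \geq b_{j-1}] = \Pr_{v \leftarrow D_0}[v \geq b_{j-1}] = 1/b_{j-1}$ (using the equal-revenue CDF $1 - 1/x$).

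First I would write the total fine paid as $L_n$ times the number of concealed fake bids, then take expectations: the expected total fine is $L_n \cdot \sum_{j=2}^{n} \Pr[b_j \text{ is concealed}] = L_n \cdot \sum_{j=2}^n \frac{1}{b_{j-1}}$. (The bid $b_1$ is never concealed in this strategy, since if $b\geq b_1$ we reveal $b_1,\ldots,b_i$ which includes $b_1$, and if $b<b_1$ we reveal all — so $b_1$ is always revealed, contributing nothing; actually one should double-check $b_1$ is concealed with probability $0$, which it is.) Substituting $b_{j-1} = n^{2(j-1)} L_n$ gives expected total fine $= L_n \cdot \sum_{j=2}^{n} \frac{1}{n^{2(j-1)} L_n} = \sum_{j=2}^n n^{-2(j-1)} = \sum_{k=1}^{n-1} n^{-2k}$, a geometric sum bounded by $\frac{n^{-2}}{1 - n^{-2}} \leq \frac{1}{n^2 - 1} \leq \frac{1}{n}$ for $n \geq 2$. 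This establishes the claimed bound of $1/n$.

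The only subtlety — and the thing I would be most careful about — is the bookkeeping of \emph{which} fake bids get concealed in each event, and making sure the probabilities $\Pr_{v\leftarrow D_0}[v \geq b_{j-1}] = 1/b_{j-1}$ are applied with the correct index (it is $b_{j-1}$, the threshold one step below $b_j$, not $b_j$ itself). One should also confirm that $L_n \geq 1$ (or at least that $b_1 = n^2 L_n \geq 1$ so the probabilities are well-defined on the support $[1,\infty)$); since $f$ is a fine function one expects $L_n > 0$, and even if $L_n$ were tiny the bids $b_i = n^{2i}L_n$ could dip below $1$ — to be safe I would note we may assume $L_n \geq 1$ without loss of generality (a smaller fine only helps the seller, so proving non-credibility for $L_n \geq 1$ suffices, or one rescales). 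Modulo that caveat, the computation is a short geometric-series estimate and I do not anticipate a genuine obstacle here; the real work of the theorem comes in the next step, lower-bounding the revenue gained, which this claim feeds into.
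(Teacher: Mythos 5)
Your identification of the concealment event is wrong, and this is where the argument breaks. Under the stated strategy, the fake bid $b_j$ is concealed exactly when $b \in [b_i, b_{i+1})$ for some $i < j$, i.e.\ exactly when $b \in [b_1, b_j)$; if $b \geq b_j$ then $b_j$ is among the \emph{revealed} bids $b_1,\ldots,b_i$. So the concealment probability of $b_j$ is $1/b_1 - 1/b_j \approx 1/(n^2 L_n)$ for every $j \geq 2$, not $\Pr[b \geq b_{j-1}] = 1/b_{j-1}$ as you claim. Your value understates the true probability by a factor of roughly $n^{2(j-2)}$ for $j \geq 3$ (and your event even includes $b \geq b_j$, where $b_j$ is revealed, while excluding $b \in [b_1,b_{j-1})$, where it is concealed). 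Since you assert an exact probability that is far below the truth, the geometric-series total you compute (about $1/n^2$) is not a valid upper bound on the expected fines; it happens to be smaller than $1/n$, but the derivation does not establish the claim.

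The repair is easy and keeps your per-bid bookkeeping: the expected fines are $L_n \sum_{j=2}^n \left(1/b_1 - 1/b_j\right) \leq (n-1) L_n / b_1 = (n-1)/(n^2) \leq 1/n$. The paper argues even more coarsely, at the level of the whole event rather than per bid: fines are paid only when $b \geq b_1$, which happens with probability at most $1/b_1 = 1/(n^2 L_n)$, and conditioned on that the total fine is at most $n L_n$ (there are only $n$ fake bids), giving $1/n$. Your side remarks about $L_n$ and bids dipping below $1$ are not needed for this claim, since $\Pr_{v\leftarrow D_0}[v \geq x] \leq \min\{1,1/x\}$ makes both bounds go through regardless (the support issue only matters for the revenue computation in the next claim).
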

\begin{proof}
Observe that the seller only ever pays a fine when $b > b_1$. Because $b$ is drawn from the equal-revenue curve, this occurs with probability at most $1/b_1 = 1/(n^2L_n)$. Moreover, the seller submits only $n$ fake bids, and therefore the total fine they pay, conditioned on paying a fine at all, is at most $n L_n$. Therefore, the total expected fines paid by the seller in expectation is at most $1/n$.
\end{proof}

Next, we show that the seller's expected payment received by the buyer is still large.
\begin{claim}\label{claim:reg2} The expected revenue that the seller receives is at least $n-1/n$.
\end{claim}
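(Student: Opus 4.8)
The plan is to lower-bound the expected price the seller collects from the single real buyer under the deviation above, by conditioning on where the buyer's revealed bid $b$ lands among the fake thresholds $b_1 < b_2 < \cdots < b_n < b_{n+1} := \infty$. Fix $i \in [n]$ and condition on $b \in [b_i, b_{i+1})$. By construction the seller then reveals exactly the fake bids $b_1,\ldots,b_i$ and conceals $b_{i+1},\ldots,b_n$; since $b_1 < \cdots < b_i \le b$ (the boundary event $b = b_i$ has probability zero), the real buyer is the unique highest revealed bid, so the real buyer wins the item. The key fact to establish is that the winning real buyer is then charged exactly $b_i$: revealing $b_1,\ldots,b_i$ is safe and reasonable because it is consistent with the real buyer facing opponents who bid $b_1,\ldots,b_i$ and then $0$'s, and on those revealed effective bids the $\dra$ resolution is a second-price-with-reserve rule with reserve $r(D_0)=1$, so the winner pays the second-highest revealed effective bid, which the seller has arranged (through the reported distributions of the fake bidders) to equal $b_i$, and the reserve does not bind since $b_i > 1$. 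Here one appeals to Observation~\ref{obs:easy} and the effective-bid formalism both to confirm the real buyer wins and to confirm the charge is $b_i$ rather than something smaller.

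Next I would compute the probabilities. Since $b$ is drawn from the equal-revenue curve, $\Pr_{b \sim D_0}[b \ge x] = 1/x$ for all $x \ge 1$, hence $\Pr[b \in [b_i,b_{i+1})] = \tfrac{1}{b_i} - \tfrac{1}{b_{i+1}}$ for every $i \in [n]$. Dropping the (nonnegative) contribution of the event $b < b_1$, the seller's expected payment received is therefore at least
\[
\sum_{i=1}^{n} \left(\frac{1}{b_i} - \frac{1}{b_{i+1}}\right) b_i = \sum_{i=1}^{n} \left(1 - \frac{b_i}{b_{i+1}}\right).
\]
Substituting $b_i = n^{2i} L_n$ gives $b_i/b_{i+1} = n^{-2}$ for $i \le n-1$ and $b_n/b_{n+1} = 0$, so the sum equals $(n-1)(1 - n^{-2}) + 1 = n - (n-1)/n^2 \ge n - 1/n$, which is the claimed bound. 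Combined with Claim~\ref{claim:reg1}, the net revenue of the deviation is at least $n - 2/n$, which exceeds any fixed target $r$ once $n$ is large, completing the proof of Theorem~\ref{thm:regular}.

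The main obstacle is the first step: carefully verifying, from the $\dra(f)$ resolution under an arbitrary safe and reasonable deviation, that when the seller reveals $b_1,\ldots,b_i$ the real buyer not only wins but is charged exactly $b_i$ (and not merely the reserve, nor the effective bid of some lower revealed fake bid). Once that is pinned down, everything else is an elementary heavy-tail computation driven entirely by the $1/x$ tail of the equal-revenue distribution: a fake bid as large as $n^{2i} L_n$ is revealed-and-collected only with probability $\approx n^{-2i}/L_n$, yet it contributes $\Theta(1)$ to the expected payment, and there are $n$ such bids.
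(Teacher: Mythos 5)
Your proposal is correct and follows essentially the same route as the paper: condition on the interval $[b_i,b_{i+1})$ containing the buyer's bid, note the buyer then pays $b_i$, use $\Pr[b\in[b_i,b_{i+1})]=\tfrac{1}{b_i}-\tfrac{1}{b_{i+1}}=(1-1/n^2)/b_i$ from the equal-revenue tail, and sum to get at least $n-1/n$. The only difference is presentational: you make explicit (and slightly tighten, via the $i=n$ term) the payment-equals-$b_i$ step that the paper simply asserts.
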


\begin{proof}
We compute the probability that the buyer pays exactly $b_i$, for all $i$. Observe that the buyer pays exactly $b_i$ whenever $b\in [b_i, b_{i+1}]$ which occurs with probability exactly $1/b_i - 1/b_{i+1}$, because $b$ is drawn from the equal-revenue curve. As $b_{i+1} = n^2 b_i$, this probability is exactly $(1-1/n^2)/b_i$ (or this is a lower bound, when $i = n$). Therefore, the expected revenue can be written as:
$$\sum_{i=1}^n b_i \cdot \Pr[\text{buyer pays exactly $b_i$}] \geq \sum_i b_i \cdot (1-1/n^2)/b_i = n-1/n.$$ 
\end{proof}

Claims~\ref{claim:reg1} and~\ref{claim:reg2} together establish that the seller achieves expected revenue at least $n-2/n \geq r$, as desired.
\end{proof}

The key feature of the equal-revenue curve which drives the proof of Theorem~\ref{thm:regular} is that for all probabilities $p$, there exists an optimal reserve which is exceeded with probability at most $p$. This allowed us to set extremely high ``reserves'', to get revenue as if we are setting each of these reserves independently, while also paying fines so extremely rarely that it barely matters. In Appendix~\ref{app:regular}, we show that it is really a condition like this which drives Theorem~\ref{thm:regular}, and not just that the equal-revenue curve has infinite expectation (by providing an example of a distribution with infinite expectation and a choice of $f$ for which $\dra(f)$ is computationally $\varepsilon$-credible for that distribution). We will also try to use this as intuition when explaining our (more technical) proofs for the MHR and $\alpha$-strongly regular cases.

\section{DRA is Credible for MHR Distributions}\label{sec:mhr}
In this section, we consider the performance of $\dra$ on MHR distributions. Drawing intuition from what drove the proof of Theorem~\ref{thm:regular}, the key feature which enables a strong positive result for (even unbounded) MHR distributions is that the revenue generated by reserves significantly above the optimal reserve shrinks exponentially fast. 

Recall from Section~\ref{sec:dra} that $\beta_i$ denotes the effective commitment to buyer $i$, and that it is a function of $\vec{b}_{-i}$. Our analysis breaks down the expected revenue achieved by the seller using any round one strategy into two terms: revenue from cases where there exists an $i$ such that $b_i > \beta_i$, and revenue when all $i$ satisfy $b_i < \beta_i$. The first case, which we proceed with now, has similarities to the analysis in Section~\ref{sec:dra}, but is more precise so that it can be combined with the second case. 

\begin{lemma}\label{lem:onebig}
For any $D$, $f$, consider any strategy of the seller, which is a safe, reasonable deviation to $\dra(f)$. Let $R(\vec{b})$ denote the revenue achieved by the seller on bids $\vec{b}$ using this strategy. Then:

$$\mathbb{E}_{\vec{v} \leftarrow D}[R(\vec{v}) \cdot I(\exists\ i, v_i > \beta_i)] \leq \mathbb{E}_{\vec{v} \leftarrow D}[\max_j \{\varphi_j(v_j)\} \cdot I(\exists\ i, v_i > \beta_i)].$$
\end{lemma}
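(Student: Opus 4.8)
The plan is to upper-bound the left-hand side by splitting on \emph{which} bidder realizes $v_i > \beta_i$, to observe that for such a bidder the seller's revenue is at most $\beta_i$, and then to exploit the fact that $\beta_i$ was fixed before the seller could learn anything about $v_i$: via Myerson's payment identity this turns a ``posted-price'' bound into an expected virtual value.

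First I would isolate the winner. Since $\gamma_i \le \beta_i$ by definition, $\{v_i > \beta_i\} \subseteq \{v_i > \gamma_i\}$, and by Observation~\ref{obs:easy} the events $\{v_i > \gamma_i\}$ are pairwise disjoint (at most one bidder ever wins). Hence $I(\exists\, i,\ v_i > \beta_i) = \sum_i I(v_i > \beta_i)$, and on the event $\{v_i > \beta_i\}$ bidder $i$ is exactly the winner. Now a winning truthful bidder $i$ is charged $\gamma_i$, and the only other money changing hands is the concealment fine, which --- because a truthful bidder always reveals, so every concealed commitment belongs to a fake bidder --- is paid by the seller \emph{to} bidder $i$. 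Thus on $\{v_i > \beta_i\}$ we have $R(\vec v) \le \gamma_i \le \beta_i$, so $\mathbb{E}[R(\vec v)\,I(v_i > \beta_i)] \le \mathbb{E}[\beta_i\,I(v_i > \beta_i)]$.

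Next I would use that $\beta_i$, as recorded in the definition of the effective commitment, is a function of $\vec b_{-i} = \vec v_{-i}$ alone --- this is precisely where the hiding property of $\commit$ is invoked, since the seller's round-one choices cannot depend on $b_i$. Conditioning on $\vec v_{-i}$ (and on the seller's internal randomness), $\beta_i$ becomes a fixed number $t$, and $v_i \sim D_i$ is an independent fresh draw. The expected revenue of posting price $t$ to a lone bidder with value $\sim D_i$ equals both $t\Pr_{v_i}[v_i > t]$ and, by Theorem~\ref{thm:myerson}, $\mathbb{E}_{v_i}[\varphi_i(v_i)\,I(v_i > t)]$; therefore $\mathbb{E}[\beta_i\,I(v_i > \beta_i)] = \mathbb{E}[\varphi_i(v_i)\,I(v_i > \beta_i)]$ after removing the conditioning. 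Finally, since trivially $\varphi_i(v_i) \le \max_j\{\varphi_j(v_j)\}$, summing over $i$ and using $\sum_i I(v_i > \beta_i) = I(\exists\, i,\ v_i > \beta_i)$ gives the claim.

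The hard part --- really the only step with content --- is the reduction to a posted price: one must first notice that a \emph{pointwise} bound $R(\vec v) \le \max_j\{\varphi_j(v_j)\}$ is simply false (the winner can be charged close to $v_i$ while $\varphi_i(v_i)$ is much smaller, or even negative), so the inequality can only hold in expectation, and it does so only because $\beta_i$ plays the role of a $v_i$-independent reserve. The rest is bookkeeping: checking that discarding the fines and replacing the actual price $\gamma_i$ by $\beta_i$ are both in the upper-bounding direction, and that truthful bidders cannot be framed as concealers.
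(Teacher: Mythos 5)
Your proof is correct and follows essentially the same route as the paper's: bound the seller's revenue on the event $\{v_i > \beta_i\}$ by the ($v_i$-independent) effective commitment $\beta_i$, convert $\mathbb{E}[\beta_i\,I(v_i>\beta_i)]$ into expected virtual value via Myerson's lemma, and finish with $\varphi_i(v_i)\le\max_j\{\varphi_j(v_j)\}$ plus disjointness of the winning events from Observation~\ref{obs:easy}. The only cosmetic difference is that you bound $R(\vec v)\le\gamma_i\le\beta_i$ pointwise and apply Myerson bidder-by-bidder after conditioning on $\vec v_{-i}$, whereas the paper packages the same step as a single monotone truthful mechanism (allocate to $i$ iff $v_i>\beta_i$, charge $\beta_i$) and applies Myerson's lemma once.
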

\begin{proof}
Observe first that by Observation~\ref{obs:easy}, whenever there exists an $i$ such that $v_i > \beta_i$, there is a unique such $i$ (otherwise, each such $i$ certainly satisfies $v_i > \gamma_i$ as $\gamma_i \leq \beta_i$, which contradicts Observation~\ref{obs:easy}). So consider the allocation rule which awards the item to bidder $i$ if and only if $v_i > \beta_i$, and charges them $\beta_i$. Observe first that the expected revenue of this allocation rule is \emph{at least} $\mathbb{E}_{\vec{v} \leftarrow D}[R(\vec{v}) \cdot I(\exists\ i, v_i > \beta_i)]$. Indeed, if the seller chooses to reveal all commitments sent to bidder $i$, then this will be exactly the expected revenue. If the seller (sub-optimally) chooses instead to conceal some commitments, they simply pay additional fines and get less revenue.

Importantly, observe also that this allocation rule is monotone, as $\beta_i$ doesn't depend on $v_i$. Moreover, observe that this allocation/payment rule is truthful. Therefore, Myerson's Lemma implies that its expected revenue is exactly its expected virtual surplus, and its expected virtual surplus is exactly $\mathbb{E}_{\vec{v} \leftarrow D}[\sum_i \varphi_i(v_i) \cdot I(v_i > \beta_i)]$. This gives the following chain of inequalities:

\begin{align*}
\mathbb{E}_{\vec{v} \leftarrow D}\left[R(\vec{v}) \cdot I(\exists\ i, v_i > \beta_i)\right]&\leq \mathbb{E}_{\vec{v} \leftarrow D}\left[\sum_i \beta_i\cdot I(v_i > \beta_i)\right]\\
&= \mathbb{E}_{\vec{v} \leftarrow D}\left[\sum_i \varphi_i(v_i) \cdot I(v_i > \beta_i)\right]\\
 &\leq \mathbb{E}_{\vec{v} \leftarrow D}\left[\max_j \{\varphi_j(v_j)\} \cdot I(\exists\ i, v_i > \beta_i)\right].
\end{align*}

The first line follows from the reasoning in the first paragraph: $\dra(f)$ will charge bidder $i$ at most $\beta_i$ when they win. The second line is just Myerson's lemma. The final line is just upper bounding a particular virtual value with the maximum virtual value and uses Observation~\ref{obs:easy} to conclude that no more than one indicator variable in the sum can be non-zero.
\end{proof}

The second step is now to bound the optimal revenue the seller can get from cases where $v_i < \beta_i$ for all $i$. The following technical lemma will be a crucial step in this part of the analysis. Intuitively, Lemma~\ref{lem:mhr} states that the expected \emph{value} of a draw from an MHR distribution, conditioned on being large, is not much more than its expected \emph{virtual value} under the same conditioning. Below, recall that we defined $r(D_0)$ to be the Myerson reserve of $D_0$.

\begin{lemma}\label{lem:mhr}
Let $D_0$ be MHR. Let $E$ be any event such that $\Pr_{v \leftarrow D_0}[v \geq r(D_0)|E] = 1$. Then:
$$\mathbb{E}_{v \leftarrow D_0}[v| E] \leq \mathbb{E}_{v \leftarrow D_0}[\varphi(v)|E] + r(D_0).$$

Equivalently, $\mathbb{E}_{v \leftarrow D_0}[v\cdot I(E)] \leq \mathbb{E}_{v \leftarrow D_0}[\varphi(v)\cdot I(E)] + r(D_0)\cdot \Pr[E]$.
\end{lemma}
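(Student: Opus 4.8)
The plan is to reduce the claim to a pointwise bound on $v - \varphi(v)$. Recalling that $\varphi(v) = v - \frac{1-F(v)}{f(v)} = v - 1/h(v)$, the asserted inequality $\mathbb{E}_{v \leftarrow D_0}[v \mid E] \leq \mathbb{E}_{v \leftarrow D_0}[\varphi(v) \mid E] + r(D_0)$ is equivalent to $\mathbb{E}_{v \leftarrow D_0}[1/h(v) \mid E] \leq r(D_0)$. Since the hypothesis on $E$ forces $v \geq r(D_0)$ with probability one under the conditioning, it suffices to prove the deterministic statement that $1/h(v) \leq r(D_0)$ for every $v \geq r(D_0)$ in the support of $D_0$; taking expectations over $E$ then finishes the first displayed inequality, and the ``equivalently'' form is just this multiplied through by $\Pr[E]$.

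First I would invoke the MHR hypothesis directly: $h(\cdot)$ is non-decreasing, hence $1/h(\cdot)$ is non-increasing, so $1/h(v) \leq 1/h(r(D_0))$ for all $v \geq r(D_0)$. Second, I would show $1/h(r(D_0)) \leq r(D_0)$ via the characterization of the Myerson reserve. If $r(D_0)$ lies in the interior of the support, then $\varphi(r(D_0)) = 0$, i.e.\ $r(D_0) = 1/h(r(D_0))$ exactly. If instead $r(D_0)$ is the left endpoint of the support (the case where the optimal auction imposes no binding reserve), then $\varphi(r(D_0)) \geq 0$, which rearranges to $1/h(r(D_0)) \leq r(D_0)$. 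In either case $1/h(v) \leq 1/h(r(D_0)) \leq r(D_0)$ on the conditioning region, as needed.

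I do not anticipate a serious obstacle: the lemma is, at heart, the observation that for an MHR distribution $v - \varphi(v) = 1/h(v)$ is non-increasing in $v$ and is at most $r(D_0)$ at the reserve. The only subtlety worth flagging in the write-up is the boundary case where the reserve sits at the bottom of the support (or at $0$), which is why the key step should be stated as the inequality $1/h(r(D_0)) \leq r(D_0)$ rather than an equality; it is also worth noting in passing that MHR implies regularity, so $\varphi$ is monotone and the reserve is well defined, which is what legitimizes the monotonicity argument for $1/h$.
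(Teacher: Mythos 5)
Your proof is correct and is essentially the paper's argument: both reduce to the pointwise bound $v - \varphi(v) \leq r(D_0)$ whenever $v \geq r(D_0)$ and then take (conditional) expectations — the paper gets this in one line from its definition of MHR as $1$-strong regularity ($\varphi(v) - \varphi(r(D_0)) \geq v - r(D_0)$ together with $\varphi(r(D_0)) = 0$), while you phrase the same fact through the equivalent statement that $v - \varphi(v) = 1/h(v)$ is non-increasing and at most $r(D_0)$ at the reserve. Your explicit handling of the boundary case where the reserve sits at the bottom of the support (so only $\varphi(r(D_0)) \geq 0$) is a harmless refinement that the paper's write-up elides.
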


\begin{proof}
Recall that because $D_0$ is MHR, and $v \geq r(D_0)$, we have that $\varphi(v) - \varphi(r(D_0)) \geq v - r(D_0)$ whenever event $E$ occurs. Recalling that $\varphi(r(D_0)) = 0$ by definition, this rearranges to $v \leq r(D_0) + \varphi(v)$. We then immediately conclude:
\begin{align*}
\mathbb{E}_{v \leftarrow D_0}[v| E] &\leq \mathbb{E}_{v \leftarrow D_0}[\varphi(v)+ r(D_0)|E]\\
&=  \mathbb{E}_{v \leftarrow D_0}[\varphi(v)|E] + r(D_0).
\end{align*}
\end{proof}

\begin{corollary}\label{cor:nonebig} Let each $D_i$ be MHR, and consider $\dra(f)$ where $f(n,D_i)=r(D_i)$ for all $i$. Consider any strategy of the seller, which is a safe, reasonable deviation, and let $R(\vec{b})$ denote the revenue achieved by the seller on bids $\vec{b}$ using this strategy. Then:
$$\mathbb{E}_{\vec{v} \leftarrow D}[R(\vec{v}) \cdot I(\forall\ i, v_i < \beta_i)] \leq \mathbb{E}_{\vec{v} \leftarrow D}[\max\{0,\max_j \{\varphi_j(v_j)\}\} \cdot I(\forall\ i, v_i < \beta_i)].$$
\end{corollary}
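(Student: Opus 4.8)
The statement is the companion of Lemma~\ref{lem:onebig}: together they will give $\mathbb E_{\vec v\leftarrow D}[R(\vec v)]\le \mathbb E_{\vec v\leftarrow D}[\max\{0,\max_j\varphi_j(v_j)\}]=\rev(D)$ by Theorem~\ref{thm:myerson}, which proves Theorem~\ref{thm:mhr}. Since each $D_i$ is MHR (hence regular) I use $\bar\varphi_i=\varphi_i$ and $\bar\varphi_i^{-1}(0)=r(D_i)$ throughout. \textbf{Step 1 (pointwise revenue bound on the event).} Fix a realization $\vec v$ with $\forall i,\ v_i<\beta_i$ and condition on the auctioneer's coins. If the item does not go to a real buyer, then $R(\vec v)\le 0$: a fake ``winner'' only shuffles money internally, and by safety any losing real buyer pays $0$. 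Otherwise some real buyer $i$ wins, so by Observation~\ref{obs:easy} $v_i>\gamma_i$, hence $\gamma_i<\beta_i$. Because $\gamma_i\ge\bar\varphi_i^{-1}(0)$, the max defining $\beta_i$ is attained not by the reserve term but by some $\beta_{ij^*}=\beta_i$ with $j^*\neq i$; and since $\gamma_i<\beta_{ij^*}$, the commitment of $j^*$ forwarded to $i$ must have been concealed. Hence buyer $i$ receives at least one fine $f(n_i,D_i)=r(D_i)$, and this is paid by the auctioneer (truthful real buyers reveal, so are never fined, and any such fine $i$ nominally receives from a ``framed'' real buyer is covered by the auctioneer to keep both views consistent). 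By safety, buyer $i$'s view is consistent with an honest run in which it wins, so it pays at most $v_i$. Therefore $R(\vec v)\le v_i-r(D_i)$, which is nonnegative since $v_i>\gamma_i\ge r(D_i)$. As at most one buyer wins, writing the disjoint events $E_i:=\{i\text{ wins}\}\cap\{\forall j,\ v_j<\beta_j\}$, we have pointwise $R(\vec v)\,I(\forall i,\ v_i<\beta_i)\le\sum_i(v_i-r(D_i))\,I(E_i)$.

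\textbf{Step 2 (integrate via Lemma~\ref{lem:mhr} and conclude).} Take expectations over $\vec v$ and the auctioneer's coins. For each $i$, $v_i>r(D_i)$ holds on $E_i$, so Lemma~\ref{lem:mhr} with $D_0=D_i$ and $E=E_i$ gives $\mathbb E[(v_i-r(D_i))\,I(E_i)]\le\mathbb E[\varphi_i(v_i)\,I(E_i)]$. Summing over $i$,
\[
\mathbb E_{\vec v\leftarrow D}\big[R(\vec v)\,I(\forall i,\ v_i<\beta_i)\big]\ \le\ \sum_i\mathbb E[\varphi_i(v_i)\,I(E_i)]\ =\ \mathbb E\Big[\sum_i\varphi_i(v_i)\,I(E_i)\Big].
\]
Finally, on $E_i$ we have $\varphi_i(v_i)>0$ (as $v_i>r(D_i)$), the $E_i$ are disjoint, and $\bigcup_iE_i\subseteq\{\forall i,\ v_i<\beta_i\}$, so $\sum_i\varphi_i(v_i)\,I(E_i)\le\max\{0,\max_j\varphi_j(v_j)\}\,I(\forall i,\ v_i<\beta_i)$ pointwise; taking expectations yields the claim.

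\textbf{Main obstacle.} The delicate step is the pointwise bound $R(\vec v)\le v_i-r(D_i)$ on $E_i$. One must (i) rule out that $\beta_i>v_i$ is witnessed only by $\bar\varphi_i^{-1}(0)$ — in that case $i$ would not win, so $E_i$ is empty; (ii) argue that at least one concealed commitment is genuinely charged against the auctioneer's revenue, even though ``concealed'' commitments can be either fake bids or real bids re-forwarded incorrectly, so that this transfer always comes out of the auctioneer's pocket; and (iii) use safety correctly both to cap the winner's price at $v_i$ and to dispense with the case where a fake bidder ``wins''. A minor point is that Lemma~\ref{lem:mhr} is applied to $E_i$, which also depends on $\vec v_{-i}$ and the auctioneer's coins; this causes no trouble because its hypothesis only needs $v_i\ge r(D_i)$ to hold on the conditioning event, not that the event depend on $v_i$ alone.
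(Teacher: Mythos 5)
Your proposal is correct and follows essentially the same route as the paper: the pointwise bound $R(\vec v)\cdot I(\forall i,\ v_i<\beta_i)\le\sum_i (v_i-r(D_i))\cdot I(E_i)$ (the paper's $X_i(\vec v)$ is your $E_i$, with the redundant condition $v_i>r_i$ made explicit), followed by Lemma~\ref{lem:mhr} applied to each $E_i$ and the final bound by $\max\{0,\max_j\varphi_j(v_j)\}$ using disjointness of the winning events. Your additional remarks (that a concealed commitment's fine is effectively paid by the auctioneer even if it "frames" a real bidder, and that a fake winner yields non-positive revenue) are correct elaborations of steps the paper leaves implicit.
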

\begin{proof}
For ease of notation let $r_i:= r(D_i)$, and let $X_i(\vec{v})$ denote the indicator random variable for the event that $v_i > r_i$, $v_j < \beta_j$ for all $j$, and the item is awarded to bidder $i$. Then clearly when this occurs, the payment made by bidder $i$ is at most $v_i$. But additionally, selling the item to buyer $i$ when $v_i < \beta_i$ requires concealing at least one commitment and paying a fine (otherwise, bidder $i$ expects not to win the item). As the fine charged per concealed commitment is $r_i$, this means that the seller's total revenue is at most $v_i - r_i$. In particular, this also concludes that the seller's total revenue when awarding the item to buyer $i$ when $v_i < r_i$ is non-positive. Therefore, we can write:

$$\mathbb{E}_{\vec{v} \leftarrow D}[R(\vec{v}) \cdot I(\forall\ i, v_i < \beta_i)] \leq \mathbb{E}_{\vec{v} \leftarrow D}\left[\sum_i (v_i - r_i)\cdot X_i(\vec{v})\right].$$

But now let's consider $\mathbb{E}_{\vec{v} \leftarrow D}[ (v_i - r_i) \cdot X_i(\vec{v})]$ separately for each $i$. The event $X_i(\vec{v})=1$ satisfies the hypotheses of Lemma~\ref{lem:mhr}, as $X_i(\vec{v})=1$ implies that $v_i > r_i$. Therefore, Lemma~\ref{lem:mhr} allows us to conclude that:

\begin{align*}
\mathbb{E}_{\vec{v} \leftarrow D}[(v_i - r_i)\cdot X_i(\vec{v})] &=  \mathbb{E}_{\vec{v} \leftarrow D}\left[v_i \cdot X_i(\vec{v})] - r_i \cdot \Pr[X_i(\vec{v}) =1\right]\\
& \leq \mathbb{E}_{\vec{v} \leftarrow D}\left[\varphi_i(v_i) \cdot X_i(\vec{v})\right].
\end{align*}

The first line is just linearity of expectation, and the second line follows by Lemma~\ref{lem:mhr}. Now, we can put everything together to conclude:
\begin{align*}
\mathbb{E}_{\vec{v} \leftarrow D}[R(\vec{v}) \cdot I(\forall\ i, v_i < \beta_i)] &\leq \mathbb{E}_{\vec{v} \leftarrow D}\left[\sum_i \varphi_i(v_i) \cdot X_i(\vec{v})\right]\\
&\leq\mathbb{E}_{\vec{v} \leftarrow D}\left[\max_i \{\varphi_i(v_i)\} \cdot \left(\sum_i X_i(\vec{v})\right)\right]\\
&\leq \mathbb{E}_{\vec{v} \leftarrow D}\left[\max\{0,\max_j \{\varphi_j(v_j)\}\} \cdot I(\forall\ i, v_i < \beta_i)\right].
\end{align*}
The first line is simply restating the work above. The second line is just upper bounding each $\varphi_i(v_i)$ with the maximum virtual value. The final line simply observes that at most one of the indicators $X_i(\vec{v})$ can be one (because at most one bidder can receive the item), and that a prerequisite for any of them to be one is that all $v_i < \beta_i$. 
\end{proof}

Lemma~\ref{lem:onebig} and Corollary~\ref{cor:nonebig} together suffice to prove Theorem~\ref{thm:mhr}.

\begin{proof}[Proof of Theorem~\ref{thm:mhr}]
Lemma~\ref{lem:onebig} upper bounds the expected revenue of any safe, reasonable deviation when some $v_i > \beta_i$. Corollary~\ref{cor:nonebig} upper bounds the expected revenue of any safe, reasonable deviation when all $v_i < \beta_i$. Together, this implies that for any safe, reasonable deviation:
\begin{align*}
\mathbb{E}_{\vec{v} \leftarrow D}[R(\vec{v})] &=\mathbb{E}_{\vec{v} \leftarrow D}[R(\vec{v}) \cdot I(\exists\ i, v_i > \beta_i)] + \mathbb{E}_{\vec{v} \leftarrow D}[R(\vec{v}) \cdot I(\forall\ i, v_i < \beta_i)]\\
&\leq \mathbb{E}_{\vec{v} \leftarrow D}[\max_j \{\varphi_j(v_j)\} \cdot I(\exists\ i, v_i > \beta_i)] \\
&\qquad+\mathbb{E}_{\vec{v} \leftarrow D}[\max\{0,\max_j \{\varphi_j(v_j)\}\} \cdot I(\forall\ i, v_i < \beta_i)]\\
&=\mathbb{E}_{\vec{v} \leftarrow D}[\max\{0,\max_j \{\varphi_j(v_j)\}\}].
\end{align*}

The RHS is now precisely the expected revenue that the seller achieves by executing the protocol in earnest, so this series of inequalities explicitly witnesses that every safe, reasonable deviation yields expected revenue at most that of being honest.
\end{proof}

To repeat the key steps in the proof: Lemma~\ref{lem:onebig} doesn't use at all the particular form of $f(\cdot,\cdot)$, nor that each $D_i$ is MHR. It merely says that the revenue achieved from cases where the seller may as well reveal all commitments is the same as a truthful auction (because these commitments to $i$ are a function only of $\vec{b}_{-i}$). Corollary~\ref{cor:nonebig} uses the particular form of $f(\cdot,\cdot)$ and that each $D_i$ is MHR to conclude that even when the seller might strategically conceal some commitments, it does no better than a truthful auction. {Interestingly, observe that the entire proof only used the property that $\beta_i$ can be written as a function of $\vec{b}_{-i}$, which is true even when the commitment scheme is malleable. So Theorem~\ref{thm:mhr} holds even for malleable commitment schemes (but still requires the commitment scheme to be binding).}

\section{Extensions and Limitations of $\alpha$-Strongly Regular Distributions}\label{sec:strong-regularity}
We now provide an extension of Theorem~\ref{thm:mhr} to $\alpha$-strongly regular distributions, but also prove the limits of such an extension. The proof of our extension follows a similar outline to Section~\ref{sec:mhr}. In particular, recall that Lemma~\ref{lem:onebig} held \emph{for all distributions}, not just MHR. So we will use Lemma~\ref{lem:onebig} verbatim to handle the case where some $v_i > \beta_i$. Lemma~\ref{lem:mhr}, however, requires the MHR assumption. Our first step is to extend (and relax) Lemma~\ref{lem:mhr}. The proof of Lemma~\ref{lem:alphamhr} and Corollary~\ref{cor:alphanonebig} are similar to Section~\ref{sec:mhr}, and deferred to Appendix~\ref{app:alpha}.

\begin{lemma}\label{lem:alphamhr}
Let $D_0$ be $\alpha$-strongly regular. Let $E$ be such that $\Pr_{v \leftarrow D_0}\left[v \geq r(D_0)|E\right] = 1$. Then:
$$\mathbb{E}_{v \leftarrow D_0}\left[v| E\right] \leq \frac{1}{\alpha} \cdot \mathbb{E}_{v \leftarrow D_0}\left[\varphi(v)|E\right] + r(D_0).$$

Equivalently, $\mathbb{E}_{v \leftarrow D_0}\left[v\cdot I(E)\right] \leq \frac{1}{\alpha} \cdot \mathbb{E}_{v \leftarrow D_0}\left[\varphi(v)\cdot I(E)\right] + r(D_0)\cdot \Pr\left[E\right]$.
\end{lemma}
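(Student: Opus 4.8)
The plan is to mimic the proof of Lemma~\ref{lem:mhr} essentially verbatim, replacing the MHR-specific inequality with the defining inequality of $\alpha$-strong regularity. Since $D_0$ is $\alpha$-strongly regular with $\alpha > 0$, the virtual value function $\varphi$ is strictly increasing, so no ironing is needed (i.e.\ $\bar\varphi = \varphi$), and the Myerson reserve $r(D_0)$ is the unique point with $\varphi(r(D_0)) = 0$. Now apply the definition of $\alpha$-strong regularity with the two points $r(D_0) \leq v$: whenever $v \geq r(D_0)$ we have $\varphi(v) - \varphi(r(D_0)) \geq \alpha\,(v - r(D_0))$, and since $\varphi(r(D_0)) = 0$ this rearranges to the pointwise bound
$$v \leq r(D_0) + \frac{1}{\alpha}\,\varphi(v).$$

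The event $E$ guarantees $v \geq r(D_0)$ almost surely, so this pointwise bound holds almost surely conditioned on $E$. Taking expectations conditioned on $E$ and using linearity then gives
$$\mathbb{E}_{v \leftarrow D_0}[v\mid E] \leq \mathbb{E}_{v \leftarrow D_0}\!\left[r(D_0) + \tfrac{1}{\alpha}\varphi(v)\;\middle|\;E\right] = r(D_0) + \frac{1}{\alpha}\,\mathbb{E}_{v \leftarrow D_0}[\varphi(v)\mid E],$$
which is the claimed inequality. The equivalent unconditioned form follows by multiplying through by $\Pr[E]$ (and noting both sides are $0$ when $\Pr[E] = 0$). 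Corollary~\ref{cor:alphanonebig} should then follow from Lemma~\ref{lem:alphamhr} exactly as Corollary~\ref{cor:nonebig} followed from Lemma~\ref{lem:mhr}, except that the fine is now used to absorb the extra factor $\tfrac{1}{\alpha}$ (rather than just the additive $r(D_i)$ term), which is presumably what forces the larger, $\alpha$-dependent and $n$-dependent fine in Theorem~\ref{thm:alphamhr}.

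The honest answer about obstacles: there is almost no obstacle in Lemma~\ref{lem:alphamhr} itself — it is a one-line rearrangement of the definition plus monotonicity of conditional expectation, and the only points requiring (routine) care are that $\alpha$-strong regularity with $\alpha>0$ already gives strict monotonicity of $\varphi$ and hence well-definedness of $r(D_0)$ and irrelevance of ironing. The genuine difficulty is pushed downstream: unlike the MHR case, here the bound on revenue from the ``$\forall i,\ v_i < \beta_i$'' regime loses a multiplicative $1/\alpha$ on the virtual-value term, so a constant-sized fine no longer suffices to dominate $\max\{0,\max_j\varphi_j(v_j)\}$; closing that gap (and only up to the unavoidable $\varepsilon$, per Theorem~\ref{thm:alphamhrneg}) is where the extra work in Section~\ref{sec:strong-regularity} will go.
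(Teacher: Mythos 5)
Your proof is correct and is essentially identical to the paper's: both apply the $\alpha$-strong regularity inequality at the points $r(D_0) \leq v$, use $\varphi(r(D_0)) = 0$ to get the pointwise bound $v \leq r(D_0) + \varphi(v)/\alpha$ on the event $E$, and then take conditional expectations. The extra remarks about monotonicity of $\varphi$ and the downstream $1/\alpha$ loss are accurate but not needed for the lemma itself.
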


In Corollary~\ref{cor:alphanonebig} below, we will consider again safe, reasonable deviations from a particular $\dra(f)$. Below, we'll let $R(\vec{b})$ denote the revenue achieved by the seller (using this particular deviation) on bids $\vec{b}$, and $k_i:= f(n_i,D_i)$.

\begin{corollary}\label{cor:alphanonebig} Let each $D_i$ be $\alpha$-strongly regular, and consider $\dra(f)$ where $f(n,D_i) \geq r(D_i)$ for all $n,i$. Consider any strategy of the seller which is a safe, reasonable deviation. Finally, let $X_i(\vec{v})$ denote the indicator random variable for the event that the item is awarded to bidder $i$, $v_i > k_i$, and $v_j < \beta_j$ for all $j$. Then:
$$\mathbb{E}_{\vec{v} \leftarrow D}\left[R(\vec{v}) \cdot I(\forall\ i, v_i < \beta_i)\right] \leq \sum_i \mathbb{E}_{\vec{v} \leftarrow D}\left[(\varphi_i(v_i)/\alpha+r_i-k_i)\cdot X_i(\vec{v})\right].$$
\end{corollary}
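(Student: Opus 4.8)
The plan is to adapt the proof of Corollary~\ref{cor:nonebig} essentially verbatim, with two changes: Lemma~\ref{lem:alphamhr} replaces Lemma~\ref{lem:mhr} (this is exactly where the factor $1/\alpha$ enters), and the bound is left in the stated intermediate form rather than collapsed into a single maximum of virtual values (which is why the term $r_i-k_i$ survives here, instead of vanishing as it does in the MHR case where $k_i=r_i$). Fix the safe, reasonable deviation in question, write $r_i:=r(D_i)$, and let $Y_i(\vec v)$ be the indicator that the item is awarded to (real) bidder $i$ and $v_j<\beta_j$ for all $j$, so that $X_i(\vec v)=Y_i(\vec v)\cdot I(v_i>k_i)$; note that at most one $Y_i(\vec v)$ can equal $1$ since at most one bidder wins.

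First I would prove the pointwise inequality $R(\vec v)\cdot I(\forall j,\ v_j<\beta_j)\le\sum_i(v_i-k_i)\,Y_i(\vec v)$. On the event $\{\forall j,\ v_j<\beta_j\}$: if no real bidder wins, every transfer contributing to the seller's revenue (fake-bidder payments, fines among fake bidders) is internal to the seller and the retained item contributes nothing, so $R(\vec v)\le 0=\sum_i(v_i-k_i)Y_i(\vec v)$; if a real bidder $i$ wins, then $Y_i=1$, and by Observation~\ref{obs:easy} winning means $v_i>\gamma_i$, hence $\gamma_i<v_i<\beta_i$, which forces $\gamma_i<\beta_i$ and therefore at least one commitment contributing to $\beta_i$ to have been concealed. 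Since real bidders are truthful and reveal, any concealed commitment is a fake bidder, so the seller pays bidder $i$ a fine of at least $k_i=f(n_i,D_i)$, while bidder $i$'s payment is at most $v_i$; thus $R(\vec v)\le v_i-k_i=\sum_j(v_j-k_j)Y_j(\vec v)$. Next, because $Y_i-X_i$ is the indicator of $\{Y_i=1,\ v_i\le k_i\}$, on which $v_i-k_i\le 0$, we get $(v_i-k_i)Y_i\le(v_i-k_i)X_i$ pointwise, so $R(\vec v)\cdot I(\forall j,\ v_j<\beta_j)\le\sum_i(v_i-k_i)X_i(\vec v)$. Taking expectations over $\vec v\leftarrow D$, it remains to show, for each $i$, that $\mathbb{E}[(v_i-k_i)X_i]\le\mathbb{E}[(\varphi_i(v_i)/\alpha+r_i-k_i)X_i]$, equivalently $\mathbb{E}[v_iX_i]\le\tfrac{1}{\alpha}\mathbb{E}[\varphi_i(v_i)X_i]+r_i\Pr[X_i=1]$.

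This last inequality is exactly the unconditional form of the conclusion of Lemma~\ref{lem:alphamhr}, applied to $D_i$ and the event $E=\{X_i(\vec v)=1\}$: since $f(n,D_i)\ge r(D_i)$, the event $X_i=1$ implies $v_i>k_i\ge r(D_i)$, so $\Pr[v_i\ge r(D_i)\mid E]=1$; and the proof of Lemma~\ref{lem:alphamhr} only uses the pointwise implication $E\Rightarrow v_i\le r_i+\varphi_i(v_i)/\alpha$, so it remains valid even though $E$ depends on coordinates other than $v_i$ (and if the deviation is randomized, one first conditions on the seller's coins). Summing over $i$ yields the corollary. The only real work is the bookkeeping in the pointwise bound — arguing that a real bidder winning with $v_i<\beta_i$ necessarily triggers a concealment and hence a fine of at least $k_i$ paid by the seller, and checking that all remaining contributions to the seller's revenue are nonpositive; everything else is a direct transcription of Corollary~\ref{cor:nonebig}.
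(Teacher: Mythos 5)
Your proposal is correct and follows essentially the same route as the paper's own proof: bound the seller's revenue pointwise by $\sum_i (v_i-k_i)X_i(\vec v)$ (using that a win with $v_i<\beta_i$ forces a concealment and hence a fine of $k_i$ paid by the seller, and that the remaining cases give nonpositive revenue), then apply Lemma~\ref{lem:alphamhr} to the event $\{X_i(\vec v)=1\}$, which satisfies its hypothesis since $v_i>k_i\geq r(D_i)$. Your intermediate indicator $Y_i$ and the explicit check that the lemma tolerates events depending on other coordinates are just more careful bookkeeping of steps the paper states without elaboration.
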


From here, making use of Corollary~\ref{cor:alphanonebig} is not as straight-forward as in the MHR case. We first need another technical lemma, bounding the achievable revenue by posting a very high price for a single $\alpha$-strongly regular distribution. The proof of Lemma~\ref{lem:alphapricing} appears in Appendix~\ref{app:alpha}.\footnote{Lemma~\ref{lem:alpha-prob-tail-2} gives a stronger bound when $\alpha \rightarrow 1$, but the simpler bound in Lemma~\ref{lem:alphapricing} suffices for all of our results.}

\begin{lemma}\label{lem:alphapricing} Let $D_0$ be $\alpha$-strongly regular. Then for all $p \geq r(D_0)$,
$$p \cdot \Pr_{v \leftarrow D_0}[v \geq p] \leq r(D_0) \cdot \Pr_{v \leftarrow D_0}[v \geq r(D_0)] \cdot (1-\alpha)^{-1/(1-\alpha)}\cdot \bigg(\frac{r(D_0)}{p}\bigg)^{\frac{\alpha}{1-\alpha}} $$
\end{lemma}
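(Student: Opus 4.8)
The plan is to reason about the revenue curve $R(p) := p\cdot \Pr_{v\leftarrow D_0}[v\ge p] = p(1-F(p))$ and to control how quickly it decays for $p \ge r := r(D_0)$. Writing $g(v) := \frac{1-F(v)}{f(v)} = v - \varphi(v)$, the $\alpha$-strong-regularity hypothesis $\varphi(v')-\varphi(v)\ge \alpha(v'-v)$ for $v'\ge v$ translates immediately into a one-sided Lipschitz bound $g(v')-g(v)\le (1-\alpha)(v'-v)$. Combined with $g(r) = r-\varphi(r) = r$ (since $\varphi(r(D_0))=0$ by definition of the Myerson reserve), this gives the key estimate $g(t)\le r+(1-\alpha)(t-r) = (1-\alpha)t+\alpha r$ for all $t\ge r$.

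Next I would compute the logarithmic derivative of the revenue curve. A direct calculation gives $\frac{d}{dp}\ln R(p) = \frac{1}{p} - \frac{f(p)}{1-F(p)} = \frac1p - \frac1{g(p)}$, valid wherever $1-F(p)>0$, i.e.\ on the interior of the support (when $1-F(p)=0$ the lemma is trivial, since its left side vanishes and its right side is nonnegative). Integrating from $r$ to $p$ and plugging in the key estimate $g(t)\le (1-\alpha)t+\alpha r$,
\[ \ln\frac{R(p)}{R(r)} = \int_r^p\Big(\frac1t-\frac1{g(t)}\Big)\,dt \le \int_r^p\Big(\frac1t - \frac1{(1-\alpha)t+\alpha r}\Big)\,dt = \ln\frac pr - \frac1{1-\alpha}\ln\frac{(1-\alpha)p+\alpha r}{r}. \]
Then I would use the crude bound $(1-\alpha)p+\alpha r \ge (1-\alpha)p$, substitute $y := p/r\ge 1$, and simplify the right-hand side to $-\frac{\alpha}{1-\alpha}\ln y - \frac1{1-\alpha}\ln(1-\alpha)$; exponentiating yields $R(p)\le R(r)\cdot(1-\alpha)^{-1/(1-\alpha)}\cdot (r/p)^{\alpha/(1-\alpha)}$, which is exactly the claimed inequality once we recall $R(p) = p\Pr_{v\leftarrow D_0}[v\ge p]$ and $R(r) = r\Pr_{v\leftarrow D_0}[v\ge r]$.

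The only real subtleties are bookkeeping ones: establishing that $R$ is absolutely continuous so that integrating its log-derivative is legitimate, handling the boundary of the support of $D_0$ (where $f$ may vanish or $1-F$ may reach $0$), and checking $R(r)>0$ so that the ratio is well defined (which holds because a continuous distribution always has $\varphi(0)<0$, forcing $r>0$ and $F(r)<1$). None of these are deep; the substantive content is entirely in converting strong regularity into the Lipschitz bound on $g$ and then integrating the revenue-curve log-derivative against it. For the distributions actually used in the rest of the paper ($D_0$ unbounded), the boundary issues disappear entirely.
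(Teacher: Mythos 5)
Your proof is correct and is essentially the paper's argument: both convert $\alpha$-strong regularity into the bound $1/h(t)\leq (1-\alpha)t+\alpha r(D_0)$ anchored at $1/h(r(D_0))=r(D_0)$, integrate the hazard rate from $r(D_0)$ to $p$, and then discard the $\alpha r(D_0)$ term to obtain the stated constant. The only cosmetic difference is that you integrate $\frac{d}{dp}\ln\bigl(p(1-F(p))\bigr)$ in one step, whereas the paper first proves the tail bound on $\Pr_{v\leftarrow D_0}[v\geq p]$ (its Lemma~\ref{lem:alpha-prob-tail}) and then multiplies by $p$ in Appendix~\ref{app:alpha}.
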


And finally, we need one more technical lemma before we can wrap up the proof of Theorem~\ref{thm:alphamhr}. This technical lemma is the only reason why Theorem~\ref{thm:alphamhr} applies to unbounded distributions, and also the only reason why we need non-malleability of the commitment scheme. We show in Appendix~\ref{app:malleable} that both non-malleability and unbounded distributions are indeed necessary (via a counterexample to Theorem~\ref{thm:alphamhr} otherwise). The proof of Lemma~\ref{lem:all} is included, as it has no counterpart in Section~\ref{sec:mhr}.

\begin{lemma}\label{lem:all} Let each $D_i$ be unbounded. Then for all $f$, all $j \in [n]$, and any safe, reasonable deviation in $\dra(f)$ it must be that for at least $j$ distinct bidders, $n_i \geq n-j+1$. In particular, $\sum_i 1/n_i^2 \leq \pi^2/6 \leq 2$.
\end{lemma}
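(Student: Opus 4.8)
The plan is to argue about how many \emph{real} bidders each bidder $i$ can be convinced exist, given the constraint that the deviation is safe and reasonable. Recall there are $n$ real bidders total. The key observation is that bidder $i$ must receive, in round one, a collection of commitments $\{c_j\}$ together with distributions $\{D_j\}$, and $n_i$ is the size of this collection. In a safe deviation, bidder $i$'s entire view must be consistent with \emph{some} valid execution in which there are $n_i$ bidders. But the auctioneer cannot manufacture commitments out of thin air in a reasonable deviation: any commitment the auctioneer forwards to bidder $i$ and later reveals must be explicitly tied to a genuine $(b_j,r_j)$ pair, which (since the auctioneer does not know the real bids at commitment time and the scheme is non-malleable) means it is either a commitment the auctioneer itself created, or one of the $n$ real commitments $c_1,\dots,c_n$ submitted by the real bidders. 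Crucially, because the real distributions $D_i$ are \emph{unbounded}, the auctioneer can never safely conceal a fake commitment and still sell to bidder $i$ at a finite price — wait, that is the argument of Observation~\ref{obs:bounded} for bounded distributions; here instead the relevant point is purely combinatorial about the commitments themselves.

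First I would set up the counting argument. Fix a safe, reasonable deviation and fix $j \in [n]$. Suppose for contradiction that \emph{fewer than} $j$ distinct real bidders $i$ have $n_i \geq n-j+1$; equivalently, \emph{more than} $n-j$ real bidders have $n_i \leq n-j$. For each such bidder $i$, the auctioneer has shown bidder $i$ at most $n-j$ commitment-triples; one of these is bidder $i$'s own commitment $c_i$, so bidder $i$ sees at most $n-j-1$ \emph{other} commitments, each of which (by reasonableness/non-malleability, if it is ever to be revealed) is either auctioneer-created or equals some $c_\ell$ with $\ell \in [n] \setminus \{i\}$. The heart of the argument is that a real commitment $c_\ell$ from a real bidder $\ell$ \emph{can be shown to at most one other bidder as a ``revealed'' commitment with the correct value}: in any safe execution, the revealed value attached to $c_\ell$ must be the true $b_\ell$ that bidder $\ell$ reveals, and forwarding $c_\ell$ to two distinct bidders with matching later reveals forces those two views to share the same bidder $\ell$ — this is fine for bidder $\ell$'s own view but I need to track that the total ``budget'' of genuine other-commitments available across all real bidders is limited. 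Here is where I'd be careful: the auctioneer \emph{can} fabricate arbitrarily many fake commitments, so the constraint is not on the number of commitments but on consistency. The cleaner route is: for a safe deviation, bidder $i$'s view must be consistent with $n_i$ total bidders each playing a \emph{valid} strategy; since the auction is strategyproof, each such strategy is ``tell the truth,'' so bidder $i$'s view is consistent with $n_i - 1$ other bidders truthfully revealing. If bidder $i$ is ever allocated the item (which happens with positive probability since $D_i$ is unbounded), the revealed effective bids that bidder $i$ sees must all be genuine. I would then argue that if too many real bidders simultaneously saw ``too few'' commitments, the auctioneer cannot maintain mutual consistency — two real bidders with small $n_i$ would, in the event both are ``high,'' each need to be told they won, contradicting Observation~\ref{obs:easy}, unless each was shown a commitment exceeding their value, i.e.\ a genuinely high fake bid that is revealed; but a revealed fake bid shown to bidder $i$ is equally a revealed fake bid relative to bidder $i'$, and pinning down the exact counting that yields ``$j$ distinct bidders with $n_i \geq n-j+1$'' is the delicate combinatorial bookkeeping.

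Concretely, the cleanest packaging I'd aim for: order the real bidders by $n_i$ so that $n_{(1)} \leq n_{(2)} \leq \dots \leq n_{(n)}$, and show $n_{(t)} \geq t$ for all $t$ — equivalently, the $t$-th smallest $n_i$ is at least $t$, which is exactly the statement ``for at least $j$ bidders, $n_i \geq n - j + 1$'' after reindexing ($t = n-j+1$). To prove $n_{(t)} \geq t$: the $t$ real bidders with the smallest $n_i$ values collectively were sent at most $\sum_{s\le t} n_{(s)}$ triples; among the bidders sent to the bidder with the smallest count, running the safety/reasonableness consistency constraint forces at least $t$ distinct genuine identities to appear in that bidder's view (its own, plus enough others so that every \emph{other} real bidder among these $t$ is accounted for in a globally consistent picture), giving $n_{(t)} \geq t$. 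Once the combinatorial claim $n_i \geq n - j + 1$ for at least $j$ bidders is established for \emph{every} $j$, specialize: for each value $m \in \{1,\dots,n\}$, at least $n - m + 1$ bidders have $n_i \geq m$, so at most $m - 1$ bidders have $n_i < m$, i.e.\ at most $m-1$ bidders have $n_i \le m-1$; hence the sorted sequence satisfies $n_{(i)} \ge i$ and therefore $\sum_{i=1}^n 1/n_i^2 \le \sum_{i=1}^n 1/i^2 \le \sum_{i=1}^\infty 1/i^2 = \pi^2/6 \le 2$. The main obstacle is making the consistency argument fully rigorous — precisely arguing that a safe, reasonable deviation cannot show ``too few'' commitments to ``too many'' real bidders without creating an inconsistency detectable in some positive-probability event (using unboundedness of the $D_i$ to guarantee such events exist and using non-malleability to rule out the auctioneer ``splicing'' one real commitment into multiple inconsistent roles). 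I expect the rest (the sorting/$\pi^2/6$ bound) to be routine.
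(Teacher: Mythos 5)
There is a genuine gap: the heart of the argument is missing, and you acknowledge as much (``the main obstacle is making the consistency argument fully rigorous''). Your correct instinct is that the contradiction should come from Observation~\ref{obs:easy} --- a positive-probability event in which two bidders both exceed their effective commitments --- but neither of your proposed routes to it works. The ``budget of genuine commitments'' counting fails for the reason you yourself note (the auctioneer can fabricate arbitrarily many fake commitments, so no counting of \emph{genuine identities} in the views can force $n_i$ to be large; a bidder's view may consist entirely of fake bids), and the claim that ``a revealed fake bid shown to bidder $i$ is equally a revealed fake bid relative to bidder $i'$'' is false: all messages are personalized private-channel messages, so fake commitments sent to $i$ need never appear in $i'$'s view. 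The missing idea in the paper's proof is temporal, not combinatorial-across-views: relabel the bidders by the order in which the seller requests their decommitments. Since the seller must forward all of bidder $i$'s incoming commitments \emph{before} asking $i$ to reveal, the set forwarded to $i$ --- and hence $\beta_i$ --- is a function of $\vec{b}_{<i}$ only. If the claimed bound fails for some $j$, then some bidder $i \leq j$ has $n_i \leq n-i$, so among the $n-i$ bidders $\ell > i$ at least one has a commitment absent from $i$'s view; by non-malleability the seller also cannot have planted a surrogate commitment depending on the unknown $b_\ell$. Then with positive probability $v_i > \beta_i$ (unboundedness of $D_i$, and $\beta_i$ independent of $v_i,v_\ell$), and subsequently $v_\ell > \beta_\ell$ (unboundedness of $D_\ell$), so two bidders simultaneously satisfy $v > \gamma$, contradicting safety via Observation~\ref{obs:easy}. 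This yields $n_i \geq n-i+1$ for every $i$ in decommitment order, which is exactly your sorted reformulation $n_{(t)} \geq t$; that reformulation and the concluding bound $\sum_i 1/n_i^2 \leq \sum_{m\geq 1} 1/m^2 = \pi^2/6 \leq 2$ are the only parts of your proposal that are complete and correct.

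In short: your sketch identifies the right contradiction target and wraps up the easy arithmetic correctly, but without the decommitment-ordering observation (forwarding to bidder $i$ can depend only on $\vec{b}_{<i}$) there is no way to locate the ``missing'' later bidder $\ell$ whose value is still undetermined when $\beta_i$ is fixed, and that is precisely what makes the positive-probability double-win event exist. Any argument that stays at the level of counting identities or commitments across views, without exploiting this timing constraint, will not establish the lemma.
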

\begin{proof}
For simplicity of notation, relabel the bidders $1,\ldots, n$ by the order in which the seller requests their decommitment (if some are requested simultaneously, break those ties arbitrarily). Importantly, observe that the seller cannot request decommitment from a bidder until they have forwarded all commitments. Therefore, the decision of which commitments to forward to bidder $i$ can depend only on $\vec{b}_{<i}$.

So now assume for contradiction that the lemma fails for some $j$. Then there is some bidder $i \leq j$ with $n_i \leq n-i$ (recall that $n_i$ includes their own commitment too). In particular, this means that there is some bidder $\ell > i$ whose commitment was not forwarded to bidder $i$, and that $b_\ell$ was completely unknown when this decision was made. In particular, the following situation now has non-zero probability:
\begin{itemize}
\item First, draw $v_{-i,\ell}$ to determine which commitments to forward to bidder $i$. Observe that this also suffices to define $\beta_i$, as it is independent of both $v_i$ and $v_\ell$.
\item Now, it is entirely possible that $v_i > \beta_i$, as $D_i$ is unbounded. Observe that determining this only requires additionally drawing $v_i$.
\item Now, this sets $\beta_\ell$. As we have yet to draw $v_\ell$, it is entirely possible that $v_\ell > \beta_\ell$, as $D_\ell$ is unbounded.
\end{itemize}

The above derives a contradiction to the deviation being safe and reasonable, as now two distinct buyers are both expecting to win the item. The ``In particular,\ldots'' part of the statement follows simply as the sum is {maximized} when there is exactly one bidder with $n_i = j$, for all $j \in [n]$. 
\end{proof}

We can now wrap up the proof of Theorem~\ref{thm:alphamhr}. 

\begin{proof}[Proof of Theorem~\ref{thm:alphamhr}]
Consider first combining Lemma~\ref{lem:onebig} and Corollary~\ref{cor:alphanonebig}. If we set $f(n,D_i):=\left(\frac{2n^2}{\varepsilon \alpha}\right)^{\frac{1-\alpha}{\alpha}}\cdot (1-\alpha)^{-1/\alpha} \cdot r(D_i)$, we get:
\begin{align*}
\mathbb{E}_{\vec{v} \leftarrow D}[R(\vec{v})] &=\mathbb{E}_{\vec{v} \leftarrow D}[R(\vec{v}) \cdot I(\exists\ i, v_i > \beta_i)] + \mathbb{E}_{\vec{v} \leftarrow D}[R(\vec{v}) \cdot I(\forall\ i, v_i < \beta_i)]\\
&\leq \mathbb{E}_{\vec{v} \leftarrow D}[\max_j \{\varphi_j(v_j)\} \cdot I(\exists\ i, v_i > \beta_i)] +\sum_i \mathbb{E}_{\vec{v} \leftarrow D}\left[(\varphi_i(v_i)/\alpha+r(D_i)-k_i)\cdot X_i(\vec{v})\right]\\
&\leq\mathbb{E}_{\vec{v} \leftarrow D}[\max\{0,\max_j \{\varphi_j(v_j)\}\}] + \sum_i \mathbb{E}_{v_i \leftarrow D_i}\left[\varphi_i(v_i)\cdot (1/\alpha)\cdot I(v_i > k_i)\right]\\
&= \mathbb{E}_{\vec{v} \leftarrow D}[\max\{0,\max_j \{\varphi_j(v_j)\}\}] + \sum_i k_i \cdot \Pr_{v_i \leftarrow D_i}[v_i > k_i]/\alpha\\
&\leq  \mathbb{E}_{\vec{v} \leftarrow D}[\max\{0,\max_j \{\varphi_j(v_j)\}\}]\\
&\qquad + \sum_i (1-\alpha)^{-1/(1-\alpha)}\cdot (k_i/r(D_i))^{-\frac{\alpha}{1-\alpha}} \cdot r(D_i) \cdot \Pr_{v_i \leftarrow D_i}[v_i \geq r(D_i)]/\alpha\\
&\leq  \mathbb{E}_{\vec{v} \leftarrow D}[\max\{0,\max_j \{\varphi_j(v_j)\}\}] + \sum_i \frac{\varepsilon}{2n_i^2} \cdot r(D_i) \cdot \Pr_{v_i \leftarrow D_i}[v_i \geq r(D_i)]\\
&\leq \rev(D) + \varepsilon \rev(D) \cdot \sum_i \frac{1}{2n_i^2}\\
&\leq (1+\varepsilon)\rev(D).
\end{align*}

The first line is just linearity of expectation. The second line is Lemma~\ref{lem:onebig} and Corollary~\ref{cor:alphanonebig}. The third line simply observes that $X_i(\vec{v}) = 1 \Rightarrow v_i > k_i$, and also that $k_i > r(D_i)$. The fourth simply observes that the right-hand term of line three is the expected virtual welfare of an auction (which sells the item to bidder $i$ whenever $v_i > k_i$), and the right-hand term of line four is the expected revenue of that same auction (so they are equal by Myerson's lemma). The fifth is a direct application of Lemma~\ref{lem:alphapricing}. The sixth uses our particular choice of $f(n,D_i)$. The seventh simply observes that both $\mathbb{E}_{\vec{v} \leftarrow D}[\max\{0,\max_j \{\varphi_j(v_j)\}\}] = \rev(D)$, and also $r(D_i) \cdot \Pr_{v_i \leftarrow D_i}[v_i \geq r(D_i)] \leq \rev(D)$ (because this is just the revenue of selling only to bidder $i$). The final line follows directly from Lemma~\ref{lem:all}.
\end{proof}

{It may seem odd that Theorem~\ref{thm:alphamhr} requires both that the distributions are unbounded, and also that the commitment scheme is non-malleable (given that neither assumption is necessary for Theorem~\ref{thm:mhr}). Both of these assumptions show up only in the proof of Lemma~\ref{lem:all}, where we show that the auctioneer must send many commitments to each bidder. This perhaps seems like a technical artifact of the current proof approach, but surprisingly we show that both assumptions are necessary. Specifically, Theorem~\ref{thm:alphamhr} \emph{does not} hold when the commitment scheme is malleable, nor when the distributions are bounded. This establishes that there is (perhaps surprisingly) something integral about Lemma~\ref{lem:all} to the proof of Theorem~\ref{thm:alphamhr}. See Appendix~\ref{app:malleable} for formal theorem statements and proofs.}

A full proof of Proposition~\ref{prop:alphaone} appears in Appendix~\ref{app:alphaone}, which reuses several technical lemmas.

\subsection{Limits of \dra\ for $\alpha$-Strongly Regular}
We conclude by establishing that the $\varepsilon$ in Theorem~\ref{thm:alphamhr} cannot be improved for $n>1$ bidders (whereas Proposition~\ref{prop:alphaone} removes it for $n=1$ bidder). We provide a complete proof below, which will give further intuition for why the $\varepsilon$ is not needed for MHR distributions, or only a single bidder.

\begin{proof}[Proof of Theorem~\ref{thm:alphamhrneg}]
For any $n > 1$, our safe, reasonable strategy will do the following. First, it will always interact honestly with bidders $\neq 1$. Also, it will \emph{almost always} also interact honestly with bidder $1$. In the extremely rare case that the maximum bid from bidders $\neq 1$ is unusually high, then it will try to cheat bidder $1$. To be clear, the auctioneer's strategy will do the following (for simplicity of notation in what follows, we denote by $k:=f(n+1,D_0)$):

\begin{enumerate}
\item Honestly solicit commitments from all $n$ bidders.
\item Honestly forward all commitments to all bidders $\neq 1$, and ask for them to reveal. Let $j^*:=\arg\max_{j \neq 1}\{b_j\}$.
\item If $b_{j^*} \leq T$, for some threshold $T$ to be set later, honestly forward all commitments to bidder $1$ as well, and ask bidder $1$ to reveal. Execute the auction honestly.
\item If instead $b_{j^*} > T$, forward instead all commitments to bidder $1$, along with one fake commitment to $b=b_{j^*}+k$. 
\begin{itemize}
\item If $v_1 \leq T$, reveal $b$ and execute the auction honestly. 
\item If $v_1 \geq b$, reveal $b$ and execute the auction honestly.
\item If $v_1 \in (T, b)$, conceal $b$ and sell the item to bidder one.
\end{itemize}
\end{enumerate}

Observe that this is indeed a safe, reasonable deviation. From the perspective of each bidder, they first send a commitment, then receive commitments, then reveal their commitment, then learn which commitments are revealed/concealed. Intuitively, our proof will show that no matter how large $f(n+1, D_0)$ is, there is always a sufficiently large $T$ such that this fine becomes negligible and the increased revenue from setting a slightly higher ``reserve'' of $b$ becomes worth it (note, however, that this phenomenon does \emph{not} occur for MHR distributions, by Theorem~\ref{thm:mhr}). 

Our distribution $D_0$ will have the following CDF and PDF:

\begin{minipage}{.5\linewidth}
$$F^\alpha(v) = \begin{cases}0 \quad & \text{, $v < 1$}\\
1 - \big(\frac 1 v\big)^{\frac 1 {1-\alpha}} \quad & \text{, $v \geq 1$} \end{cases}$$
\end{minipage}
\begin{minipage}{.5\linewidth}
$$f^\alpha(v) = \begin{cases} 0 \quad &\text{, $v < 1$}\\
\frac{1}{1-\alpha}\big(\frac 1 v\big)^{\frac {2-\alpha}{1-\alpha}} \quad &\text{, $v \geq 1$}\end{cases}$$
\end{minipage}

The hazard rate of $F^\alpha$ is $h^{F^\alpha}(v) = \frac{1}{(1-\alpha) v}$ for $v \geq 1$ and the virtual value function of $F^\alpha$ is $\varphi^{F^\alpha}(v) = v - \frac{1}{h^{F^\alpha}(v)} = \alpha v$, so $D_0$ is $\alpha$-strongly regular. 

Observe that for our particular deviation, the revenue of the honest execution and our deviation differ \emph{only when $v_1 > v_{j^*} > T$}. The tradeoff the auctioneer chooses is that when $v_1 > v_{j^*}+k$, they get an additional revenue of $k$. But if instead $v_1 \in (v_{j^*},v_{j^*}+k)$, they have to pay a fine of $k$. Observe that the difference in both cases is exactly $k$, one in favor of cheating, and the other in favor of being honest. So we just want to check how big $v_{j^*}$ needs to be in order to have $\Pr[v_1 > v_{j^*}+k] > \Pr[v_1 \in (v_{j^*},v_{j^*}+k)]$. Again, observe that Theorem~\ref{thm:mhr} establishes that no such $v_{j^*}$ exists when $D_0$ is MHR and $k = r(D_0)$. But slightly relaxing this condition to $\alpha$-strongly regular for $\alpha < 1$ now implies the existence of such a $v_{j^*}$ for any $k$.

We upper bound the probability that $v_1 \in (v_{j^*}, v_{j^*} + k)$, conditioned on $v_1 \geq v_{j^*}$ (holds for any $v_{j^*} \geq 1$):
\begin{align*}
    \Pr[v_1 \in (v_{j^*}, v_{j^*} + k) | v_1 \geq v_{j^*}] &= \int_{v_{j^*}}^{v_{j^*} + k} \frac{f^{\alpha}(v)}{1 - F^{\alpha}(v_{j^*})} dv\\
    &\leq \int_{v_{j^*}}^{v_{j^*} + k} \frac{f^{\alpha}(v_{j^*})}{1 - F^{\alpha}(v_{j^*})} dv\\
    &= k \cdot h^{F^{\alpha}}(v_{j^*}) = \frac{k}{(1-\alpha)v_{j^*}}
\end{align*}

Observe this also implies $\Pr[v_1 \geq v_{j^*}+k | v_1 \geq v_{j^*}] = 1-\Pr[v_1 \in (v_{j^*}, v_{j^*} + k) | v_1 \geq v_{j^*}] \geq 1-\frac{k}{(1-\alpha)v_{j^*}}$. Together, these two claims immediately imply that (after multiplying both by $\Pr[v_1 \geq v_{j^*}]$):

$$v_{j^*}> \frac{2k}{1-\alpha}\Rightarrow \Pr[v_1 \geq v_{j^*}+k]> \Pr[v_1 \in (v_{j^*}, v_{j^*} + k)].$$

By the work above, this proves that this deviation is strictly profitable for any $T > \frac{2k}{1-\alpha}$.
\end{proof}

{In Appendix~\ref{app:malleable}, we consider \dra\ with a malleable commitment scheme. In particular, we show that if the commitment scheme is sufficiently malleable, then $\dra(f)$ is not $(1-\alpha-\varepsilon)$-credible for any $\varepsilon > 0$ for multiple $\alpha$-strongly regular buyers (this means that non-malleability is truly necessary for Theorem~\ref{thm:alphamhr}, and not a technical artifact of going through Lemma~\ref{lem:all}). We also show that this is tight, and that there exists an $f$ such that $\dra(f)$ is $(1-\alpha)$-credible for multiple $\alpha$-strongly regular buyers, even when the commitment scheme is malleable (but still computationally binding).}

\bibliographystyle{alpha}
\bibliography{masterbib}
\ifec\newpage\fi
\appendix
\section{Facts about $\alpha$-Strongly Regular Distributions}\label{sec:alphamhrfacts}
Here we provide a proof of the main technical lemma for $\alpha$-strongly regular distributions, which itself follows from two short structural lemmas. Lemma~\ref{lem:hazard-rate-bound} follows directly from the definition of virtual values and $\alpha$-strongly regular distributions \cite{cole2014sample}.

\begin{lemma}\label{lem:hazard-rate-bound}
If an $\alpha$-strongly regular distribution has CDF $F$ and PDF $f$, then for all $v' \geq v$, 
\begin{equation}\label{eq:hazard-lower-bound}
h(v') \geq \frac{1}{(1-\alpha)(v' - v) + 1/h(v)}
\end{equation}
\end{lemma}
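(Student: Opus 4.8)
The plan is to unwind the definition of $\alpha$-strong regularity directly. Recall that $D_0$ being $\alpha$-strongly regular means $\varphi^F(v') - \varphi^F(v) \geq \alpha(v'-v)$ for all $v' \geq v$, and that by definition $\varphi^F(x) = x - 1/h(x)$. So the first step is simply to substitute: the inequality $\varphi^F(v') - \varphi^F(v) \geq \alpha(v'-v)$ becomes
$$\left(v' - \frac{1}{h(v')}\right) - \left(v - \frac{1}{h(v)}\right) \geq \alpha(v'-v),$$
which I would rearrange by grouping the $v'-v$ terms on one side:
$$\frac{1}{h(v)} - \frac{1}{h(v')} \geq \alpha(v'-v) - (v'-v) = -(1-\alpha)(v'-v).$$

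The second step is to solve for $1/h(v')$, giving
$$\frac{1}{h(v')} \leq \frac{1}{h(v)} + (1-\alpha)(v'-v).$$
Finally, since the hazard rate is strictly positive on the support of the distribution (and both sides of the last display are positive there, as $(1-\alpha) \geq 0$ for the relevant range $\alpha \in (0,1]$), I would invert both sides, which reverses the inequality and yields exactly
$$h(v') \geq \frac{1}{(1-\alpha)(v'-v) + 1/h(v)},$$
as claimed.

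I do not anticipate a genuine obstacle here — the lemma is a purely algebraic restatement of the virtual-value characterization of $\alpha$-strong regularity, so the only thing worth a sentence of care is the positivity needed to invert the inequality (hence the restriction to $v, v'$ in the support, where $h$ is finite and positive), and the harmless observation that the bound is trivially valid, or vacuous, outside that range.
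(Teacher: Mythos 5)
Your proof is correct and follows essentially the same route as the paper's: plug $\varphi(x) = x - 1/h(x)$ into the definition of $\alpha$-strong regularity, rearrange to $1/h(v') \leq (1-\alpha)(v'-v) + 1/h(v)$, and invert using positivity of the hazard rate. Your extra remark about positivity when inverting is a fine (and slightly more careful) touch, but not a substantive difference.
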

\begin{proof}
For all $v' \geq v$, if $h(v)$ is the hazard rate of $F$, then $\varphi(v') = 1 - 1/h(v)$. By definition of $\alpha$-strongly regularity,
\begin{align*}
    &\varphi(v') - \varphi(v) = v' - 1/h(v') - v + 1/h(v) \geq \alpha(v' - v)\\
    &\implies 1/h(v') \leq (1 - \alpha)(v' - v) + 1/h(v)
\end{align*}
The latter implies the statement.
\end{proof}

\begin{lemma}\label{lem:alpha-prob-tail}
Let an $\alpha$-strongly regular distribution have CDF $F$ and PDF $f$, and let $r:=\varphi^{-1}(0)$. Then for all $x \geq r$,
$$Pr_{v \leftarrow D}[v \geq x] \leq Pr_{v \leftarrow D}[v \geq r]\cdot \bigg(\frac{r}{(1-\alpha)x + \alpha r}\bigg)^\frac{1}{1-\alpha}$$
\end{lemma}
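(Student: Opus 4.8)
The plan is to integrate the pointwise hazard-rate lower bound of Lemma~\ref{lem:hazard-rate-bound} starting from the Myerson reserve $r$. The first step is to pin down the hazard rate at $r$: since $\varphi(r) = r - 1/h(r) = 0$ by definition of the reserve, we get $h(r) = 1/r$. Plugging $v = r$ into Lemma~\ref{lem:hazard-rate-bound} then yields, for every $v' \geq r$,
\[
h(v') \;\geq\; \frac{1}{(1-\alpha)(v'-r) + 1/h(r)} \;=\; \frac{1}{(1-\alpha)v' + \alpha r}.
\]

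The second step is to convert this into a tail bound using the standard identity $1 - F(x) = \bigl(1 - F(r)\bigr)\exp\!\bigl(-\int_r^x h(t)\,dt\bigr)$ for $x \geq r$, which follows from $\frac{d}{dt}\ln(1-F(t)) = -h(t)$. Combining with the lower bound on $h$,
\[
\frac{1-F(x)}{1-F(r)} \;\leq\; \exp\!\left(-\int_r^x \frac{dt}{(1-\alpha)t + \alpha r}\right).
\]
The remaining work is the routine closed-form evaluation $\int_r^x \frac{dt}{(1-\alpha)t+\alpha r} = \frac{1}{1-\alpha}\ln\frac{(1-\alpha)x + \alpha r}{r}$ (using that at $t=r$ the denominator equals $r$), which gives
\[
\Pr_{v\leftarrow D}[v \geq x] = 1 - F(x) \;\leq\; \bigl(1-F(r)\bigr)\left(\frac{r}{(1-\alpha)x + \alpha r}\right)^{\frac{1}{1-\alpha}} = \Pr_{v\leftarrow D}[v \geq r]\cdot\left(\frac{r}{(1-\alpha)x+\alpha r}\right)^{\frac{1}{1-\alpha}},
\]
as desired.

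I do not anticipate a serious obstacle here: the proof is essentially "differentiate the hazard-rate inequality in reverse." The only points requiring a bit of care are (i) correctly identifying $h(r) = 1/r$ from $\varphi(r)=0$ so that the integrated bound is anchored at the reserve, and (ii) the validity of the survival-function/hazard identity, which needs $F$ continuous and strictly below $1$ on $[r,x)$ — both guaranteed since an $\alpha$-strongly regular distribution ($\alpha > 0$) has strictly increasing $\varphi$, hence positive density on its support and a well-defined unique reserve $r = \varphi^{-1}(0)$. Everything else is the elementary integral above.
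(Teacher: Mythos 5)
Your proposal is correct and follows essentially the same route as the paper's proof: both anchor the hazard-rate lower bound of Lemma~\ref{lem:hazard-rate-bound} at the reserve (using $h(r)=1/r$ from $\varphi(r)=0$, so $h(v')\geq \frac{1}{(1-\alpha)v'+\alpha r}$), then integrate it through the survival-function identity $1-F(x)=(1-F(r))\exp\left(-\int_r^x h(t)\,dt\right)$ and exponentiate the resulting logarithm. The only cosmetic difference is that the paper phrases the identity via $H(v)=\int_0^v h(t)\,dt$ and splits the integral at $r$, which is the same computation.
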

\begin{proof}
Let $H(v) = \int_{0}^v h(x)dx$. A well-known property of hazard rates is that $1 - F(v) = e^{-H(v)}$. To see this, observe that $\frac{d}{dx}\ln(1-F(x)) = -\frac{f(x)}{1 - F(x)} = -h(x)$. By the fundamental theorem of calculus, $\int_0^v -h(x) dx = \ln(1 - F(v)) - \ln(1 - F(0)) = \ln(1 - F(v))$, which implies $1 - F(v) = e^{-\int_0^v h(x)dx} = e^{-H(v)}$.

By Lemma~\ref{lem:hazard-rate-bound}, we have
\begin{align*}
    H(v) &= \int_0^v h(x)dx = \int_{0}^r h(x) dx + \int_r^v h(x) dx\\
    &\geq H(r) + \int_r^v \frac{1}{(1-\alpha)(x - r) + r} dx\\
    &= H(r) + \frac{1}{1-\alpha}\big[\ln((1-\alpha)(x - r) + r)\big]^v_r\\
    &= H(r) + \frac{1}{1-\alpha}\ln\bigg(\frac{(1-\alpha) v + \alpha r}{r}\bigg)
\end{align*}
This implies that:
\begin{align*}
    Pr_{v \leftarrow D}[v \geq x] &= e^{-H(x)}\\
    &\leq e^{-H(r)}e^{\frac{1}{1-\alpha}\ln \frac {r} {\alpha r + (1-\alpha)x}}\\
    &= Pr_{v \leftarrow D}[v \geq r]\cdot \bigg(\frac{r}{\alpha r + (1-\alpha) x}\bigg)^{\frac{1}{1 - \alpha}}
\end{align*}
\end{proof}

When $(1-\alpha) = o(r/x)$, we obtain an exponential tail bound:
\begin{lemma}\label{lem:alpha-prob-tail-2}
Let an $\alpha$-strongly regular distribution have CDF $F$ and PDF $f$, and let $r:=\varphi^{-1}(0)$. Then for all $x \geq r$,
$$Pr_{v \leftarrow D}[v \geq x] \leq Pr_{v \leftarrow D}[v \geq r]  \cdot \exp\bigg(-\frac{x/r-1}{\alpha + (1-\alpha)x/r}\bigg)$$
\end{lemma}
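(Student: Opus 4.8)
The plan is to derive this from Lemma~\ref{lem:alpha-prob-tail}, which already gives the polynomial tail bound $\Pr_{v \leftarrow D}[v \geq x] \leq \Pr_{v \leftarrow D}[v \geq r]\cdot\big(\frac{r}{\alpha r + (1-\alpha)x}\big)^{1/(1-\alpha)}$. Hence it suffices to prove the purely analytic inequality
$$\Big(\frac{r}{\alpha r + (1-\alpha)x}\Big)^{\frac{1}{1-\alpha}} \;\leq\; \exp\!\Big(-\frac{x/r-1}{\alpha+(1-\alpha)x/r}\Big)$$
for all $x \geq r$, and then multiply both sides by $\Pr_{v \leftarrow D}[v \geq r]$. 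I would first normalize by setting $t := x/r \geq 1$, so the claim becomes $\big(\frac{1}{\alpha+(1-\alpha)t}\big)^{1/(1-\alpha)} \leq \exp\!\big(-\frac{t-1}{\alpha+(1-\alpha)t}\big)$, with both sides positive.

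Next I would take logarithms of both sides and substitute $u := \alpha + (1-\alpha)t$, noting that $u \geq 1$ whenever $t \geq 1$ and that $t-1 = (u-1)/(1-\alpha)$. After cancelling the common factor $\frac{1}{1-\alpha}$ on both sides, the inequality reduces to the single-variable statement $\ln u \geq 1 - 1/u$. This is the standard logarithm inequality: applying $\ln w \leq w-1$ (valid for all $w>0$) with $w = 1/u$ yields $-\ln u \leq 1/u - 1$, i.e.\ exactly $\ln u \geq 1 - 1/u$. Tracing back through the substitutions gives the desired bound. Alternatively, one can prove the lemma from scratch by the same hazard-rate integration as in Lemma~\ref{lem:alpha-prob-tail}: bound $H(x)-H(r) = \int_r^x h(y)\,dy \geq \int_r^x \frac{dy}{(1-\alpha)(y-r)+r} = \frac{1}{1-\alpha}\ln\!\big(1 + \tfrac{(1-\alpha)(x-r)}{r}\big)$, and then apply $\ln(1+s) \geq \frac{s}{1+s}$ with $s = \frac{(1-\alpha)(x-r)}{r}$ to obtain $H(x)-H(r) \geq \frac{x/r-1}{\alpha+(1-\alpha)x/r}$, whence $\Pr_{v\leftarrow D}[v\geq x] = e^{-H(x)} \leq \Pr_{v\leftarrow D}[v\geq r]\,e^{-(H(x)-H(r))}$ gives the result directly.

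There is essentially no serious obstacle: the only care needed is in the bookkeeping of the substitution $u = \alpha + (1-\alpha)x/r$ — in particular checking that the exponent $\frac{1}{1-\alpha}$ cancels cleanly on both sides after taking logarithms, and confirming $u \geq 1$ so that the standard inequality $\ln u \geq 1-1/u$ applies on exactly the relevant domain. Routing through Lemma~\ref{lem:alpha-prob-tail} keeps the argument shortest, since the heavy lifting (the hazard-rate lower bound of Lemma~\ref{lem:hazard-rate-bound} and the identity $1-F = e^{-H}$) has already been done there.
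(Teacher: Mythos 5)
Your proposal is correct and matches the paper's proof: the paper likewise starts from Lemma~\ref{lem:alpha-prob-tail}, rewrites the exponent as $-\frac{1}{1-\alpha}\ln\left(\alpha+(1-\alpha)x/r\right)$, and applies the same elementary inequality $\ln u \geq (u-1)/u$ before exponentiating. Your alternative direct derivation via $H(x)-H(r)$ and $\ln(1+s)\geq s/(1+s)$ is also sound, but it is just the same estimate unrolled through the hazard-rate integration already done in Lemma~\ref{lem:alpha-prob-tail}.
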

\begin{proof}
We simply rewrite the second term in the inequality of Lemma~\ref{lem:alpha-prob-tail}:
\begin{align*}
\frac{1}{1-\alpha}\ln\bigg(\frac{r}{\alpha r + (1-\alpha) x}\bigg) &= -\frac{1}{1-\alpha}\ln(\alpha + (1-\alpha) x/r)\\
&\leq -\frac{1}{1-\alpha} \frac{\alpha + (1-\alpha)x/r  -1}{\alpha + (1-\alpha)x/r}\\
&= -\frac{1}{1-\alpha} \frac{(1-\alpha)(x/r-1)}{\alpha + (1-\alpha)x/r}\\
&= - \frac{x/r-1}{\alpha + (1-\alpha)x/r}
\end{align*}
The inequality uses the fact $\ln(x) \geq (x - 1)/x$. To conclude the proof, we simply exponentiate both sides of the inequality.
\end{proof}

\section{Omitted Proofs from Section~\ref{sec:strong-regularity}}\label{app:alpha}
\begin{proof}[Proof of Lemma~\ref{lem:alphamhr}]
Again recall that because $D_0$ is $\alpha$-strongly regular, and $v \geq r(D_0)$ whenever event $E$ occurs, that $\varphi(v) - \varphi(r(D_0)) \geq \alpha v - \alpha r(D_0)$. Recalling that $\varphi(r(D_0)) = 0$ by definition, this rearranges to $v \leq r(D_0) + \varphi(v)/\alpha$. We then immediately conclude:
\begin{align*}
\mathbb{E}_{v \leftarrow D_0}\left[v| E\right] &\leq \mathbb{E}_{v \leftarrow D_0}\left[\varphi(v)/\alpha+ r(D_0)|E\right]\\
&=  \frac{1}{\alpha} \cdot \mathbb{E}_{v \leftarrow D_0}\left[\varphi(v)|E\right] + r(D_0).
\end{align*}
\end{proof}

\begin{proof}[Proof of Corollary~\ref{cor:alphanonebig}]
Observe first that whenever the variable $X_i(\vec{v})=1$, the item is awarded to bidder $i$, and therefore the payment made is at most $v_i$. But additionally, in order to sell the item to buyer $i$ when $v_i < \beta_i$ requires concealing at least one commitment and paying a fine (otherwise, bidder $i$ expects not to win the item). As the fine charged per concealed commitment is $k_i$, this means that the seller's total revenue is at most $v_i - k_i$. In particular, this also concludes that the seller's total revenue when awarding the item to buyer $i$ when $v_i < k_i$ is non-positive. Therefore, we can write:
$$\mathbb{E}_{\vec{v} \leftarrow D}\left[R(\vec{v}) \cdot I(\forall\ i, v_i < \beta_i)\right] \leq \mathbb{E}_{\vec{v} \leftarrow D}\left[\sum_i (v_i - k_i)\cdot X_i(\vec{v})\right].$$
But now let's consider $\mathbb{E}_{\vec{v} \leftarrow D}\left[ (v_i - k_i) \cdot X_i(\vec{v})\right]$ separately for each $i$. The event $X_i(\vec{v})=1$ satisfies the hypotheses of Lemma~\ref{lem:alphamhr}, as $X_i(\vec{v})=1$ implies that $v_i > k_i \geq r(D_i)$. Therefore, Lemma~\ref{lem:alphamhr} allows us to conclude that:
\begin{align*}
\mathbb{E}_{\vec{v} \leftarrow D}\left[(v_i - k_i)\cdot X_i(\vec{v})\right]
& \leq \mathbb{E}_{\vec{v} \leftarrow D}\left[(\varphi_i(v_i)/\alpha -k_i + r_i) \cdot X_i(\vec{v})\right].
\end{align*}
\end{proof}

\begin{proof}[Proof of Lemma~\ref{lem:alphapricing}]
Starting from Lemma~\ref{lem:alpha-prob-tail}, we get the following chain of inequalities (letting $r:=r(D_0)$ for simplicity of notation):
\begin{align*}
p \cdot \Pr_{v \leftarrow D_0}[v \geq p] &\leq p \cdot \Pr_{v \leftarrow D_0}[v \geq r] \cdot \left(\frac{r}{(1-\alpha)p + \alpha r}\right)^{\frac{1}{1-\alpha}}\\
& = r \cdot  \Pr_{v \leftarrow D_0}[v \geq r] \cdot (p/r) \cdot \left(\frac{r}{(1-\alpha)p + \alpha r}\right)^{\frac{1}{1-\alpha}}\\
&\leq  r \cdot  \Pr_{v \leftarrow D_0}[v \geq r] \cdot (p/r) \cdot \left(\frac{r}{(1-\alpha)p}\right)^{\frac{1}{1-\alpha}}\\
&=r \cdot  \Pr_{v \leftarrow D_0}[v \geq r] \cdot (1-\alpha)^{-\frac{1}{1-\alpha}} \cdot (p/r) \cdot \left(\frac{r}{p}\right)^{\frac{1}{1-\alpha}}\\
&=r \cdot  \Pr_{v \leftarrow D_0}[v \geq r] \cdot (1-\alpha)^{-\frac{1}{1-\alpha}} \cdot\left(\frac{r}{p}\right)^{\frac{\alpha}{1-\alpha}}.\\
\end{align*}
The final line completes the proof.
\end{proof}

\section{Proof of Proposition~\ref{prop:alphaone}}\label{app:alphaone}
\begin{proof}[Proof of Proposition~\ref{prop:alphaone}]

{
When $\alpha = 1$, we can directly invoke Theorem~\ref{thm:mhr}. When $\alpha \in (0,1)$, we will set $f(n,D):= r(D)\bigg(\bigg(\frac{1}{1-\alpha}\bigg)^{\frac{1}{1-\alpha}}\frac{1}{\alpha}\bigg)^{\frac{1-\alpha}{\alpha}}$ for all $n$. We will use $r:=r(D)$ and $k:=f(n,D)$ for ease of notation. Recall that $R(v)$ denotes the revenue of the auctioneer when the single (real) bidder bids $v$, and $\beta$ denotes the effective commitment to the single (real) bidder. Importantly, note that in the single-bidder case, $\beta$ is \emph{fixed} and \emph{not a random variable} (because the auctioneer just decides what fake bids to submit, because there are no other bidders' bids to forward). So, for example, $I(k < \beta)$ is deterministically $1$ or $0$.

\begin{claim}\label{claim:alpha-single-bidder}
$\mathbb{E}_{v \leftarrow D}[R(v) \cdot I(v < \beta)] \leq (\rev(D) - \mathbb{E}[\varphi(v)/\alpha \cdot I(v > \beta)] )\cdot I(k < \beta)$
\end{claim}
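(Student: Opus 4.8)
The plan is to split on whether $k\ge\beta$ or $k<\beta$: these are precisely the two regimes distinguished by the factor $I(k<\beta)$ on the right-hand side, and in the single-bidder setting $\beta$ (indeed the whole multiset of fake bids) is a fixed quantity once we fix the deviation, so the split is well-defined.

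\emph{Case $k\ge\beta$.} Here the right-hand side is $0$, so it suffices to show $R(v)\cdot I(v<\beta)\le 0$ pointwise. Since the real bidder is truthful, any payment the auctioneer genuinely collects comes from selling the item to the real bidder (a fake bidder ``winning'' is just the auctioneer paying herself). When $v<\beta$, selling to the real bidder forces her to conceal at least one fake commitment --- in particular the one realizing $\beta$, since revealing it makes the effective winning ``bid'' at least $\beta>v$ --- so she collects a price at most $v$ while paying at least one fine of $k\ge\beta>v$, netting at most $v-k<0$; and if she sells to no one her revenue is at most $0$. Hence $R(v)\cdot I(v<\beta)\le 0$.

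\emph{Case $k<\beta$.} The crucial first step is a clean pointwise description of $R(v)$ for $v<\beta$. Let $r=:\beta_0<\beta_1<\cdots<\beta_m=\beta$ be the distinct effective values among the reserve $r$ and the auctioneer's fake bids. If $v\in(\beta_j,\beta_{j+1})$, the auctioneer's best round-two response is to reveal exactly the fake bids of effective value $\le\beta_j$ (selling at price $\beta_j$ while paying fines for the $m-j$ concealed higher ones) or to reveal everything and not sell, so $R(v)\le(\beta_j-k(m-j))^+$. Let $j_*:=\min\{j:\beta_j>k(m-j)\}$, which is well defined since $\beta_m=\beta>0$; then $R(v)=0$ for $v<\beta_{j_*}$, while for $v\in[\beta_{j_*},\beta)$, using $m-j\ge1$, $R(v)\le\beta_j-k\le v-k$. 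Summing, $\mathbb{E}_{v\leftarrow D}[R(v)\cdot I(v<\beta)]\le\mathbb{E}_{v\leftarrow D}[(v-k)\cdot I(\beta_{j_*}\le v<\beta)]$. Now $\beta_{j_*}\ge k\ge r$, so on the event $\beta_{j_*}\le v<\beta$, $\alpha$-strong regularity (as in the proof of Lemma~\ref{lem:alphamhr}: $\varphi(v)\ge\alpha(v-r)$ for $v\ge r$) gives $v-k\le v-r\le\varphi(v)/\alpha$. Hence the bound is at most $\tfrac1\alpha\,\mathbb{E}[\varphi(v)\cdot I(\beta_{j_*}\le v<\beta)]=\tfrac1\alpha\big(\psi(\beta_{j_*})-\psi(\beta)\big)$, where $\psi(p):=\mathbb{E}[\varphi(v)\cdot I(v\ge p)]=p\Pr[v\ge p]$ for $p\ge r$ by Myerson's lemma applied to the posted price $p$. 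Finally, since $\beta_{j_*}\ge k$, Lemma~\ref{lem:alphapricing} gives $\psi(\beta_{j_*})\le r\Pr[v\ge r]\,(1-\alpha)^{-1/(1-\alpha)}(r/\beta_{j_*})^{\alpha/(1-\alpha)}$, which is at most the same expression with $k$ in place of $\beta_{j_*}$; substituting $k=r\big((\tfrac1{1-\alpha})^{1/(1-\alpha)}\tfrac1\alpha\big)^{(1-\alpha)/\alpha}$ makes this exactly $\alpha\, r\Pr[v\ge r]=\alpha\,\rev(D)$. Plugging in yields $\mathbb{E}[R(v)\cdot I(v<\beta)]\le\tfrac1\alpha\big(\alpha\,\rev(D)-\psi(\beta)\big)=\rev(D)-\tfrac1\alpha\,\mathbb{E}[\varphi(v)\cdot I(v>\beta)]$ (using that $v$ is continuous, so $I(v>\beta)$ and $I(v\ge\beta)$ agree a.s.), which is the right-hand side since $I(k<\beta)=1$.

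The step I expect to be the main obstacle is the pointwise bound on $R(v)$ in the second case. A careless bound such as $R(v)\le v-k$ for all $v\in(r,\beta)$ is not good enough: integrating it leaks a spurious $1/\alpha$ onto the \emph{entire} $\rev(D)$ term rather than only onto the $v>\beta$ tail, and the resulting inequality is false for $\alpha<1$. One really must use both that the auctioneer earns nothing until the price she can extract exceeds $k$ (the role of the threshold $\beta_{j_*}$) and that she always pays a fine for the fake bid at $\beta$; everything after that is the algebra above together with Lemmas~\ref{lem:alphamhr} and~\ref{lem:alphapricing} and the tuned value of $k$.
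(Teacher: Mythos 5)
Your proof is correct and follows essentially the same route as the paper's: bound the revenue pointwise by $v-k$ on an event contained in $\{k< v<\beta,\ \text{item sold to the real bidder}\}$, convert $v-k$ to $\varphi(v)/\alpha$ via $\alpha$-strong regularity (Lemma~\ref{lem:alphamhr}), and then combine Myerson's identity for a posted price with Lemma~\ref{lem:alphapricing} and the tuned value of $k$ to turn the head term into $\alpha\,\rev(D)$, leaving exactly $-\mathbb{E}[\varphi(v)/\alpha\cdot I(v>\beta)]$. Your ladder $\beta_0<\cdots<\beta_m$ and threshold $j_*$ are a more elaborate way of obtaining the restriction to $v>k$ that the paper gets in one line (selling while concealing costs at least one fine $k$, so revenue is non-positive when $v\le k$), and your application of Lemma~\ref{lem:alphapricing} at $p=\beta_{j_*}\ge k$ reduces to the paper's application at $p=k$ by monotonicity, so the extra precision changes nothing.
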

\begin{proof}
Define $X(v)$ as the indicator random variable for the event where $k < v < \beta$ and the item is allocated to the real bidder. Observe that the auctioneer charge the buyer at most $v$. Also, if $v < \beta$, then the auctioneer has to conceal at least one bid. In this case, $R(v) \leq v - k$. If $\beta \leq k$, then $v < k$ but if so, the auctioneer gets non-positive revenue. We get:
\begin{align*}
    \mathbb{E}_{v \leftarrow D}[R(v) \cdot I(v < \beta)] &\leq \mathbb{E}_{v \leftarrow D}[(v - k) \cdot X(v)] \cdot I(k < \beta)\\
    &\leq \mathbb{E}_{v \leftarrow D}[\varphi(v)/\alpha \cdot X(v)] \cdot I(k < \beta)
\end{align*}
The first line is simply restating the work above. The second line follows by Lemma~\ref{lem:alphamhr}, observing that $X(v) = 1$ implies $v > k \geq r$. Now observe that if $k \geq \beta$, then $\Pr[X(v)] = 0$ which implies $\mathbb{E}_{v \leftarrow D}[\varphi(v)/\alpha \cdot X(v)] = 0$ and we are done. For the case where $k < \beta$, we can write that $X(v) \leq I(k < v < \beta) = I(v > k) - I(v > \beta)$:
\begin{align*}
    \mathbb{E}_{v \leftarrow D}[\varphi(v)/\alpha \cdot X(v)] &\leq \mathbb{E}_{v \leftarrow D}[\varphi(v)/\alpha \cdot I(v > k)] - \mathbb{E}_{v \leftarrow D}[\varphi(v)/\alpha \cdot I(v > \beta)]\\
    &= \frac{k}{\alpha}Pr_{v \leftarrow D}[v > k] - \mathbb{E}_{v \leftarrow D}[\varphi(v)/\alpha \cdot I(v > \beta)]
\end{align*}
The first line is simply restating the work above and the second line applies Myerson's Lemma for the single item, single bidder auction which sets price $k$. To bound $k\cdot \Pr_{v \leftarrow D}[v > k]$, we directly use Lemma~\ref{lem:alphapricing}:
\begin{align*}
k \cdot \Pr_{v \leftarrow D}[v >k] &\leq  r \cdot \Pr_{v \leftarrow D}[v \geq r] \cdot (1-\alpha)^{-1/(1-\alpha)} \cdot (k/r)^{-\alpha/(1-\alpha)}\\
&\leq r \cdot \Pr_{v \leftarrow D}[v \geq r] \cdot (1-\alpha)^{-1/(1-\alpha)} \cdot \alpha \cdot (1-\alpha)^{\frac{1}{1-\alpha}}\\
&= \alpha \rev(D)
\end{align*}
In the second line, we use the fact $k = r\bigg(\bigg(\frac{1}{1-\alpha}\bigg)^{\frac{1}{1-\alpha}}\frac{1}{\alpha}\bigg)^{\frac{1-\alpha}{\alpha}}$. Putting both bounds together, we conclude the proof of the statement.
\end{proof}
Combining Claim~\ref{claim:alpha-single-bidder} and Lemma~\ref{lem:onebig}, we get:
\begin{align*}
    \mathbb{E}_{v \leftarrow D}[R(v)] &= \mathbb{E}_{v \leftarrow D}[R(v) \cdot I(v > \beta)] + \mathbb{E}_{v \leftarrow D}[R(v) \cdot I(v < \beta)]\\
    &\leq \mathbb{E}[\varphi(v) \cdot I(v > \beta)] + (\rev(D) - \mathbb{E}[\varphi(v)/\alpha \cdot I(v > \beta)])\cdot I(k < \beta)\\
    &\leq \rev(D)
\end{align*}
The first line follows from linearity of expectation. The second line directly applies Lemma~\ref{lem:onebig} and Claim~\ref{claim:alpha-single-bidder}. In the last line, we observe that if $k \geq \beta$, then we can use the fact $\mathbb{E}[\varphi(v) \cdot I(v > \beta)] \leq \rev(D)$, else $\mathbb{E}[\varphi(v) \cdot I(v > \beta)] - \mathbb{E}[\varphi(v)/\alpha \cdot I(v > \beta)] < 0$ because $\alpha \in (0, 1)$.
}
\end{proof}

\section{Example: DRA on Heavy Tail Distributions}\label{app:regular}
\begin{proposition}\label{prop:ultra}
There exists a regular distribution $D$ with unbounded expected value, such that for some $f(n, D)$, $\dra(f)$ is optimal, strategyproof and computationally $2/3$-credible when there is a single (real) buyer from $D$.
\end{proposition}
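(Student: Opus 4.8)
The plan is to exhibit a regular, infinite-expectation distribution whose \emph{revenue curve} decays (in contrast to the equal-revenue curve of Theorem~\ref{thm:regular}) and to pair it with an exponentially growing fine. Concretely I would take $D_0$ with $\Pr_{v\leftarrow D_0}[v\geq t] = e/(t\ln t)$ on $[e,\infty)$. A short computation gives hazard rate $h(v)=\frac{\ln v+1}{v\ln v}$ and virtual value $\varphi(v)=\frac{v}{\ln v+1}$, which is strictly increasing on $[e,\infty)$, so $D_0$ is regular and needs no ironing; since $\varphi(e)=e/2>0$, the Myerson reserve is the left endpoint, $r(D_0)=e$, and $\rev(D_0)=e\cdot\Pr[v\geq e]=e$. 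Also $\mathbb{E}_{v\leftarrow D_0}[v]=e+\int_e^\infty \frac{e}{t\ln t}dt=\infty$, as required. Finally set $f(n,D_0):=e^{n+1}$ (so $f(n,D_0)\geq r(D_0)$; on other distributions $f$ may be defined arbitrarily). Optimality and strategyproofness of $\dra(f)$ are immediate from Observation~\ref{obs:sp}, so the entire content is the $2/3$-credibility bound.

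Next I would put any safe, reasonable deviation into a normal form. With a single real buyer the only moves are to create $m\geq 0$ fake buyers and commit fake bids; by monotonicity of $\bar\varphi$ these are summarized by $m$ \emph{effective bids} $x_1<\cdots<x_m$, all $\geq r(D_0)=e$ without loss (effective bids below the reserve are inert). All are forwarded to the buyer (hiding a fake and concealing it is equivalent to never creating it), so $n_1=m+1$ and the fine paid to the real winner per concealed fake is $k:=f(m+1,D_0)=e^{m+2}$. Conditioned on the revealed bid $v$, the auctioneer's best response is to reveal the prefix $x_1,\ldots,x_\ell$ with $\ell=\max\{j:x_j<v\}$ and conceal the rest — revealing one more fake below $v$ always nets $+(x_{j+1}-x_j)+k>0$, while revealing a fake above $v$ hands the item to a fake and nets $0$ — unless even that prefix yields negative revenue, in which case the auctioneer reveals everything for revenue $0$. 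Hence, with $x_{m+1}:=\infty$, the deviation's expected revenue equals
\begin{equation*}
\sum_{j=1}^{m}\max\{0,\ x_j-(m-j)k\}\cdot\Pr_{v\leftarrow D_0}[v\in[x_j,x_{j+1})].
\end{equation*}

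Then I would bound this sum. Write $\rho(p):=p\Pr_{v\leftarrow D_0}[v\geq p]=e/\ln p$ for the revenue curve, $T_j:=\max\{0,x_j-(m-j)k\}\leq x_j$, and $\pi_j:=\Pr[v\in[x_j,x_{j+1})]\leq\Pr[v\geq x_j]$, so $T_j\pi_j\leq x_j\Pr[v\geq x_j]=\rho(x_j)\leq\rev(D_0)$. Let $j_0:=\min\{j:x_j>(m-j)k\}$; since $x_j$ increases and $(m-j)k$ decreases, $T_j=0$ precisely for $j<j_0$, and the number of ``active'' fakes is $\mu:=m-j_0+1$. As $\rho$ is decreasing and $x_j\geq x_{j_0}$ for $j\geq j_0$,
\begin{equation*}
\sum_{j=1}^{m}T_j\pi_j\;\leq\;\sum_{j=j_0}^{m}\rho(x_j)\;\leq\;\mu\cdot\rho(x_{j_0})\;=\;\mu\cdot\frac{e}{\ln x_{j_0}}.
\end{equation*}
Finally $x_{j_0}>(m-j_0)k=(\mu-1)k\geq e^{m+2}$ when $\mu\geq 2$, so $\ln x_{j_0}\geq m+2\geq\mu$, giving deviation revenue $\leq\mu\cdot e/\mu=e=\rev(D_0)$; when $\mu\leq 1$ the inequality $\rho(x_{j_0})\leq\rev(D_0)$ already suffices, and when $m=0$ the auctioneer is honest. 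In every case a safe, reasonable deviation earns at most $\rev(D_0)\leq 3\rev(D_0)$, so $\dra(f)$ is computationally $2/3$-credible (in fact credible) for $D_0$.

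The main obstacle — and the reason the statement needs a distribution with a \emph{decaying} revenue curve \emph{together with} a growing fine — is this tension: a constant fine cannot work even here, since the auctioneer can place $m$ ``reserves'' along a slowly growing sequence of effective bids and the available lower bound $x_{j_0}\gtrsim mk$ only forces $\ln x_{j_0}\gtrsim\ln(mk)$, leaving the $\mu\leq m$ active reserves to contribute $\approx m/\ln(mk)\to\infty$; an exponentially growing fine $f(n,D_0)\approx e^{n}$ exactly counteracts this by forcing $\ln x_{j_0}\gtrsim m$, collapsing the contribution to $\mu\cdot e/\ln x_{j_0}\leq e$. The only other place needing care is the reduction to the ``effective bids plus adaptive-prefix reveal'' normal form — in particular the observation that to sell to the real buyer the auctioneer must conceal every fake whose effective bid exceeds $v$, so the fine is genuinely unavoidable; the remaining inequalities are routine.
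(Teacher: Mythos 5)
Your proposal is correct, but it reaches the proposition by a genuinely different route than the paper. You keep a moderately heavy tail, $\Pr_{v\leftarrow D_0}[v\geq t]=e/(t\ln t)$ on $[e,\infty)$ (revenue curve $e/\ln p$, infinite mean), and compensate with a fine that grows exponentially in the number of identities shown to the winner, $f(n,D_0)=e^{n+1}$; after reducing any safe, reasonable deviation to the ``effective bids plus adaptive prefix-reveal'' normal form (essentially the same reduction the paper performs at the start of its proof and in Observation~\ref{obs:log-example}), the growing fine forces every \emph{active} fake bid above $e^{m+2}$, so the $\mu\leq m$ active reserves contribute at most $\mu\cdot e/(m+2)\leq e=\rev(D_0)$, and you even get full credibility, not just $2/3$-credibility. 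The paper instead fixes a \emph{constant} fine $f(n,D)=e^e$ and pushes all of the work into the distribution: it uses the super-logarithmic tail $\Pr[v>x]=\tfrac{d}{dx}\slog(x)$, whose price revenue $x\cdot\Pr[v\geq x]$ decays so quickly (Claims~\ref{claim:counterexample-1-marginal-revenue}, \ref{claim:log-bound-1} and~\ref{claim:log-bound-2}) that any deviation earns at most $3$ against an honest revenue of $1$, i.e.\ exactly $2/3$-credibility. Since the proposition only asks for \emph{some} $f(n,D)$, and the contrasting negative result (Theorem~\ref{thm:regular}) quantifies over all $f$ including $n$-dependent ones, your argument fully establishes the stated claim. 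What the paper's version buys is a fine that is independent of the number of commitments the auctioneer fabricates, matching the form of the fines in the positive results (Theorem~\ref{thm:mhr}, Proposition~\ref{prop:alphaone}) and making the contrast with the equal-revenue curve sharper; what yours buys is a more familiar distribution, a stronger (exact) credibility guarantee, and a cleaner, fully explicit normal-form calculation.
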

\begin{proof}
We will define $D$ such that even though it has infinite expected value, the tail of $D$ is not too heavy so that with a constant fine we can limit the revenue the auctioneer can obtain. We will use $f(n, D) = e^e$. For ease of notation $k := f(n, D)$ and $r := r(D)$ is the optimal reserve. Assume there is a single real buyer from $D$.

Let's first define an extension of tetration for positive real numbers.
$$h(x) := \begin{cases}1 + x &\text{, for $-1 < x \leq 0$}\\
e^{h(x-1)} & \text{, for $x > 0$}\end{cases}$$
This function is continuous and differentiable in $(-1, \infty)$ (\cite{hooshmand2006ultra} for a detailed analysis of ultra exponential functions). Differentiating with respect to $x$,
$$h'(x) = \begin{cases}1 & \text{, for $-1 < x \leq 0$}\\
h(x)h'(x-1) & \text{, for $x > 0$}\end{cases}$$
Define the natural super-logarithm $\slog(\cdot)$ to be the inverse of $h(\cdot)$. More formally, $\slog(x) = y$ if and only if $h(y) = x$ which implies the following property for every $x \in (0, \infty)$,
$$\slog(y) = \begin{cases}
y-1 \quad & \text{for $0 < y \leq 1$}\\
1 + \ln^*(\ln(y)) \quad & \text{for $y > 1$}
\end{cases}$$
Observe that $\slog(e^x) = 1 + \slog(x)$ and $\slog(1) = 0$. Informally, one can interpret $\slog(x)$ as counting how many times one must take the natural-logarithm of $x$ to get 1. We define the distribution $D$ supported on $[1, \infty)$ in terms of the CDF:
$$Pr_{v \leftarrow D}[v > x] := \begin{cases} 1 & \text{, for $x \leq 1$}\\ \frac{d}{dx}\slog(x) & \text{, for $x > 1$}\end{cases}$$
By the chain rule, for $x > 1$, $Pr_{v \leftarrow D}[v > x] = \frac{1}{h'(\slog(x))}$. To see this is a valid distribution, observe $Pr_{v \leftarrow D}[v > x]$ is monotone decreasing, $Pr_{v \leftarrow D}[v > 1] = 1$, and $\lim_{n \to \infty}Pr_{v \leftarrow D}[v > n] = 0$. Now observe $D$ has unbounded expected value:
\begin{align*}
E_{v \leftarrow D}[v] &= 1 + \int_{1}^\infty Pr_{v \leftarrow D}[v > x] dx\\
    &= 1 + \lim_{n \to \infty} \slog n - \slog 1 = \infty
\end{align*}
In the next claim, we show that for every price $p \in \mathbb R^+$, $p Pr_{v \leftarrow D}[v > p] \leq 1$.
\begin{claim}\label{claim:counterexample-1-marginal-revenue}
For all $p \in \mathbb R^+$, $p Pr_{v \leftarrow D}[v \geq p] \leq 1$.
\end{claim}
\begin{proof}
If $p \leq 1$, then $p  Pr_{v \leftarrow D}[v \geq p] = p \leq 1$. For $p > 1$, $\slog p > 0$ and by the recursive definition of $h'(\cdot)$, we can expand $h'(\slog p)$:
\begin{align*}
   Pr_{v \leftarrow D}[v > p] &= \frac{1}{h'(\slog p)} = \frac{1}{h(\slog p)h'(\slog p - 1)}\\
    &= \frac{1}{p h'(\slog p - 1)}\\
    &\leq \frac{1}{p}
\end{align*}
where the last inequality follows from the fact $(\ln^*p - 1) > -1$ and $h'(\ln^*p - 1) \geq 1$.
\end{proof}

The claim above implies that $r(D) = 1$ is the optimal reserve since $Pr_{v \leftarrow D}[v > r(D)] = 1$. Let $x_1 < x_2 < ... < x_n$, be the bids sent to the real bidder. For easy of notation, define $x_0 = 0$ and $x_{n+1} = \infty$.

\begin{observation}\label{obs:log-example}
If $n \geq 1$, then $E_{v \leftarrow D}[R(v) \cdot I(0 \leq v < x_1)] = 0$.
\end{observation}
\begin{proof}
If $v < x_1$, the auctioneer has to hide at least $x_1$ which implies the revenue is non-positive since $k > r(D)$.
\end{proof}

\begin{claim}
$E_{v \leftarrow D}[R(v)] \leq \max\{1, n\}$
\end{claim}
\begin{proof}
If $n = 0$, the auctioneer is simply implementing $\dra(f)$ in earnest and the revenue is 1. Assuming $n \geq 1$, observe that if $v \in [x_i, x_{i+1})$ the auctioneer obtains revenue at most $x_i$ by concealing all $x_j > x_i$ and revealing $x_j \leq x_i$. We get:
\begin{align*}
    E_{v \leftarrow D}[R(v)] &\leq \sum_{i = 1}^n E_{v \leftarrow D}[R(v) \cdot I(x_i \leq v < x_{i+1})]\\
    &\leq \sum_{i = 1}^n E_{v \leftarrow D}[x_i \cdot I(v \geq x_i)]\\
    &\leq n
\end{align*}
The first line is linearity of expectation and Observation~\ref{obs:log-example}. The second line uses the fact $I(x_i \leq v < x_{i+1}) \leq I(v \geq x_i)$ and restating the work above. The last line is Claim~\ref{claim:counterexample-1-marginal-revenue}.
\end{proof}

The auctioneer gets revenue $1$ by implementing the auction in earnest, and the claim above states that the auctioneer needs to submit at least $3$ fake bids to get revenue bigger or equal than $3$. Next, We argue that the auctioneer cannot get revenue bigger than $3$, no matter the number of fake bids sent to the real bidder.

So assuming $n > 3$, we next argue that it is without loss of generality to assume that $x_1 > n$. Consider the event where $v \in [x_1, x_2)$ and recall that to charge the real bidder $x_1$, the auctioneer has to conceal $n-1$ bids and pay $k\cdot(n-1) \geq n$ for our choice of $k$. If $x_1 \leq n$, then the auctioneer gets non-positive revenue. We conclude that the auctioneer cannot decrease their revenue by not bidding on $x_1$. So from now on, we will assume that the smallest bid $x_1 > n$.

\begin{claim}\label{claim:log-bound-1}
$E_{v \leftarrow D}[R(v) \cdot I(0 \leq v < e^n)] \leq 2$
\end{claim}
\begin{proof}
Observe that $R(v) \leq v$. We get:
\begin{align*}
    E_{v \leftarrow D}&[R(v) \cdot I(0 \leq v < e^n)] = E_{v \leftarrow D}[R(v) \cdot I(0 \leq v < x_1)] + E_{v \leftarrow D}[R(v) \cdot I(x_1 \leq v < e^n)]\\
    &\leq E_{v \leftarrow D}[v \cdot I(n \leq v < e^n)]\\
    &=\int_0^\infty Pr_{v \leftarrow D}[v \cdot I(n \leq v < e^n) > x] dx\\
    &= \int_{0}^n Pr_{v \leftarrow D}[v \cdot I(n \leq v < e^n) > x] dx + \int_{n}^{e^n} Pr_{v \leftarrow D}[v \cdot I(n \leq v < e^n) > x] dx\\
    &\quad + \int_{e^n}^\infty Pr_{v \leftarrow D}[v \cdot I(n \leq v < e^n)> x] dx\\
    &\leq \int_0^n Pr_{v \leftarrow D}[n \leq v < e^n]dx + \int_{n}^{e^n}Pr_{v \leftarrow D}[v > x]dx + \int_{e^{n}}^\infty Pr_{v \leftarrow D}[n \leq v < e^n | v > x] dx\\
    &\leq n Pr_{v \leftarrow D}[v \geq n] + (\ln^*(e^n) - \ln^*(n)) + 0\\
    &\leq 1 + (1 + \ln^* n - \ln^* n) = 2
\end{align*}
The first line is linearity of expectation. The second line is Observation~\ref{obs:log-example}, restating that $R(v) \leq v$ and $x_1 > n$. The third line is just the integral form of expectation for non-negative random variables. The fourth line is simply linearity of integration. In the fifth line, we observe $Pr_{v \leftarrow D}[v \cdot I(n \leq v < e^n) > x] = Pr_{v \leftarrow D}[v > x, n \leq v < e^n]$. In the sixth line, the first term follows by integrating the constant function in the interval $0$ to $n$, the second term follows from the fundamental theorem of calculus, and the third term is 0 simply by observing the event $\{n \leq v < e^n | v > x > e^n\} = \emptyset$. The last line follows by Claim~\ref{claim:counterexample-1-marginal-revenue} and the recursive definition of super-logarithm.
\end{proof}

\begin{claim}\label{claim:log-bound-2}
$E_{v \leftarrow D}[R(v) \cdot I(v \geq e^n)] \leq 1$
\end{claim}
\begin{proof}
Let $m = \min \{i : x_i \geq e^n\}$. For all $i \geq m$, when $v \in [x_i, x_{i+1})$, $R(v) \leq x_i$ which happens with probability at most $Pr_{v \leftarrow D}[v \geq x_i]$.
\begin{align*}
    E_{v \leftarrow D}[R(v) \cdot I(v \geq e^n)]&\leq \sum_{i = m}^n x_i Pr_{v \leftarrow D}[v \geq x_i]\\
    &= \sum_{i = m}^n \frac{x_i}{h'(\ln^* x_i)}\\
    &=\sum_{i = m}^n \frac{x_i}{x_i \ln x_i h'(\slog x_i - 2)}\\
    &\leq \sum_{i = m}^n \frac{1}{n h'(\slog x_i - 2)}\\
    &\leq \frac{n-m+1}{n}\\
    &\leq 1
\end{align*}
The first line is restating the work above. The second line uses the definition of $D$. The third line uses the fact $x_i \geq e^n \geq e^e$, which implies $h'(\ln^* x_i) = h(\ln^* x_i)h(\ln^* x_i - 1)h'(\ln^* x_i - 2) = x_i \ln x_i h'(\ln^* x_i - 2)$. The fourth line uses $x_i \geq e^n$. The fifth lines uses $x_i \geq e^e$, which implies $h'(\ln^* x_i - 2) \geq 1$.
\end{proof}

Combining the bounds in Claim~\ref{claim:log-bound-1} and Claim~\ref{claim:log-bound-2}, the sequence of inequalities witnesses that the auctioneer can obtain revenue at most $3$ when $f(n, D) = e^e$. Because Myerson's optimal auction gets revenue $1$, we conclude $\dra(f)$ is $2/3$-credible when there is a single bidder from $D$.
\end{proof}

\section{Example: DRA on $\alpha$-Strongly Regular Distributions}\label{app:malleable}
\begin{definition}[Almost reasonable] A commitment is \emph{loosely tied} to $(m_1,r_1),\ldots, (m_k, r_k)$ if the participant who sent $c$ explicitly computed a poly-time function of $c_1,\ldots, c_k$, which were explicitly tied to $(m_1,r_1),\ldots, (m_k,r_k)$. A deviation is \emph{almost reasonable with respect to $g$} for the auctioneer in the communication game if whenever the auctioneer reveals a commitment to $c$, with $c= \commit(m, r)$, then $m=g(m_1,\ldots, m_k)$ for some $(m_1,r_1),\ldots, (m_k,r_k)$ to which $c$ is loosely tied.
\end{definition}

To help get intuition for the definition, observe that any reasonable deviation is also almost reasonable with respect to the identity function. If the commitment scheme is malleable, though, and it is possible to compute a poly-time function $g(\cdot,\ldots, \cdot)$ on un-revealed commitments, then a deviation which forwards a commitment to $g(b_1,\ldots, b_{n-1})$ to bidder $n$ (before the commitments are revealed), and then later reveals $g(b_1,\ldots, b_{n-1})$ is almost reasonable with respect to $g$. 

\begin{definition}[Strongly Computationally Credible] An auction is \emph{strongly computationally credible with respect to $g$} if, in expectation over $\vec{v} \leftarrow D$, and buyers being truthful, the auctioneer maximizes their expected revenue, over all deviations which are both safe and almost reasonable with respect to $g$, by executing the auction in earnest.

An auction is strongly $\varepsilon$-computationally credible with respect to $g$ if executing the auction in earnest yields a $(1-\varepsilon)$-fraction of the expected revenue of any safe, almost reasonable with respect to $g$, deviation.
\end{definition}

Theorems~\ref{thm:malleable} and~\ref{thm:alphabounded} below show that, perhaps surprisingly, both assumptions of unbounded distributions and non-malleable commitment schemes are necessary for Theorem~\ref{thm:alphamhr} (whereas they were not necessary for Theorem~\ref{thm:mhr}, and appear to be just a technical artifact of our proof approach through Lemma~\ref{lem:all}). 

\begin{theorem}\label{thm:malleable} Let $g$ be any function such that $g(m_1,\ldots, m_k) \geq \max_{i \in [k]}\{m_i\}$. Then for all $\alpha < 1$, there exists a $D_0$ which is $\alpha$-strongly regular, such that for all $\varepsilon > 0$ and all $f(\cdot, \cdot)$, there exists a sufficiently large $n$ such that $\dra(f)$ is not strongly $(1-\alpha-\varepsilon)$-computationally credible with respect to $g$ for the instance $D:=\times_{i=1}^n D_0$.
\end{theorem}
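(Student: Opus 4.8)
The plan is to fix the function $g$ and the parameter $\alpha\in(0,1)$ (the case $\alpha=1$ is MHR, where Theorem~\ref{thm:mhr} already shows credibility even with malleable schemes), and take $D_0$ to be the same $\alpha$-strongly regular distribution $F^\alpha$ used in the proof of Theorem~\ref{thm:alphamhrneg}: CDF $1-v^{-1/(1-\alpha)}$ on $[1,\infty)$, so that $\varphi(v)=\alpha v$, $r(D_0)=1$, and $\rev(D_0)=1$. For the instance $D=\times_{i=1}^n D_0$, since $\varphi>0$ on the entire support, Myerson's reserve never binds and the honest execution of $\dra(f)$ coincides with the second-price auction; hence the honest revenue equals $\rev(D)=\mathbb{E}_{\vec v\leftarrow D}[\text{second-highest }v_i]$. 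The first observation I would record is the identity $\rev(D)=\alpha\cdot\mathbb{E}_{\vec v\leftarrow D}[\max_i v_i]$: by Theorem~\ref{thm:myerson} the revenue of the second-price auction is $\mathbb{E}[\varphi(\max_i v_i)]=\alpha\,\mathbb{E}[\max_i v_i]$, while that same revenue is the expected second-highest value, so $\mathbb{E}[\text{second-highest}]=\alpha\,\mathbb{E}[\text{highest}]$, i.e.\ $\mathbb{E}[\max_i v_i]=\rev(D)/\alpha$.

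Next I would exhibit the profitable safe, almost-reasonable (with respect to $g$) deviation. The auctioneer solicits all $n$ real commitments honestly; then, using $g$-malleability, it produces for each bidder $i$ a single fake commitment $c'_i$ loosely tied to (a subset of) the real commitments and designed so that $c'_i$ reveals to the value $g(\cdot)\ge\max_j b_j$; when $g=\max$, taking $c'_i$ to be the malleation of $c_1,\dots,c_n$ yields exactly $v^{(1)}:=\max_j b_j$. The auctioneer forwards to each bidder $i$ all real tuples plus $c'_i$ (so bidder $i$ believes there are $n+1$ bidders), asks everyone to reveal, and then reveals every commitment, real and fake. From each bidder's private transcript this looks like an honest run, so the deviation is safe; every revealed fake commitment equals $g$ of committed values it is loosely tied to, so it is almost reasonable with respect to $g$. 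In the resolution: every bidder $i$ who is not the (a.s.\ unique) highest sees a fake bid $\ge v^{(1)}>b_i$ and therefore loses, consistently with its view; the highest bidder $i^\ast$ sees a fake bid equal to $v^{(1)}=b_{i^\ast}$, wins the tie by placing the fake after $i^\ast$ in the lexicographic order, and is charged $\bar\varphi_{i^\ast}^{-1}(\varphi(v^{(1)}))=v^{(1)}$. No commitment is ever concealed, so no fine is paid, and the deviation's expected revenue is $\mathbb{E}[v^{(1)}\cdot I(v^{(1)}>1)]=\mathbb{E}[\max_i v_i]=\rev(D)/\alpha$, independently of $f$.

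Combining these, for every $\varepsilon>0$ the honest revenue $\rev(D)=\alpha\cdot(\rev(D)/\alpha)<(\alpha+\varepsilon)\cdot(\text{deviation revenue})$, which is exactly the statement that $\dra(f)$ is not strongly $(1-\alpha-\varepsilon)$-computationally credible with respect to $g$; for $g=\max$ this already holds with $n=2$. The step I expect to be the real obstacle is making the fake bid land \emph{exactly} at $v^{(1)}$ when $g$ strictly overshoots $\max$ even on a single argument (e.g.\ $g(m_1,\dots,m_k)=\max_i m_i+1$): then the naive malleation gives a fake bid strictly above $v^{(1)}$, so the ``fake bidder'' would beat $i^\ast$ as well, and concealing it both incurs a fine and drops the price to the real second-highest value. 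This is where the quantifier ``there exists a sufficiently large $n$'' (with $n$ allowed to depend on $f$ and $\varepsilon$) is used: I would pad the instance with many auctioneer-created minimum-value ($b=1$) bidders, feed $g$ a mixture of bidder $i$'s commitment and these padding commitments, and choose $n$ large enough that the overshoot event and the (bounded) fine it triggers together cost at most an $\varepsilon$-fraction of $\rev(D)/\alpha$, so the deviation still earns $(1-o(1))\rev(D)/\alpha$. Checking that such a padding works for \emph{every} $g$ with $g\ge\max$, and that the padded deviation is still safe and almost reasonable, is the technical heart; a convenient shortcut is that whenever the single-argument restriction satisfies $g(m)=m$ (true for max, sum, average, and essentially every natural malleability) the one-commitment malleation of $c_{i^\ast}$ already gives full extraction with no overshoot, so only the residual ``strictly inflating'' $g$'s need the padding argument.
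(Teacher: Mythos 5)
Your setup (the Pareto-type $\alpha$-strongly regular $F^\alpha$ with $\varphi(v)=\alpha v$, and the identity $\rev(D)=\alpha\cdot\mathbb{E}_{\vec v\leftarrow D}[\max_i v_i]$) matches the paper, and your deviation is correct for the special case in which $g$, applied to the arguments you feed it, returns \emph{exactly} the maximum (e.g.\ $g=\max$, or any $g$ that is the identity on a single argument): mall the winner's own commitment, reveal everything, pay no fines, and extract $\mathbb{E}[\max_i v_i]=\rev(D)/\alpha$. But the theorem quantifies over \emph{every} $g$ with $g(m_1,\ldots,m_k)\geq\max_i\{m_i\}$, and for strictly inflating $g$ (say $g=\max+1$ or $g=2\max$) your construction fails exactly where you flag it, and the padding patch you sketch does not repair it. The overshoot is not a rare event to be amortized: for such $g$, any malled commitment whose loosely-tied set contains $c_{i^*}$ opens, deterministically, to a value strictly above $b_{i^*}$, and adding value-$1$ padding commitments cannot shrink that overshoot. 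You are then left with a binary choice at resolution time: reveal the fake (so the highest real bidder loses and the item goes unsold) or conceal it (pay a fine and charge only the largest \emph{revealed} effective bid, i.e.\ the second-highest real bid). Either way the deviation earns at most $\rev(D)$ minus fines, so there is no gain to amortize and no choice of $n$ rescues the argument.

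The idea you are missing is how the paper decouples the two roles of the fake bids. The $g$-malled commitment sent to bidder $i$ is $y_i^*:=g(\vec b_{-i})$, computed from the \emph{other} bidders' commitments only, and it is used solely to keep the deviation safe: every non-highest bidder always sees a revealed bid above their own, no matter how much $g$ overshoots, and when $y_{i^*}^*$ gets in the winner's way it is simply concealed at the cost of one fine. The revenue extraction comes from a separate, fixed geometric ladder of ordinary fake commitments $x_\ell=\delta(1+\delta)^\ell n^{1-\alpha}$ for $\ell=0,\ldots,z$, where $z$ depends only on $\delta$ and $\alpha$; the auctioneer reveals to the winner exactly the ladder prices below $b_{i^*}$ and conceals the rest, charging roughly $(1-\delta)\max_i v_i$ whenever the maximum lands in $[x_0,x_z]$, while paying at most $z$ fines per auction --- a quantity independent of $n$. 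Since $\mathbb{E}[\max_i v_i]=\Theta(n^{1-\alpha})\to\infty$, taking $n$ large makes both the fines $z\cdot f(z+2,D_0)$ and the mass outside $[x_0,x_z]$ negligible, yielding revenue $(1-O(\delta))\cdot\rev(D)/\alpha$ for every $g\geq\max$ and every fine function $f$. Without this ladder-plus-safety decomposition, your argument covers only those $g$ for which malleation can hit the maximum exactly, which is strictly weaker than the theorem.
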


In Theorem~\ref{thm:alphabounded} below, by the notation $D_0^T$, we mean the distribution $D_0$ \emph{truncated at $T$} (all probability mass above $T$ is moved to $T$. Formally, $\Pr_{v \leftarrow D_0}[v \geq x] = \Pr_{v \leftarrow D_0^T}[v \geq x]$ for all $x \leq T$, and $\Pr_{v \leftarrow D_0^T}[v \geq x] = 0$ for all $x > T$). 

\begin{theorem}\label{thm:alphabounded} For all $\alpha < 1$, there exists a distribution $D_0$ which is $\alpha$-strongly regular, such that for all $\varepsilon > 0$ and all $f(\cdot, \cdot)$ such that $f(n,D)$ depends on $\alpha, \varepsilon, r(D), n$ (but may depend arbitrarily on these values), there exists a sufficiently large $T$ and $n$ such that $\dra(f)$ is not $(1-\alpha-\varepsilon)$-credible for the instance $D:= \times_{i=1}^n D^T_0$.
\end{theorem}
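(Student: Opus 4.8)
The plan is to keep the $\alpha$-strongly regular family from Theorem~\ref{thm:alphamhrneg}: take $D_0$ with CDF $F^\alpha$, so $\varphi^{F^\alpha}(v)=\alpha v$, $r(D_0)=1$, and $\Pr_{v\leftarrow D_0}[v\ge x]=x^{-1/(1-\alpha)}$. The one structural fact that makes the whole argument possible is that truncating at any $T>1$ leaves the Myerson reserve at $1$, so $r(D_0^T)=1$ and hence $f(n',D_0^T)$ is a function of $n'$ alone (and of the fixed $\alpha,\varepsilon$) --- in particular it does \emph{not} depend on $T$, which lets us see $f$, then commit to $n$, and only then send $T$ to infinity. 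For the benchmark I would first use Theorem~\ref{thm:myerson}: the honest revenue on $D=\times_{i=1}^nD_0^T$ equals $\mathbb E[\varphi(v_{(1)})]$ where $v_{(1)}$ is the largest of the $n$ values; when $T$ is huge relative to $n$ the top atom is negligible and this is $(1+o(1))\,\alpha\,\mathbb E[v_{(1)}]$ with $\mathbb E[v_{(1)}]=\Theta(n^{1-\alpha})\to\infty$. I would also record the easy matching upper bound that \emph{any} safe, reasonable deviation earns at most $\mathbb E[v_{(1)}]$ (the winner is charged at most their own value, and fines only leave the auctioneer's pocket), so the target ratio $1/(\alpha+\varepsilon)$ lies strictly between $1$ and $1/\alpha$, i.e.\ it is attainable in principle.

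The deviation I would build is an \emph{oblivious fake-reserve ladder} that each bidder sees in isolation. Using the sequential-timeline manoeuvre already legitimised in Section~\ref{sec:dra}, the auctioneer forwards to every bidder $i$ only that bidder's own commitment together with a fixed geometric ladder $q_0<q_1<\dots<q_m$ of fresh fake commitments; crucially the ladder is chosen with ratio $c>1$ a constant, spanning a \emph{bounded-ratio} window $[q_0,q_m]=[\,c'n^{1-\alpha},\,Cn^{1-\alpha}\,]$ around the \emph{computable} typical value of $v_{(1)}$, so $m=\log_c(C/c')=O(1)$ and each bidder believes they are in an $(m+2)$-bidder instance of $\dra(f)$. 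After delaying long enough to learn all $n$ values the auctioneer picks the true maximiser $i^\dagger$, and (only for bidder $i^\dagger$'s local instance) reveals the ladder prefix up to the largest $q_\ell<v_{i^\dagger}$ while concealing $q_{\ell+1},\dots,q_m$; for every other bidder it reveals the whole ladder, so those bidders lose with no concealment. On the high-probability event that $v_{(1)}$ lands in $[c'n^{1-\alpha},Cn^{1-\alpha}]$ --- quantified from Lemma~\ref{lem:alpha-prob-tail} for the upper side and a direct estimate for the lower side, with $C$ large and $c'$ small constants --- bidder $i^\dagger$ is charged $q_\ell\ge v_{(1)}/c$ instead of the Myerson price, at most $m-\ell\le m=O(1)$ fake bids are concealed, and the total fine transferred to $i^\dagger$ is $O(1)\cdot f(m+2,D_0)$, which is an \emph{absolute constant}: since each bidder sees only $m+2=O(1)$ tuples, the fine is evaluated at an argument that does not grow with $n$. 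Off that event (either $v_{(1)}>q_m$, where revealing the whole ladder to $i^\dagger$ still gives at least the honest price with no concealment; or $v_{(1)}<q_0$, of probability $e^{-\Theta(c'^{-1/(1-\alpha)})}$, where the loss is at most $O(1)\cdot f(m+2,D_0)+v_{(1)}\le O(c'n^{1-\alpha})$) the contribution is $o(\mathbb E[v_{(1)}])$ once $c'$ is small enough.

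Putting these together, the deviation's expected revenue is at least $\tfrac1c\,\mathbb E[v_{(1)}](1-o(1))-O(1)$, against an honest revenue of $(1+o(1))\,\alpha\,\mathbb E[v_{(1)}]$; choosing $c$ close enough to $1$ (a fixed constant with $1/c>\alpha+\varepsilon/2$ say), then $n$ large so that the additive $O(1)$ and $o(\mathbb E[v_{(1)}])$ terms are dwarfed by $\mathbb E[v_{(1)}]=\Theta(n^{1-\alpha})$, and finally $T$ large so the atom is negligible, yields deviation revenue $>\tfrac1{\alpha+\varepsilon}\cdot\rev(D)$, i.e.\ $\dra(f)$ is not $(1-\alpha-\varepsilon)$-credible. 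I expect the genuine difficulty to be exactly the fine bookkeeping that I have glossed: one must (i) check the deviation is safe and reasonable --- the only delicate points being that each fake bidder's reveal/conceal can be ``explained'' by a valid (randomised) strategy and that presenting each real bidder with a disjoint $(m+2)$-bidder view is permitted
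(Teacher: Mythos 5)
Your overall blueprint matches the paper's: the same $\alpha$-strongly regular $D_0$ with $\varphi(v)=\alpha v$, the observation that truncation leaves $r(D_0^T)=1$ so the fine cannot secretly depend on $T$, a geometric ladder of constantly many fake commitments around $n^{1-\alpha}$ so that total fines are an absolute constant, revealing a prefix of the ladder to the top bidder, and then taking $n$ (and $T$) large so that $\Theta(n^{1-\alpha})$ revenue swamps the $O(1)$ fines while honest revenue is only $\approx\alpha\,\mathbb{E}[\max_i v_i]$. However, there is a genuine gap exactly at the point you glossed over: your deviation is \emph{not safe}. With probability bounded away from zero (independent of $n$), a \emph{losing} bidder $j\neq i^\dagger$ has $v_j>q_m=Cn^{1-\alpha}$ --- indeed the second-highest value is also $\Theta(n^{1-\alpha})$, so for constant $C$ the event that two bidders exceed $q_m$ has constant probability. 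Such a bidder sees the entire ladder revealed below their own bid and therefore expects to win, yet the item goes to $i^\dagger$; concealing rungs does not help (concealed commitments are treated as bid $0$, so $j$ still expects to win and moreover expects to collect fines), and by reasonableness the auctioneer cannot reveal anything it did not commit to. Your ``off-event'' analysis for $v_{(1)}>q_m$ only discusses the winner's view; the losers' views are where safety breaks, and a deviation that is inconsistent with some bidder's view on a positive-probability event is simply not a safe deviation.

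This is precisely the role of the truncation $T$ in the paper's construction, which your proposal never uses inside the deviation (you only use $T$ to control the benchmark): the auctioneer additionally commits, for each bidder $i$, to one bid $y_i^*:=T+1$ above the truncated support. This single extra commitment is revealed to every loser (so every loser legitimately expects to lose, whatever their value), and for the winner it is either revealed (when the ladder is not being exploited) or concealed at the cost of one extra fine. You cannot repair your version by extending the ladder up to $T$: the number of rungs would then grow like $\log T$, each bidder's perceived number of participants $m+2$ would grow with it, and since $f$ may depend arbitrarily on that argument, the fine bound $m\cdot f(m+2,D_0)$ is no longer a constant, destroying the final comparison with $\rev(D)=\Theta(n^{1-\alpha})$. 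So the missing device (one ``sure-to-exceed'' commitment per bidder, enabled by boundedness) is not a bookkeeping detail but the crux of why the theorem is stated for $D_0^T$ --- and why, for unbounded distributions, the paper instead needs a malleable commitment scheme (Theorem~\ref{thm:malleable}) while Theorem~\ref{thm:alphamhr} rules out any such deviation.
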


The proofs of Theorems~\ref{thm:malleable} and~\ref{thm:alphabounded} will be nearly identical, and just wrap up differently at the end. Specifically, the $D_0$ for both theorems is the same, and the deviations are the same in spirit. First, we repeat the $D_0$ we use, which is the same from the proof of Theorem~\ref{thm:alphamhrneg}, repeated below for completeness:

\begin{minipage}{.5\linewidth}
$$F^\alpha(v) = \begin{cases}0 \quad & \text{, $v < 1$}\\
1 - \big(\frac 1 v\big)^{\frac 1 {1-\alpha}} \quad & \text{, $v \geq 1$} \end{cases}$$
\end{minipage}
\begin{minipage}{.5\linewidth}
$$f^\alpha(v) = \begin{cases} 0 \quad &\text{, $v < 1$}\\
\frac{1}{1-\alpha}\big(\frac 1 v\big)^{\frac {2-\alpha}{1-\alpha}} \quad &\text{, $v \geq 1$}\end{cases}$$
\end{minipage}

The hazard rate of $F^\alpha$ is $h^{F^\alpha}(v) = \frac{1}{(1-\alpha) v}$ for $v \geq 1$ and the virtual value function of $F^\alpha$ is $\varphi^{F^\alpha}(v) = v - \frac{1}{h^{F^\alpha}(v)} = \alpha v$, so $D_0$ is $\alpha$-strongly regular. 

Our proof will proceed as follows. First, we will quickly establish that $\rev(D) = \alpha \cdot \mathbb{E}_{\vec{v} \leftarrow D}[\max_i \{v_i\}]$. Then, we will design a deviation which achieves revenue arbitrarily close to $\mathbb{E}_{\vec{v} \leftarrow D}[\max_i \{v_i\}]$ for sufficiently large $n$. Finally, we will show that this deviation satisfies the properties of the two theorems (separately). 

\begin{claim} For any $n$, and $D:= D_0^n$, $\rev(D) = \alpha \cdot \mathbb{E}_{\vec{v} \leftarrow D}[\max_i \{v_i\}]$. 
\end{claim}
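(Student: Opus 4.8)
The plan is to read off $\rev(D)$ directly from Myerson's characterization (Theorem~\ref{thm:myerson}), using the fact recorded in the display just above the claim that the virtual value function of $D_0$ is $\varphi^{F^\alpha}(v) = \alpha v$ on its support $[1,\infty)$. Here $D := D_0^n$ is understood as the product $\times_{i=1}^n D_0$ of $n$ i.i.d.\ copies of $D_0$.

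First I would record two structural observations about $D_0$. Since $\varphi^{F^\alpha}(v) = \alpha v$ is strictly increasing on $[1,\infty)$, the distribution $D_0$ is regular, so no ironing is needed and $\bar\varphi^{F^\alpha} = \varphi^{F^\alpha}$. Since moreover $\varphi^{F^\alpha}(v) \ge \alpha > 0$ for every $v$ in the support, the Myerson reserve is $r(D_0) = 1$ (the left endpoint of the support), and the revenue-optimal auction \emph{always} allocates the item, to the bidder attaining the largest virtual value; by monotonicity of $\varphi^{F^\alpha}$ that is precisely a bidder attaining $\max_i v_i$. Writing $x_i(\vec v) \in \{0,1\}$ for this optimal allocation rule (ties broken arbitrarily), we have $\sum_i x_i(\vec v) = 1$ and $x_i(\vec v) = 1$ only if $v_i = \max_j v_j$.

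Then I would invoke Theorem~\ref{thm:myerson}:
$$\rev(D) = \mathbb{E}_{\vec v \leftarrow D}\Big[\textstyle\sum_{i=1}^n x_i(\vec v)\,\varphi^{F^\alpha}(v_i)\Big] = \mathbb{E}_{\vec v \leftarrow D}\Big[\varphi^{F^\alpha}\big(\textstyle\max_i v_i\big)\Big] = \mathbb{E}_{\vec v \leftarrow D}\Big[\alpha \max_i v_i\Big] = \alpha\cdot \mathbb{E}_{\vec v \leftarrow D}\big[\textstyle\max_i v_i\big],$$
where the second equality uses that exactly one $x_i$ is nonzero and it sits on a coordinate achieving the maximum value, and the third uses $\varphi^{F^\alpha}(v) = \alpha v$.

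I do not expect a genuine obstacle: the only points needing care are the two structural observations above — that the optimal mechanism needs no ironing (monotonicity of $\varphi^{F^\alpha}$) and that it always sells the item (positivity of $\varphi^{F^\alpha}$ on the support). If one instead needs the statement for the truncated product $\times_{i=1}^n D_0^T$ used in Theorem~\ref{thm:alphabounded}, the same argument goes through verbatim: truncation at $T$ only introduces an atom at $T$ whose ironed virtual value is $\alpha T$, the reserve remains $1$, the item is still always awarded to a highest bidder, and the final chain of equalities is unchanged.
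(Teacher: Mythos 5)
Your proof of the claim is correct and is essentially the paper's own argument: invoke Myerson's lemma for the optimal auction, note $\varphi^{F^\alpha}(v)=\alpha v$ is increasing and strictly positive on $[1,\infty)$ so the item always goes to the highest bidder with no ironing, and conclude $\rev(D)=\mathbb{E}[\max_i\varphi(v_i)]=\alpha\,\mathbb{E}[\max_i v_i]$. One caution about your closing aside on the truncated product $\times_{i=1}^n D_0^T$: the ironed virtual value at the top atom of $D_0^T$ is $T$, not $\alpha T$ (on the top quantile interval the revenue curve is $R(q)=qT$, with slope $T$), so the exact identity becomes $\rev(D^T)=\alpha\,\mathbb{E}[\max_i v_i]+(1-\alpha)\,T\cdot\Pr[\max_i v_i=T]$ rather than $\alpha\,\mathbb{E}[\max_i v_i]$; this does not affect the claim as stated, which concerns only the untruncated $D_0^n$, but the truncated version only holds approximately (for $T$ large relative to $n$), not verbatim.
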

\begin{proof}
This could be verified by direct (but tedious) calculations. A simpler proof observes that $\rev(D)= \mathbb{E}[\max_i \{\varphi(v_i)\}]$ by Myerson's lemma. But because $\varphi(v_i):= \alpha v_i$, we immediately get that $\rev(D)= \mathbb{E}[\max_i \{\varphi(v_i)\}] = \alpha \cdot \mathbb{E}[\max_i \{v_i\}]$. 
\end{proof}

Now, consider the following deviation in $\dra(f)$. For any desired parameter $\delta$, we will argue that the following deviation gets revenue (not counting fines) at least $(1-5\delta)\cdot \mathbb{E}[\max_i \{v_i\}]$, and also pays at most $\poly(1/\delta)$ fines. Importantly, \emph{the number of fines paid will depend only on $\delta$ and is independent of $n$} (it is not crucial that it is polynomial in $1/\delta$). 

Our deviation is as follows: for {$\ell=0$} to $z$, where $z:= \frac{(1-\alpha) \cdot \ln_{1+\delta}(1/(\alpha\delta  {\cdot \delta^{\alpha/(1-\alpha)}}))}{{\alpha}}$, let $x_\ell:= \delta \cdot (1+\delta)^\ell \cdot n^{1-\alpha}$. Upon receiving commitments to $b_1,\ldots, b_n$ from the bidders, the auctioneer will send to bidder $i$:
\begin{itemize}
\item A commitment to $x_\ell$, for all $\ell$.
\item A single additional commitment to $y^*_i$, where it is guaranteed that $b_i > y^*_i$ \emph{only if} $b_i > b_j$ for all $j \neq i$ (but perhaps $b_i \leq y^*_i$ for all $i$, this is also fine). Note that we defer to the last step of each proof exactly how to accomplish this for the two theorems of interest. 
\end{itemize}

After asking all bidders to reveal, the auctioneer will reveal its own commitments as follows:
\begin{itemize}
\item To all bidders $j \neq \arg\max_i \{b_i\}$, reveal all commitments. Note that this guarantees that bidder $j$ does not expect to win the item.
\item To bidder $i^* := \arg\max_i \{b_i\}$: 
\begin{itemize}
\item If $b_{i^*} < x_0$, reveal all commitments. 
\item If $b_{i^*} > x_{z}$, reveal all commitments except $y^*_{i^*}$.
\item If $b_{i^*} \in [x_\ell,x_{\ell+1})$, reveal commitments $x_0,\ldots, x_\ell$ and conceal the rest.
\end{itemize}
\end{itemize}

Now, we analyze the revenue of this deviation \emph{excluding fines}. We first need a quick observation about the relationship between $n^{1-\alpha}$ and $\mathbb{E}[\max_i \{v_i\}]$.

\begin{observation}\label{obs:quick} $\mathbb{E}[\max_i \{v_i\}] \geq (1-1/e)n^{1-\alpha}$. 
\end{observation}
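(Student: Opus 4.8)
The plan is to prove this by a single application of the elementary bound $\mathbb{E}[Y] \geq t \cdot \Pr[Y \geq t]$, valid for any nonnegative random variable $Y$, applied to $Y := \max_i\{v_i\}$ at the threshold $t := n^{1-\alpha}$. First I would note that since every $v_i \geq 1$ (the support of $F^\alpha$ is $[1,\infty)$), the maximum is nonnegative, so this Markov-type inequality is available. The only real content is then to evaluate the tail probability $\Pr[\max_i\{v_i\} \geq n^{1-\alpha}]$ from below.

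Since the $v_i$ are i.i.d.\ with CDF $F^\alpha$, we have $\Pr[\max_i\{v_i\} < n^{1-\alpha}] = F^\alpha(n^{1-\alpha})^n$. I would then plug $v = n^{1-\alpha}$ into $F^\alpha(v) = 1 - (1/v)^{1/(1-\alpha)}$: the exponent $1/(1-\alpha)$ in the tail of $F^\alpha$ is exactly matched by the choice $t = n^{1-\alpha}$, so $(1/v)^{1/(1-\alpha)} = \big(n^{-(1-\alpha)}\big)^{1/(1-\alpha)} = 1/n$, giving $F^\alpha(n^{1-\alpha}) = 1 - 1/n$. Hence $\Pr[\max_i\{v_i\} < n^{1-\alpha}] = (1-1/n)^n \leq e^{-1}$, using the standard estimate $1 - x \leq e^{-x}$ (so that $(1-1/n)^n \le e^{-n/n} = e^{-1}$ for all $n \geq 1$). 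Therefore $\Pr[\max_i\{v_i\} \geq n^{1-\alpha}] \geq 1 - 1/e$, and combining with the Markov bound yields $\mathbb{E}[\max_i\{v_i\}] \geq n^{1-\alpha}(1 - 1/e)$, which is exactly the claimed inequality.

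There is essentially no obstacle here; this is a routine tail computation. The one point worth flagging is \emph{why} the bound takes the form $n^{1-\alpha}$ rather than some other power of $n$: it is precisely the scaling at which the $n$-fold maximum of $F^\alpha$ becomes $\Theta(1)$-likely, which is the same scaling that motivates the choice $x_\ell := \delta \cdot (1+\delta)^\ell \cdot n^{1-\alpha}$ for the fake bids in the deviation, so Observation~\ref{obs:quick} is what lets us later conclude that this family of fake bids already captures a $(1-o(1))$-fraction of $\mathbb{E}[\max_i\{v_i\}]$ (and hence, by the preceding claim, a $(1-o(1))$-fraction of $\rev(D)/\alpha$) while paying only $\poly(1/\delta)$ fines.
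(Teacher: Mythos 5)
Your proof is correct and follows essentially the same route as the paper: both compute that each $v_i$ exceeds $n^{1-\alpha}$ with probability exactly $1/n$, deduce $\Pr[\max_i\{v_i\}\geq n^{1-\alpha}]\geq 1-(1-1/n)^n\geq 1-1/e$, and conclude via the elementary bound $\mathbb{E}[Y]\geq t\cdot\Pr[Y\geq t]$. No issues.
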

\begin{proof}
Observe that $\Pr[\max_i \{v_i\} > n^{1-\alpha}] \geq (1-1/e)$ (because each $v_i$ exceeds $n^{1-\alpha}$ independently with probability exactly $1/n$). Therefore, $\mathbb{E}[\max_i \{v_i\}] \geq (1-1/e) n^{1-\alpha}$.
\end{proof}

\begin{lemma}\label{lem:deviation} Excluding fines, the deviation above guarantees revenue at least $(1-5\delta)\cdot \mathbb{E}[\max_i \{v_i\}]$ in expectation.
\end{lemma}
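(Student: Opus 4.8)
The plan is a case analysis on $M := \max_i v_i$. Since the lemma conditions on truthful buyers we have $b_i = v_i$ for all $i$, so $M = \max_i b_i = b_{i^*}$; and — taking $n$ large enough that $x_0 = \delta n^{1-\alpha} \ge 1 = r(D_0)$ and declaring the distribution $D_0$ for every fake bidder — each revealed fake commitment to $x_\ell$ contributes an effective bid exactly $x_\ell$ to bidder $i^*$ (no ironing is needed since $\varphi^{F^\alpha}(v) = \alpha v$). I would split on where $M$ lands in the grid $x_0 < x_1 < \cdots < x_z$: (i) if $M < x_0$, the auctioneer reveals all of its commitments to everyone, so some fake bid is the highest revealed effective bid seen by $i^*$, the item goes to a fake bidder, and the revenue excluding fines is $0$; (ii) if $M \in [x_\ell, x_{\ell+1})$ for some $\ell \le z-1$, then $i^*$ has the highest revealed effective bid, wins, and is charged at least $x_\ell$ (the effective bid of the revealed fake commitment to $x_\ell$), which by $M < x_{\ell+1} = (1+\delta)x_\ell$ is at least $M/(1+\delta) \ge (1-\delta)M$; (iii) if $M \ge x_z$, then again $i^*$ wins and is charged at least $x_z$. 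I would also note this is a safe deviation: each $j \ne i^*$ sees a fully-revealed transcript containing a fake bid $y^*_j \ge b_j$, so $j$ losing is consistent with honest play, while $i^*$'s transcript is an honest run in which some committed bids were never revealed (those concealments incur fines, excluded here and bounded in the claim that follows).

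Combining the cases, $\mathbb{E}[R] \ge (1-\delta)\,\mathbb{E}\!\left[M \cdot I(x_0 \le M < x_z)\right] + x_z \Pr[M \ge x_z]$. Next I would write $\mathbb{E}[M \cdot I(x_0 \le M < x_z)] = \mathbb{E}[M] - \mathbb{E}[M \cdot I(M < x_0)] - \mathbb{E}[M \cdot I(M \ge x_z)]$, bound the bottom range by $\mathbb{E}[M \cdot I(M < x_0)] \le x_0 = \delta n^{1-\alpha} \le 2\delta\,\mathbb{E}[M]$ via Observation~\ref{obs:quick}, and handle the top range through $-(1-\delta)\,\mathbb{E}[M \cdot I(M \ge x_z)] + x_z \Pr[M \ge x_z] \ge -\mathbb{E}[(M - x_z)\cdot I(M \ge x_z)] = -\int_{x_z}^\infty \Pr[M > t]\,dt$, using the union bound $\Pr[M > t] \le n\,t^{-1/(1-\alpha)}$ to evaluate the integral as $\tfrac{(1-\alpha)n}{\alpha}\,x_z^{-\alpha/(1-\alpha)}$.

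The only genuine computation — and the step I expect to need the most care — is checking that the stated value of $z$ is exactly what drives this last quantity down to $O(\delta)\,\mathbb{E}[M]$: unwinding $z$ gives $(1+\delta)^z = (\alpha\,\delta^{1/(1-\alpha)})^{-(1-\alpha)/\alpha}$, hence $x_z = (\alpha\delta)^{-(1-\alpha)/\alpha}\,n^{1-\alpha}$ and $x_z^{-\alpha/(1-\alpha)} = \alpha\delta\,n^{-\alpha}$, so $\tfrac{(1-\alpha)n}{\alpha}\,x_z^{-\alpha/(1-\alpha)} = (1-\alpha)\,\delta\,n^{1-\alpha} \le 2\delta\,\mathbb{E}[M]$, again by Observation~\ref{obs:quick}. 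Substituting both tail estimates back yields $\mathbb{E}[R] \ge (1-\delta)(1-2\delta)\,\mathbb{E}[M] - 2\delta\,\mathbb{E}[M] \ge (1-5\delta)\,\mathbb{E}[M]$. Everything apart from this top-of-grid bookkeeping is immediate: the geometric spacing costs only a $(1+\delta)$ factor, the item reliably goes to the true top bidder $i^*$ in the middle range, and the bottom range is negligible because $n^{1-\alpha} = \Theta(\mathbb{E}[M])$ and $x_0$ is only a $\delta$-fraction of it.
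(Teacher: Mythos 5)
Your proof is correct and follows essentially the same route as the paper: split on where $\max_i v_i$ falls on the grid, collect a $(1-\delta)$-fraction of the max in the middle range, and bound both tails by $O(\delta)\,n^{1-\alpha} \le O(\delta)\,\mathbb{E}[\max_i v_i]$ using Observation~\ref{obs:quick} together with the same unwinding of $z$ giving $x_z^{-\alpha/(1-\alpha)} = \alpha\delta n^{-\alpha}$. The only (harmless) difference is cosmetic: you retain the $x_z\Pr[M\ge x_z]$ revenue above the grid and bound the excess $\mathbb{E}[(M-x_z)^+]$ by a union-bound tail integral, whereas the paper drops that revenue and bounds $\mathbb{E}[M\cdot I(M> x_z)]$ directly via the virtual-value identity $\varphi(v)=\alpha v$; both yield the same $(1-5\delta)$ constant.
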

\begin{proof}
Observe first that the payment made by bidder $i^*$ is:
\begin{itemize}
\item At least $0$, when $v_{i^*} \notin [x_0,x_{z}]$.
\item At least $(1-\delta)\cdot v_{i^*} \in [x_0,x_{z}]$. 
\end{itemize}
This is just because payments (excluding fines) are always non-negative, and because whenever $v_{i^*} \in [x_0,x_{z}]$, there is always a revealed commitment within $(1-\delta)\cdot v_{i^*}$. So our only task is to argue that $\mathbb{E}[\max_i \{v_i\} \cdot I(\max_i \{v_i\} \in [x_0,x_{z}])] \geq (1-4\delta) \cdot \mathbb{E}[\max_i \{v_i\}]$. 

To do this, we will simply argue that the welfare lost from cases where $v_{i^*} < x_0$ is at most $2\delta \cdot \mathbb{E}[\max_i \{v_i\}]$, and also that the welfare lost from cases where $v_{i^*} > x_{z}$ is at most $2\delta \cdot \mathbb{E}[\max_i \{v_i\}]$. 

\begin{observation}$\mathbb{E}[\max_i \{v_i\}\cdot I(\max_i\{v_i\} < x_0)]\leq 2\delta \cdot \mathbb{E}[\max_i \{v_i\}]$.
\end{observation}
\begin{proof}
Simply observe that $\mathbb{E}[\max_i \{v_i\} \cdot I(\max_i\{v_i\} < x_0)] \leq x_0 = \delta n^{1-\alpha} \leq \frac{e}{e-1} \cdot \delta \cdot \mathbb{E}[\max_i \{v_i\}]$.
\end{proof}

\begin{lemma} $\mathbb{E}[\max_i \{v_i\} \cdot I(\max_i \{v_i\} > x_{z})] \leq 2\delta \cdot \mathbb{E}[\max_i \{v_i\}]$.
\end{lemma}
\begin{proof}
The lemma will follow from a few observations. First, we observe that $\mathbb{E}[\max_i \{v_i\} \cdot I(\max_i \{v_i\} > x_{z})] \leq \mathbb{E}[\sum_i v_i \cdot I(v_i > x_{z})] = n \cdot \mathbb{E}[v_1 \cdot I(v_1 > x_{z})]$. Next, we observe that $n \cdot \mathbb{E}[v_1 \cdot I(v_1 > x_{z})] = n \cdot x_{z} \cdot (1-F^\alpha(x_{z}))/\alpha$. This could be observed by direct (but tedious) calculations. Alternatively, it follows as the expected revenue of setting price $x_{z}$ to bidder one is exactly $x_{z} \cdot (1-F^\alpha(x_{z}))$, but also $\alpha \cdot \mathbb{E}[v_1 \cdot I(v_1 > x_{z})]$ by Myerson's lemma (and that $\varphi(v)= \alpha v$). 

Now, our job is just to upper bound $x_{z} \cdot (1-F^\alpha(x_{z}))$. Recalling the definition of $F^\alpha(\cdot)$, this is exactly $x_{z}^{1-\frac{1}{1-\alpha}} = x_{z}^{-\frac{\alpha}{1-\alpha}}$. But now we get:
\begin{align*}
x_{z}^{-\frac{\alpha}{1-\alpha}} &= ({\delta} \cdot n^{1-\alpha}\cdot (1+\delta)^{z})^{-\frac{\alpha}{1-\alpha}}\\
&= {\delta^{-\alpha/(1-\alpha)}} \cdot n^{-\alpha} \cdot (1+\delta)^{-\frac{\frac{(1-\alpha) \cdot \ln_{1+\delta}(1/(\alpha\delta \cdot {\delta^{\alpha/(1-\alpha)}}))}{\alpha} \cdot \alpha}{1-\alpha}}\\
&= {\delta^{-\alpha/(1-\alpha)}} \cdot n^{-\alpha} \cdot (1+\delta)^{-\ln_{1+\delta}(1/(\alpha \delta \cdot {\delta^{\alpha/(1-\alpha)}}))}\\
&= {\delta^{-\alpha/(1-\alpha)}} \cdot n^{-\alpha}\cdot \alpha \cdot \delta \cdot {\delta^{\alpha/(1-\alpha)}}.
\end{align*}

Finally, we now conclude that:
\begin{align*}
n \cdot x_{z} \cdot (1-F^\alpha(x_{z}))/\alpha &= n \cdot n^{-\alpha} \cdot (\alpha\delta) /\alpha\\
&= \delta n^{1-\alpha}.
\end{align*}

Again by Observation~\ref{obs:quick}, this is now at most $2\delta \mathbb{E}[\max_i \{v_i\}]$. 
\end{proof}

Now, this concludes the proof of Lemma~\ref{lem:deviation}. We have just argued that $\mathbb{E}[\max_i \{v_i\}\cdot I(\max_i\{v_i\} < x_0)]\leq 2\delta \cdot \mathbb{E}[\max_i \{v_i\}]$, and also that $\mathbb{E}[\max_i \{v_i\} \cdot I(\max_i \{v_i\} > x_{z})] \leq 2\delta \cdot \mathbb{E}[\max_i \{v_i\}]$. Therefore, $\mathbb{E}[\max_i \{v_i\} \cdot I(\max_i \{v_i\} \in [x_0,x_{z}])] \geq (1-4\delta) \cdot \mathbb{E}[\max_i \{v_i\}]$, and the lemma follows as $(1-\delta)\cdot (1-4\delta) \geq (1-5\delta)$.
\end{proof}

Lemma~\ref{lem:deviation} argues that the revenue \emph{excluding fines} of the above deviation is large. But we must now argue that the fines paid are small.

\begin{lemma}\label{lem:fines} The total fines paid by the above deviation is at most $z\cdot f(z+2,D_0)$. Importantly, observe that this is \emph{independent of $n$}.
\end{lemma}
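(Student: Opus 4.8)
The plan is to trace exactly where fines arise in $\dra(f)$ and to show that, for the deviation constructed above, they are all routed to a single bidder and incurred for at most $z$ commitments. Recall that in $\dra(f)$ a fine is levied only for a commitment that is \emph{concealed}; that this fine is paid by the (fake) bidder who concealed, to the \emph{winning} bidder; and that a concealed commitment sent to bidder $i$ costs the auctioneer $f(n_i,D_i)$. So the total fine bill equals the number of concealed commitments sent to the winner times $f(n_{\text{winner}},D_0)$, plus any fines routed to bidders other than the winner, and the first step is to rule the latter out.

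First I would observe that under this deviation the winner is always $i^*:=\arg\max_i\{b_i\}$: the auctioneer forwards \emph{all} of its commitments to every bidder $j\ne i^*$ and asks them to reveal, so no such $j$ expects to win, and by Observation~\ref{obs:easy} none does. In particular the auctioneer never conceals a commitment sent to any bidder $j\ne i^*$, so every fine concerns a commitment sent to $i^*$ and has magnitude $f(n_{i^*},D_0)$. Bidder $i^*$ receives $z+2$ commitments in all --- its own commitment, the forged reserves $x_\ell$, and the single forged bid $y^*_{i^*}$ --- so $n_{i^*}=z+2$.

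The remaining step bounds how many of those commitments are ever concealed. The auctioneer cannot conceal $i^*$'s own commitment. For the forged ones, the crucial observation is that the auctioneer conceals something only when $b_{i^*}\ge x_0$ --- if $b_{i^*}<x_0$ it reveals everything --- and whenever $b_{i^*}\ge x_0$ the auctioneer reveals the commitment to $x_0$. Hence the concealed commitments always form a subset of the $z$ commitments sent to $i^*$ other than $i^*$'s own commitment and the reserve $x_0$, so the total fine is at most $z\cdot f(z+2,D_0)$. Since $z$ depends only on $\alpha$ and $\delta$ and not on $n$, this bound is $n$-independent, as claimed.

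I do not expect a genuine obstacle here: the lemma is essentially bookkeeping. The points requiring care are (i) confirming that the ``reveal everything to non-winners'' policy really does keep every $j\ne i^*$ from winning, so that no stray fines leave $i^*$'s tab; and (ii) the small observation that the reserve $x_0$ is always revealed whenever any concealing occurs, which is what pins the count of concealed commitments at $z$ rather than $z+1$. The reason the statement stresses $n$-independence is that it will be combined with Lemma~\ref{lem:deviation}: the deviation nets revenue at least $(1-5\delta)\,\mathbb{E}[\max_i v_i]-z\cdot f(z+2,D_0)$, whereas honest play earns only $\alpha\,\mathbb{E}[\max_i v_i]$, and since $\mathbb{E}[\max_i v_i]=\Theta(n^{1-\alpha})\to\infty$ the fixed fine term becomes negligible as $n$ grows --- which defeats $(1-\alpha-\varepsilon)$-credibility.
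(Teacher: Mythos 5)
Your proof is correct and takes essentially the same route as the paper's: fines can only arise from the fake commitments concealed in the winner's view, there are only constantly many (in $n$) of them per bidder, and at least one (namely $x_0$) is revealed whenever any are concealed, so the fine bill is bounded by a quantity depending only on $z$ and hence only on $\alpha,\delta$. One minor remark: with the deviation as written ($x_0,\dots,x_z$ is $z+1$ reserves plus $y^*_{i}$), the exact counts would be $n_{i^*}=z+3$ and up to $z+1$ concealments rather than $z+2$ and $z$ --- an off-by-one that the paper's own proof shares and which is immaterial, since the only point being used is independence of $n$.
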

\begin{proof}
The total number of commitments sent to each player is $z+1$. So including themselves, each player believes there are only $z+2$ bidders, and fines are computed accordingly. To each player, the auctioneer then clearly pays at most $z$ fines when they conceal commitments (because they send only $z+1$, and reveal at least one when they hide any). 
\end{proof}

We can now wrap up the proofs of Theorems~\ref{thm:malleable} and~\ref{thm:alphabounded}.

\begin{proof}[Proof of Theorem~\ref{thm:malleable}]
To implement the proposed deviation, simply set $y_i^* = g(\vec{b}_{-i})$. By definition of $g$, this guarantees that $y_i^* > b_i$ for all $i \neq i^*$. For a given $\varepsilon$, first set $\delta < \varepsilon/100$ and use the proposed deviation. This guarantees revenue (including fines) of at least $(1-\varepsilon/2)\cdot \mathbb{E}[\max_i \{v_i\}] - z\cdot f(z+2,D_0)$. Because $z\cdot f(z+2,D_0)$ \emph{is independent of $n$}, and $D_0$ is unbounded, there exists a sufficiently large $n$ such that $z\cdot f(z+2,D_0) \leq (\varepsilon/2) \cdot \mathbb{E}[\max_i \{v_i\}]$. This completes the proof.
\end{proof}

\begin{proof}[Proof of Theorem~\ref{thm:alphabounded}]
Consider $D_0^T$ for $T$ to be set later, and define $D^T:=\times_{i=1}^n D_0^T$. To implement the proposed deviation, simply set $y_i^* = T+1$. For a given $\varepsilon$, first set $\delta < \varepsilon/100$ and then use the proposed deviation. This guarantees revenue (including fines) of at least $(1-\varepsilon/2)\cdot \mathbb{E}_{\vec{v} \leftarrow D_0^T}[\max_i \{v_i\}] - z\cdot f(z+2,D^T_0)= (1-\varepsilon/2)\cdot \mathbb{E}_{\vec{v} \leftarrow D_0^T}[\max_i \{v_i\}] - z\cdot f(z+2,D_0)$. Note that the last equality follows as we have assumed that $f(a,D)$ depends only on $a,\varepsilon,\alpha,r(D)$, and $r(D)=r(D_0)=1$. But now because $z\cdot f(z+2,D_0)$ \emph{is independent of $n$}, and $D_0$ is unbounded, there exists a sufficiently large $n$ and $T$ such that $z\cdot f(z+2,D_0) \leq (\varepsilon/2) \cdot \mathbb{E}_{\vec{v} \leftarrow D_0^T}[\max_i \{v_i\}]$. This completes the proof.
\end{proof}


Finally, we show that Theorems~\ref{thm:malleable} and~\ref{thm:alphabounded} are tight. In the proof of Theorem~\ref{thm:malleable2} below, note that Lemma~\ref{lem:onebig} and Corollary~\ref{cor:alphanonebig} only require that $\beta_i$ is a function of $\vec{b}_{-i}$, and not that it takes a particular (non-malleable) form.

\begin{theorem}\label{thm:malleable2}
Let $f(n, D_i) := r(D_i)$. Then when all $D_i$ are $\alpha$-strongly regular (bounded or unbounded), $DRA(f)$ is optimal, strategyproof, computationally $(1-\alpha)$-credible, and strongly computationally $(1-\alpha)$-credible with respect to all functions $g$.
\end{theorem}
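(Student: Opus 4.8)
The plan is to mirror the proof of Theorem~\ref{thm:mhr} essentially line for line, feeding in the $\alpha$-strongly regular tools of Section~\ref{sec:strong-regularity} in place of the MHR tools, and then to account for the single extra factor of $1/\alpha$ that this substitution costs. Optimality and strategyproofness need no work: they hold for \emph{every} fine function by Observation~\ref{obs:sp}. For credibility, I would fix a safe, reasonable deviation with revenue function $R(\cdot)$ and split $\mathbb{E}_{\vec v \leftarrow D}[R(\vec v)]$ according to whether some $v_i > \beta_i$ or all $v_i < \beta_i$. (Recall that $\alpha$-strong regularity makes each $\varphi_i$ strictly increasing, so $\bar\varphi_i = \varphi_i$ and $r(D_i) = \varphi_i^{-1}(0)$.)

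On the event $\exists\, i:\ v_i > \beta_i$, Lemma~\ref{lem:onebig} applies verbatim (it was proved for arbitrary $D$ and $f$) and bounds the contribution by $\mathbb{E}_{\vec v}[\max_j\{\varphi_j(v_j)\}\cdot I(\exists\, i,\ v_i > \beta_i)]$; since $v_i > \beta_i \geq \varphi_i^{-1}(0)$ forces $\varphi_i(v_i) > 0$, this equals $\mathbb{E}_{\vec v}[\max\{0,\max_j\varphi_j(v_j)\}\cdot I(\exists\, i,\ v_i > \beta_i)]$. On the event $\forall\, i:\ v_i < \beta_i$, I would apply Corollary~\ref{cor:alphanonebig}: because $f(n,D_i) = r(D_i)$ we have $k_i = r_i$, the term $r_i - k_i$ vanishes, and the corollary gives $\mathbb{E}_{\vec v}[R(\vec v)\cdot I(\forall\, i,\ v_i < \beta_i)] \leq \sum_i \mathbb{E}_{\vec v}[(\varphi_i(v_i)/\alpha)\,X_i(\vec v)]$. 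Exactly as in Corollary~\ref{cor:nonebig}, at most one $X_i$ is nonzero, $X_i = 1$ forces $v_i > r_i$ hence $\varphi_i(v_i) > 0$, and $X_i = 1$ requires all $v_j < \beta_j$, so this sum is at most $(1/\alpha)\,\mathbb{E}_{\vec v}[\max\{0,\max_j\varphi_j(v_j)\}\cdot I(\forall\, i,\ v_i < \beta_i)]$. Summing the two cases, and using $1/\alpha \geq 1$ together with $\mathbb{E}_{\vec v}[\max\{0,\max_j\varphi_j(v_j)\}] = \rev(D)$ from Myerson's lemma, one gets
\[
\mathbb{E}_{\vec v}[R(\vec v)] \;\leq\; \tfrac{1}{\alpha}\,\mathbb{E}_{\vec v}[\max\{0,\max_j\varphi_j(v_j)\}] \;=\; \tfrac{1}{\alpha}\,\rev(D).
\]
Since executing $\dra(f)$ in earnest is exactly Myerson's auction and collects $\rev(D) = \alpha\cdot(\rev(D)/\alpha)$, earnest execution obtains at least an $\alpha = 1-(1-\alpha)$ fraction of the revenue of any safe, reasonable deviation; that is, $\dra(f)$ is computationally $(1-\alpha)$-credible.

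For strong computational $(1-\alpha)$-credibility with respect to an arbitrary $g$, the plan is to rerun the identical computation, now over all safe, \emph{$g$-almost-reasonable} deviations. Nothing in the argument above used non-malleability directly: the only structural fact about $\dra$ that Observation~\ref{obs:easy}, Lemma~\ref{lem:onebig}, and Corollary~\ref{cor:alphanonebig} invoke is that the effective commitment $\beta_i$ to bidder $i$ is a function of $\vec b_{-i}$ alone --- this is precisely what makes ``award the item to $i$ iff $v_i > \beta_i$ and charge $\beta_i$'' a monotone, truthful allocation/payment rule, and what forces the auctioneer to conceal (and pay a fine) in order to sell to $i$ when $v_i < \beta_i$. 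So the crux --- and the step I expect to require the most care --- is to confirm that this property is preserved by a malleable scheme under a $g$-almost-reasonable deviation: the commitments presented to bidder $i$ are fixed in round one, before the auctioneer learns any revealed bid, and one must check that even malleably-derived forwarded commitments therefore contribute effective bids to $\beta_i$ that are independent of $v_i$. Granting this, the two-case bound is word-for-word the same and again gives $\mathbb{E}_{\vec v}[R(\vec v)] \leq \rev(D)/\alpha$, hence strong $(1-\alpha)$-credibility with respect to $g$. Note finally that, unlike Theorem~\ref{thm:alphamhr}, this argument never routes through Lemma~\ref{lem:all}, so it requires neither unbounded distributions nor any restriction on the number of bidders, which is why $\dra(f)$ works here for bounded $D_i$ as well.
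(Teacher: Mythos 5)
Your proposal follows essentially the same route as the paper's proof: it decomposes revenue over the events $\exists\, i: v_i>\beta_i$ and $\forall\, i: v_i<\beta_i$, applies Lemma~\ref{lem:onebig} and Corollary~\ref{cor:alphanonebig} with $k_i=r(D_i)$, and arrives at the same bound $\rev(D)/\alpha$ (the paper groups the terms as $\rev(D)+\frac{1-\alpha}{\alpha}\rev(D)$ rather than pulling a factor $1/\alpha$ out of both events, but this is cosmetic). The one step you flag as needing care for the strong-credibility claim --- that $\beta_i$ remains a function of $\vec b_{-i}$ alone even under $g$-almost-reasonable deviations with a malleable scheme --- is exactly the property the paper's proof invokes, and the paper asserts it without further argument, so your treatment is at the same level of detail as the published proof.
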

\begin{proof}
Let $k_i = r(D_i)$. Recall $X_i(\vec v)$ is the indicator variable for the event where the item is allocated to buyer $i$ and $k_i \leq v_i < \beta_i$. Also observe that $\bar \phi_i(v_i) = \phi_i(v_i)$ since $D_i$ is regular. If $X_i(\vec v) = 1$, then $v_i \geq k_i \geq r_i$ and by definition $\phi_i(v_i) \cdot X_i(\vec v) \geq 0$. We first combine Lemma~\ref{lem:onebig} and Corollary~\ref{cor:alphanonebig}:
\begin{align*}
    \mathbb{E}_{\vec{v} \leftarrow D}[R(\vec{v})] &=\mathbb{E}_{\vec{v} \leftarrow D}[R(\vec{v}) \cdot I(\exists\ i, v_i > \beta_i)] + \mathbb{E}_{\vec{v} \leftarrow D}[R(\vec{v}) \cdot I(\forall\ i, v_i < \beta_i)]\\
    &\leq\mathbb{E}_{\vec{v} \leftarrow D}[\max_i\{\varphi_i(v_i)\} \cdot I(\exists\ i, v_i > \beta_i)] + \sum_{i=1}^n\mathbb{E}_{\vec{v} \leftarrow D}[(\varphi_i(v_i)/\alpha + r(D_i) - k_i) \cdot X_i(\vec v)]\\
    &\leq \mathbb{E}_{\vec{v} \leftarrow D}[\max_i\{\varphi_i(v_i)\}] + \sum_{i = 1}^n \mathbb{E}_{\vec{v} \leftarrow D}[\varphi_i(v_i)/\alpha\cdot X_i(\vec v)]\\
    &\quad - \mathbb{E}_{\vec{v} \leftarrow D}[\max_i\{\varphi_i(v_i)\} \cdot I(\forall i, v_i < \beta_i)]\\
    &\leq \rev(D) + \sum_{i = 1}^n\mathbb{E}_{\vec{v} \leftarrow D}[\varphi_i(v_i)/\alpha \cdot X_i(\vec v)] - \sum_{i=1}^n \mathbb{E}_{\vec{v} \leftarrow D}[\varphi_i(v_i) \cdot X_i(\vec v) \cdot I(\forall i, v_i < \beta_i)]\\
    &\leq \rev(D) + \sum_{i = 1}^n\mathbb{E}_{\vec{v} \leftarrow D}[(1-\alpha)\varphi_i(v_i)/\alpha \cdot X_i(\vec v)]\\
    &\leq \rev(D) + \frac{1-\alpha}{\alpha} \rev(D)\\
    &= \frac{1}{\alpha} \rev(D)
\end{align*}

The first line is just linearity of expectation. The second line combines Lemma~\ref{lem:onebig} and Corollary~\ref{cor:alphanonebig}. The third line is again by linearity of expectation and the fact $k_i = r(D_i)$. The fourth line uses the fact that $\mathbb{E}_{\vec{v} \leftarrow D}[\max_i\{\varphi_i(v_i)\}]$ is at most the revenue of Myerson's optimal auction and $\max(\varphi_i(v_i)) \geq \sum_{i = 1}^n \varphi_i(v_i)\cdot X_i(\vec v)$ since the indicator random variable $X_i(\vec v) = 1$ for at most one bidder. The fifth line is just the fact $X_i(\vec v) = 1$ implies $\forall i, v_i < \beta_i$ and by linearity of expectation. In the sixth line, we observe that the second term is at most ($\frac{1-\alpha}{\alpha}$ times) the revenue of Myerson's optimal auction.

This concludes that the revenue of any deviation which is safe, and almost reasonable \emph{with respect to any $g(\cdot)$}, is at most $\rev(D)/\alpha$.

\end{proof}

\end{document}